\documentclass[11pt]{article}
\usepackage[a4paper,margin=25mm,headheight=15pt]{geometry}
\usepackage[T1]{fontenc}
\usepackage{lmodern,microtype}
\usepackage{amsmath,amssymb,amsthm,mathtools}
\usepackage{booktabs,array,enumitem,float,caption,tikz,needspace}
\usetikzlibrary{arrows.meta}
\usepackage[dvipsnames]{xcolor}
\usepackage{hyperref,fancyhdr}
\hypersetup{colorlinks=true,linkcolor=MidnightBlue,citecolor=MidnightBlue,urlcolor=MidnightBlue,pdftitle={Euclidean SVP is NP-hard for Cyclic Lattices},pdfauthor={Daqing Wan},pdfsubject={Deterministic exact-SVP hardness for cyclic lattices, with an application to algebraic NTRU-form lattices}}
\setlist{nosep,leftmargin=1.5em}
\makeatletter
\renewcommand*\l@section{\@dottedtocline{1}{0em}{1.8em}}
\makeatother
\allowdisplaybreaks[1]
\newcommand{\stepheading}[1]{\par\medskip\noindent\textbf{#1}\enspace\ignorespaces}
\newtheorem{theorem}{Theorem}[section]
\newtheorem{lemma}[theorem]{Lemma}
\newtheorem{proposition}[theorem]{Proposition}
\theoremstyle{definition}
\newtheorem{definition}[theorem]{Definition}
\newtheorem*{definition*}{Definition}
\theoremstyle{plain}
\newcommand{\Z}{\mathbb Z}\newcommand{\Q}{\mathbb Q}\newcommand{\F}{\mathbb F}\newcommand{\PP}{\mathbb P}\newcommand{\Aff}{\mathbb A}
\newcommand{\supp}{\operatorname{supp}}\newcommand{\OPT}{\operatorname{OPT}}\newcommand{\Sing}{\operatorname{Sing}}
\newcommand{\norm}[1]{\left\lVert#1\right\rVert}\newcommand{\one}{\mathbf 1}

\title{\textbf{Euclidean SVP is NP-hard for Cyclic Lattices}}
\author{Daqing Wan\thanks{This manuscript was developed with ChatGPT assistance. The arithmetic input is due to the author.}}
\date{September 14, 2026}

\begin{document}
\maketitle
\vspace{-6mm}
\begin{abstract}
We prove that exact Euclidean SVP is NP-hard under deterministic polynomial-time many-one reductions for full-rank cyclic integer lattices, equivalently full-rank ideals of $R_N:=\Z[X]/(X^N-1)$ in the coefficient norm. Hardness holds with $N=q-1$ for a varying odd prime $q$. As an application, we prove the same hardness for the algebraic class of NTRU-form lattices
$\{(x,z)\in R_N^2:Hx\equiv z\pmod{QR_N}\}$, where $H,Q$ are unrestricted inputs. The decision problems are NP-complete, and the exact search problems are NP-hard under polynomial-time Turing reductions. No hardness claim is made for cryptographic NTRU parameter subclasses or key-generation distributions.
\end{abstract}
\noindent\textbf{Keywords:} shortest vector problem; cyclic lattices; deterministic NP-hardness; NTRU-form lattices; ideal lattices; finite-field point counts.
\par\medskip
\section{Introduction}
\label{sec:statement}
The shortest vector problem asks for a shortest nonzero vector in a lattice given by a basis. Its complexity depends not only on the norm and the requested approximation factor, but also on the class of lattices supplied as input. An unrestricted hardness reduction may destroy precisely the algebraic structure that makes a family interesting. Thus NP-hardness for general lattices does not, by itself, establish hardness for lattices invariant under a prescribed symmetry.

We study one of the simplest such families: full-rank integer lattices preserved by cyclic rotation of their coordinates. Equivalently, these are full-rank ideals of the quotient ring $\Z[X]/(X^N-1)$, with the Euclidean norm of coefficient vectors. In his work on generalized compact knapsacks, Micciancio explicitly asked whether the shortest vector problem on cyclic lattices is NP-hard~\cite[Section 1]{Cyclic}. We give a deterministic reduction answering the exact decision version of that question for this coefficient-norm setting. The reduction maintains invariance under a \emph{one-coordinate} cyclic shift, rather than only a block shift or a product of independent shifts.

The main obstacle is that an input assignment refers to named variables, while a cyclic lattice has no preferred coordinate origin. We resolve this by constructing a lattice whose shortest vectors carry their own recognizable origin: one occupied position in one parity class, and many occupied positions in the other. The unique position is a relative anchor. A uniform finite-field counting argument realizes every Boolean assignment relative to an anchor without creating unwanted pair correlations. Those correlations express a quadratic set-cover cost, and multiplication by an explicit ring element converts that cost into ordinary Euclidean length.

\subsection{The problem and the main result}
For a positive integer $N$, let
\[
 R_N=\Z[X]/(X^N-1).
\]
Represent an element by its unique polynomial of degree less than $N$. Its \emph{coefficient Euclidean norm} is
\[
 \norm{f}_2^2=\sum_{a=0}^{N-1}f_a^2.
\]
Let $S$ denote the cyclic shift $(Sy)_a=y_{a-1}$, with indices modulo $N$. Multiplication by $X$ corresponds to $S$.

An additive subgroup $I\subseteq R_N$ is an ideal if and only if it is closed under multiplication by $X$: closure under powers of $X$ and integer linear combinations gives closure under every element of $R_N$. In coefficient coordinates this condition is $SI\subseteq I$. It implies $SI=I$, since $S^N$ is the identity. A basis supplied for a cyclic lattice need not itself be a circulant matrix.

\begin{definition}[Exact cyclic-ideal decision SVP]
The input is a positive integer $N$, a nonsingular integer matrix $B\in\Z^{N\times N}$ whose columns generate a cyclic lattice $I=B\Z^N$, and a nonnegative integer squared threshold $b$. Decide whether
\[
 \lambda_1(I)^2:=\min_{0\ne v\in I}\norm{v}_2^2\le b.
\]
Inputs that do not encode such a triple $(N,B,b)$ are rejected; this includes inconsistent matrix dimensions or a negative threshold. Singular matrices and bases whose column lattices are not cyclic are also rejected. Section~\ref{sec:algorithm} gives polynomial-time validity tests.
\end{definition}

\begin{theorem}[Main result]
\label{thm:main}
Exact cyclic-ideal decision SVP is NP-complete under deterministic polynomial-time many-one reductions. Hardness already holds for dimensions $N=q-1$, with $q$ an odd prime. The output ideal is full rank and is supplied by an explicit additive integer basis. Exact search-SVP on the same class is NP-hard under polynomial-time Turing reductions.
\end{theorem}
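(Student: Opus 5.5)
Membership in NP is routine and I would dispose of it first. The certificate is an integer coefficient vector $c\in\Z^N$ with $v=Bc\ne 0$ and $\norm{v}_2^2\le b$. Since $B$ is nonsingular, any $v\in I$ with $\norm{v}_2^2\le b$ satisfies $c=B^{-1}v$. By Cramer's rule its entries have bit length polynomial in the input size and $\log b$, so the certificate is short. Input validity can be checked in polynomial time by the tests promised in Section~\ref{sec:algorithm}. Nonsingularity is tested by computing $\det B$. Cyclicity is tested by checking that each column of $SB$ lies in $B\Z^N$, that is, that $B^{-1}SB$ is integral, using exact rational arithmetic. The Turing statement for search follows once decision hardness is known: a search oracle returns a vector whose squared norm we compare with $b$.

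For hardness I would reduce from a set-cover-type NP-complete problem whose cost is quadratic in an assignment. Concretely, given a Boolean assignment $x\in\{0,1\}^n$, the cost should be a quadratic form $\sum_{i,j}w_{ij}x_ix_j$ plus linear terms, and the question is whether it falls below a threshold. Max-Cut or exact cover written as minimizing such a form would do. I would then carry out the following steps, with $N=q-1$ and the prime $q$ chosen polynomially large by deterministic search (for instance, trial division over an interval given by Bertrand's postulate).

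\textbf{Step 1.} Identify the coordinate indices $\Z/N$ with $\F_q^\times$ via a generator, which is found deterministically since $q$ is polynomial. The two parity classes then correspond to squares and nonsquares.

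\textbf{Step 2.} Build a base ideal $I_0$ whose shortest vectors are exactly the anchored patterns. Each such vector has a single occupied position in one parity class, the anchor $a$, and a set $T$ of occupied positions in the other class. The set $T$ is defined by character conditions of the form $\chi(a\,t-\alpha_i)=\epsilon_i$, with one condition per variable $i$. The idea is that every assignment $x$ is realized by some $T$ relative to some anchor.

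\textbf{Step 3.} Use the Weil bound for multiplicative character sums to show two things. First, every sign pattern occurs. Second, the pairwise overlaps $|T\cap(T+s)|$ equal a main term plus an error of size $O(n\sqrt q)$, uniformly in the shift $s$. Hence there are no unwanted pair correlations, and the quadratic cost is encoded exactly up to an error dominated by scaling.

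\textbf{Step 4.} Multiply by an explicit ring element $g\in R_N$. The identity
\[
\norm{gv}_2^2=\sum_s \langle g,S^sg\rangle\langle v,S^sv\rangle
\]
turns the autocorrelations of $v$ into the weighted cost. Choosing $g$ with prescribed autocorrelations $\langle g,S^s g\rangle$ matching the weights $w_{ij}$ makes $\norm{gv}_2^2$ an affine function of the quadratic cost. Padding with a large multiple of a penalty term excludes non-anchored vectors.

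\textbf{Step 5.} Output $I=gI_0$, which is full rank whenever $g$ is a unit in $R_N\otimes\Q$, and compute a basis by Hermite normal form. The threshold $b$ is set from the affine map.

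The main obstacle is the soundness direction. One must show that \emph{every} nonzero vector of $I$ has norm at least the claimed minimum unless it comes from an anchored pattern with a good assignment. The danger is vectors that are integer combinations of several patterns, or that carry non-$\{0,1\}$ coefficients, and that after multiplication by $g$ become short through cancellation. I would handle this by keeping $g$ close to a large scalar multiple of $1$ plus small perturbations. This makes $\norm{gv}_2^2\approx M^2\norm{v}_2^2$, so that the lexicographic minimization first minimizes $\norm{v}$ within $I_0$, which forces anchored patterns, and only then minimizes the cost. Proving the structure theorem for the shortest vectors of $I_0$ itself is the step I expect to need the most delicate argument, likely a Fourier analysis over the characters of $\Z/N$ showing that $I_0$ has no shorter vectors.
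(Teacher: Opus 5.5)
Your NP-membership argument, the search consequence, and your closing idea are sound and agree with the paper. That idea is to take $g$ to be a large scalar plus a small perturbation, so that minimization becomes lexicographic. The paper does exactly this with $g=M+k(X)$: it reads the checker from the cross term $2My^{\mathsf T}Ky$ and absorbs $\norm{Ky}_2^2\le L^2(h+1)<2M(h+1)$ as an error.

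The hardness argument, however, has a genuine gap at its core: the base ideal $I_0$ is never defined. A condition such as $\chi(at-\alpha_i)=\epsilon_i$ describes a support set, not a lattice. You need linear, rotation-invariant congruences, and then a structure theorem for the minimum over \emph{all} integer vectors of the lattice, including sums and differences of several patterns. That is exactly the step you defer to an unspecified Fourier argument, and it is the heart of the proof. The paper cuts out $\mathcal I$ by the parity congruence $hu(y)-v(y)\equiv0\pmod{h^2-1}$ together with $\sum_a y_a\alpha^{aj}=0$ in $\F_q$ for $1\le j<k$. If $(u,v)\ne(0,0)$, writing $(u,v)=(ha+b,a+hb)$ gives $\norm{y}_2^2\ge\norm{y}_1\ge|u|+|v|\ge h+1$. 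Equality holds only for signed binary vectors with vanishing moments and parity weights $(1,h)$ or $(h,1)$. If $u=v=0$, the Reed--Solomon lemma (Newton's identities over $\F_q$) gives $\norm{y}_2^2\ge 2k>h+1$. These are integrality arguments that a complex Fourier analysis on $\Z/N$ does not see. They also force the minimum vectors to be sparse, of weight $h+1=O(r)$.

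Your Step 3 would also fail for an \emph{exact} reduction. Correlations known only up to $O(n\sqrt q)$ cannot certify a gap of one unit in the checker. Scaling by $g$ multiplies error and signal by the same weights, so it cannot help. And a main-term-plus-error estimate never yields the exact vanishing of unwanted pairs that you then assert. What is needed, and what the paper supplies, is an asymmetric scheme:
\begin{itemize}
\item[(i)] Every linear variable is read \emph{exactly} on every shortest vector: $C_y(d_i)=y_{d_i}+y_{-d_i}$ for odd offsets $d_i$, because the anchor is the only occupied even position.
\item[(ii)] Quadratic monomials are only bounded \emph{above* } by correlations at the separated even displacements $2d_i$ and $d_j\pm d_i$, which carry nonnegative checker coefficients. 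Extra pairs can therefore only raise the cost. Soundness then holds for every shortest vector, including decodings with $z_i=2$, because $\norm{z}_1+\norm{Az-\one_m}_2^2\ge\OPT(A)$ for all $z\in\Z^r$.
\item[(iii)] Completeness needs only \emph{one} clean completion with no unwanted pair at any checked displacement.
\end{itemize}
Proving (iii) means counting $h$-subsets $T$ of the nonsquares with prescribed included and excluded labels and power sums $\sum_{b\in T}b^j=-1$, uniformly in the prescriptions. The paper uses the Deligne/Hooley--Katz error exponent and the Wan--Zhang Betti bound to get inclusion probabilities at most $8h/q$ and $16h^2/q^2$, and then applies a union bound. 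Weil bounds for multiplicative character sums do not address these moment constraints.

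Point (ii) also shows that the source problem is not interchangeable. A Max-Cut objective written for minimization can drop below its Boolean optimum when some $z_i=2$. Finally, an integer $g$ with exactly prescribed autocorrelations $\langle g,S^sg\rangle$ need not exist. Only the perturbative version you mention at the end works.
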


The ring, norm, and representation in this statement are important. The quotient $R_N$ is not the ring of integers of a single number field. Over $\Q$ it decomposes as
\[
 R_N\otimes_{\Z}\Q\simeq\prod_{d\mid N}\Q(\zeta_d),
\]
where $\zeta_d$ is a primitive $d$th root of unity. This rational decomposition is not an assertion that $R_N$ is the product of the corresponding integer rings. Our proof works in $R_N$ itself and does not project to one factor. The constructed ideal is not asserted to be principal. Nor is the dimension fixed: the prime $q$, and hence $N$, varies with the input.

The theorem concerns worst-case exact complexity. It gives no average-case hardness theorem, no fixed approximation factor greater than one, and no security proof for a cryptographic scheme. These distinctions separate the present result from several strands of earlier work discussed below.

\paragraph{Application to NTRU form.}
A second result concerns the explicit class
\[
 \Lambda_{H,Q}=\{(x,z)\in R_N^2:Hx\equiv z\pmod{QR_N}\},
\]
with the coefficient Euclidean norm on both blocks, arbitrary $Q\ge2$, and arbitrary $H\in R_N/QR_N$. Theorem~\ref{thm:ntru} proves exact decision NP-completeness for this algebraic class, already with $N=q-1$. The proof adapts our particular hard ideals so that each is the kernel of one modular convolution; a threshold-preserving NTRU embedding then preserves exactly all vectors below the threshold. These are rank-two convolution modules, not asserted to be single-cycle lattices in dimension $2N$. No promise of a cryptographically generated public key is included.

\subsection{Why cyclic lattices are an interesting complexity class}
Cyclic symmetry is algebraically strong but inexpensive to state. A vector brings with it all its cyclic shifts, and multiplication in $R_N$ is circular convolution of coefficient vectors. This makes the class a natural test of whether a large, explicit symmetry can simplify an integer optimization problem. The answer cannot be read off from complex linear algebra: although circulant operators can be diagonalized by a Fourier transform, that transform does not turn the integral feasible set into independent coordinate choices. The divisibility and compatibility conditions defining the lattice remain essential.

The symmetry also relates cyclic lattices to cyclic codes. If a linear code over a prime field is invariant under rotation, its inverse image under reduction modulo that prime is a cyclic integer lattice. Reed--Solomon constructions provide particularly useful moment constraints; Bennett and Peikert showed how their integer lifts yield locally dense lattices for SVP reductions~\cite{BP}. We use a related lift, but add parity constraints so that the shortest vectors themselves, rather than only vectors near a target, have a prescribed encoding shape. Geometric work on cyclic lattices, including their well-rounded subfamilies, gives another perspective on the consequences of the symmetry~\cite{FS}.

Cyclic and ideal lattices are also central to the history of efficient lattice-based cryptography. Micciancio's generalized compact knapsacks connect average-case inversion to worst-case approximation problems on cyclic lattices and reduce the description and evaluation costs of the resulting one-way functions~\cite{Cyclic}. Peikert--Rosen and Lyubashevsky--Micciancio developed collision-resistant constructions based on related cyclic or ideal-lattice assumptions~\cite{PR,LM}. These are worst-case-to-average-case reductions: they transfer an assumed worst-case hardness to cryptographic problems. They do not establish that the underlying structured lattice problem is NP-hard.

The original NTRU construction provides another motivation for studying convolution structure. Its public-lattice representation has two polynomial components, linked by one congruence, and a block-circulant basis~\cite[Sections 1.1--1.2 and 3.4.1]{NTRU}. Our application uses the same NTRU lattice form, but allows arbitrary public polynomials and moduli. It concerns worst-case exact SVP, not recovering a prescribed small secret or distinguishing a key distribution. Pellet-Mary--Stehl\'e's reductions for specified search-NTRU variants~\cite{PMS} address different promises, including an average-case target; those results and the application here do not imply one another merely from the shared lattice shape.

That distinction is particularly relevant here. The assumptions in these constructions involve approximation factors and sometimes restricted dimension families; for example, the Peikert--Rosen cyclic-SVP assumption uses sufficiently large prime dimensions~\cite{PR}. Our hard dimensions are $q-1$, which are even, and our theorem is exact. Neither the dimension restriction nor the approximation requirement of that cryptographic assumption follows from Theorem~\ref{thm:main}. The motivation is to understand the worst-case complexity of an important structured class, not to replace the security analyses of those schemes.

\subsection{General lattices: hardness, approximation, and derandomization}
The early complexity questions for short lattice vectors go back to van Emde Boas~\cite{vEB}. At the algorithmic level, the Lenstra--Lenstra--Lov\'asz algorithm gives a polynomial-time lattice reduction procedure and an exponential-factor approximation to the shortest vector~\cite{LLL}. Such an approximation guarantee is consistent with hardness of finding an exact shortest vector.

Ajtai proved Euclidean SVP NP-hard under randomized reductions in 1998~\cite{Ajtai}. Micciancio subsequently obtained randomized hardness for every fixed approximation factor below $\sqrt2$~\cite{Mic01}. Khot extended randomized approximation hardness to arbitrary fixed constants~\cite{Khot}. Haviv and Regev developed a tensor-based amplification analysis giving stronger dimension-dependent factors under correspondingly stronger complexity assumptions, including a randomized quasipolynomial-time reduction~\cite{HR}. The type and running time of the reduction are part of these statements; the stronger dimension-dependent factors are not being cited as deterministic polynomial-time NP-hardness results.

A major line of work sought to remove randomness from the locally dense lattice method. A locally dense lattice has relatively large nonzero minimum, while a specified small ball centered away from the lattice origin contains many lattice points. To encode assignments, one also needs a projection or prescription guarantee, not just a large cardinality. Micciancio developed this framework and its derandomization goals~\cite{Mic01,Micciancio}. Bennett and Peikert supplied a particularly simple Reed--Solomon construction and identified an explicit route toward deterministic reductions~\cite{BP}.

Independently, Hair and Sahai developed a different deterministic approach based on PCP techniques and a fortified tensor construction~\cite{HairSahai}. For every finite $p$, including the Euclidean case $p=2$, they obtain hardness within $2^{(\log n)^{1-o(1)}}$ under the subexponential-time hypothesis $\mathrm{NP}\not\subseteq\bigcap_{\delta>0}\mathrm{DTIME}(\exp(n^\delta))$. This is an important deterministic Euclidean hardness result, but it is not a polynomial-time NP-hardness theorem.

In earlier work~\cite{Wan}, we proved that exact Euclidean SVP is NP-hard under deterministic polynomial-time reductions, settling the 1981 conjecture of van Emde Boas; the same work gives deterministic NP-hardness for fixed approximation factors below $\sqrt2$. In subsequent work~\cite{WanApprox}, we established that, for every fixed constant $\rho>1$, Euclidean SVP is NP-hard to approximate within factor $\rho$ under deterministic polynomial-time many-one reductions. We cite the September 8 version of~\cite{WanApprox}, whose amplification uses Euclidean length, modular support, and matrix rank. The present proof uses the finite-field counting ideas of~\cite{Wan}, but neither general-lattice NP-hardness theorem nor the tensor amplification of~\cite{WanApprox} is used as an input.

For the present purpose, the missing ingredient in a general-lattice reduction is not hardness of the source problem. It is preservation of cyclic structure. Appending an arbitrary target coordinate, an arbitrary constraint matrix, or independent tensor coordinates generally breaks the required one-step rotation invariance. The construction below implements the constraints using scalar cyclic correlations instead.

\subsection{Structured lattices and the place of this result}
There are several different meanings of algebraically structured lattice. Cyclic coefficient lattices correspond to ideals of $\Z[X]/(X^N-1)$. Ideals in a number field are usually studied through the canonical embedding, which uses all archimedean embeddings of that field. Module lattices allow several algebraic coordinates over a base ring. A bound or reduction for one of these classes does not automatically apply to another.

Ring-LWE connects efficient ring-based cryptography with worst-case problems on number-field ideal lattices~\cite{LPR}. Module-SIS and Module-LWE extend this perspective to module lattices, interpolating between ideal structure and less constrained lattice problems~\cite{LS}. Again, these security reductions must not be identified with an NP-hardness classification of exact SVP. Likewise, two module generators over a degree-$d$ number field describe an ordinary lattice of rank $2d$, not a two-dimensional lattice.

Liu--Feng--Pan prove exact decision-SVP NP-complete under deterministic reductions for full-rank free rank-two modules over $\Z[\zeta_q]$, with $q\equiv3\pmod4$ and the canonical Euclidean norm~\cite{LFP}. Their proof controls the arbitrary ring multiplier introduced by the second module coordinate.

Martin~\cite[Theorem 1.1]{Martin} gives a dimension-preserving deterministic polynomial-time reduction from general Euclidean SVP to ideal-SVP in the canonical embedding of ideals coprime to the conductor, hence invertible, in monogenic orders of totally real number fields. The field and order depend on the input, and the order need not be the full ring of integers. This differs from the prescribed cyclic quotient-ring family and coefficient norm studied here.

The table distinguishes the lattice classes, Euclidean norms, and reduction guarantees.
\begin{center}
\small
\renewcommand{\arraystretch}{1.22}
\begin{tabular}{@{}>{\raggedright\arraybackslash}p{0.30\linewidth}>{\raggedright\arraybackslash}p{0.24\linewidth}>{\raggedright\arraybackslash}p{0.40\linewidth}@{}}
\toprule
Lattice class & Problem and norm & Hardness statement\\
\midrule
Free rank-two prime-cyclotomic modules & Exact SVP; canonical Euclidean & Deterministic decision NP-completeness~\cite{LFP}.\\
Invertible ideals in varying totally real monogenic orders & Exact SVP; canonical Euclidean & Deterministic dimension-preserving reduction from general SVP~\cite{Martin}.\\
Full-rank ideals of $\Z[X]/(X^N-1)$, $N=q-1$ & Exact SVP; coefficient Euclidean & Deterministic decision NP-completeness, Theorem~\ref{thm:main}.\\
Algebraic NTRU-form lattices in $R_N^2$, arbitrary $H,Q$ & Exact SVP; product coefficient Euclidean & Deterministic decision NP-completeness, Theorem~\ref{thm:ntru}; no key-generation promise.\\
\bottomrule
\end{tabular}
\end{center}

Our result addresses the prescribed cyclic coefficient class directly, rather than specializing either of these reductions. The parity conditions exploit the two evaluations $y(1)$ and $y(-1)$ available in the reducible ring. Projecting to a single cyclotomic field can erase those conditions. Hardness over full rings of integers in prescribed field families, and principality of the cyclic output, are not established here.

\subsection{Technical contribution and proof organization}
The reduction starts from Gap Exact Set Cover. We associate with an integer assignment a repair cost that charges both for the selected sets and for errors in covering the universe. A simple repair argument shows that this cost is at least the size of a minimum ordinary set cover. Thus a promised YES instance has a Boolean exact-cover witness of small repair cost, whereas on a promised NO instance every integer assignment has large repair cost. Section~\ref{sec:source} gives the precise definition and its quadratic expansion.

The first structural ingredient is a homogeneous cyclic ideal whose shortest vectors have a recognizable form. Two parity sums and a collection of finite-field moment congruences force every shortest vector, up to sign and cyclic rotation, to be binary with a unique occupied coordinate in one parity class and a prescribed number of occupied coordinates in the other. The unique coordinate serves as a relative anchor. Because the anchor belongs to the vector rather than to an absolute coordinate position, this encoding is compatible with cyclic symmetry.

The second ingredient is a uniform completion theorem. After prescribing the coordinates that encode a Boolean assignment relative to the anchor, one must complete the remaining support while satisfying all moment equations and without introducing unwanted pairs at the displacements later inspected by the checker. The required uniformity comes from a finite-field point-counting argument: the Deligne--Hooley--Katz estimate controls the error exponent~\cite{Deligne,Hooley}, while the Wan--Zhang Betti-number bounds for affine complete intersections give a uniform bound on the total cohomological contribution~\cite{WZ}. These estimates yield one- and two-label inclusion bounds strong enough for a union bound excluding all unwanted checked pairs. The resulting clean completion exists for every Boolean assignment, although the reduction never needs to compute it.

The third ingredient is a shift-invariant quadratic checker. Cyclic correlations at carefully chosen displacements recover the encoded linear variables and dominate the quadratic monomials occurring in the repair cost. This gives a checker that is sound for every shortest vector and agrees exactly with the repair cost on a clean YES completion. The checker is represented by a symmetric circulant quadratic form. We then use this checker as a perturbation of a large scalar map: on the minimum shell the scalar part contributes the same baseline to every vector, while the cross term records the checker value; choosing the scalar sufficiently large controls the remaining quadratic error and keeps longer vectors away from the threshold.

The base ideal also admits a useful algebraic description as the kernel of one cyclic convolution modulo one integer. This description gives a polynomial-time basis construction by standard integer normal-form algorithms~\cite{KB} and is reused in the NTRU application. There, the final length conversion is chosen to preserve a similar modular-kernel description, which can then be embedded into an algebraic NTRU-form lattice without changing the set of vectors below the relevant SVP threshold.

Section~\ref{sec:source} gives a formula-level outline of the reduction. Sections~\ref{sec:parameters} and~\ref{sec:anchor} construct the separated displacements and the parity-anchor ideal. Section~\ref{sec:count} proves the uniform finite-field counting theorem, and Section~\ref{sec:clean} establishes clean completions. Section~\ref{sec:checker} derives the cyclic checker, Section~\ref{sec:conversion} converts the checker gap into Euclidean length, and Section~\ref{sec:algorithm} constructs the output basis and proves polynomial-time complexity. Section~\ref{sec:ntru} applies the construction to the unrestricted algebraic class of NTRU-form lattices.

\section{Outline of the main proof}
\label{sec:source}
This section presents the reduction from its source promise to the final integer basis. The constructions are explicit; the proofs of the geometric and counting claims follow in Sections~\ref{sec:parameters}--\ref{sec:algorithm}.

\subsection{Gap Exact Set Cover and its repair cost}
Let $U=\{e_1,\ldots,e_m\}$ be a universe and $S_1,\ldots,S_r\subseteq U$ the available subsets, with
\[
 \bigcup_{i=1}^r S_i=U.
\]
We may assume this union condition: otherwise replace the input by the fixed coverable NO instance consisting of a two-element universe, its two singleton subsets, and threshold $1$. Its ordinary-cover optimum is $2>3/2$.
Let $A\in\{0,1\}^{m\times r}$ be the incidence matrix, so $A_{ei}=1$ exactly when $e\in S_i$. Assume $m,r\ge1$ and let $1\le\tau\le r$ be an integer.

An ordinary cover has a binary indicator $x$ satisfying $Ax\ge\one_m$ coordinatewise. An exact cover satisfies $Ax=\one_m$: every universe element is covered exactly once. Write
\[
 \OPT(A)=\min\{\norm{x}_1:x\in\{0,1\}^r,\ Ax\ge\one_m\}
\]
for the \emph{ordinary set-cover optimum}: the minimum number of available subsets whose union is $U$. The abbreviation $\OPT$ stands for \emph{optimum}.

We use Gap Exact Set Cover with the fixed factor $3/2$:
\[
\begin{array}{ll}
\text{YES:}&\exists\xi\in\{0,1\}^r:\ A\xi=\one_m,\quad\norm{\xi}_1\le\tau;\\[1mm]
\text{NO:}&\OPT(A)>\tfrac32\tau.
\end{array}
\]
This promise problem is NP-hard by Micciancio~\cite[Definition 2.5 and Theorem 2.6]{Micciancio}. The reduction below uses only this source hardness theorem. A NO instance may have an exact cover of large size, so the checker must charge for both the chosen subsets and the coverage errors.

\begin{lemma}[Unit-penalty repair cost]
\label{lem:repair}
For every $z\in\Z^r$,
\[
 \Psi_A(z):=\norm{z}_1+\norm{Az-\one_m}_2^2\ \ge\ \OPT(A).
\]
\end{lemma}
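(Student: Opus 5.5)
The plan is a repair argument: from an arbitrary integer vector $z$, build a binary cover $x$ with $\norm{x}_1\le\Psi_A(z)$. The natural candidate is
\[
 x_i=\begin{cases}1,&z_i\ne0,\\0,&z_i=0,\end{cases}
\]
that is, the support of $z$, augmented by one extra set for each universe element that $\supp(z)$ fails to cover. The union condition $\bigcup_i S_i=U$ guarantees that every element $e$ lies in some set $S_{i(e)}$, so these repair sets exist.

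\textbf{Step 1 (support cost).} Since $z$ is integral, $|z_i|\ge1$ whenever $z_i\ne0$. Hence
\[
 |\supp(z)|\le\norm{z}_1 .
\]

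\textbf{Step 2 (uncovered elements).} Call $e$ uncovered if no $i\in\supp(z)$ has $e\in S_i$. For such $e$, every term $A_{ei}z_i$ vanishes, so $(Az)_e=0$ and $(Az-\one_m)_e=-1$. Each uncovered element therefore contributes exactly $1$ to $\norm{Az-\one_m}_2^2$. Because all entries of $Az-\one_m$ are integers, every other coordinate contributes a nonnegative amount. So the number $u$ of uncovered elements satisfies
\[
 u\le\norm{Az-\one_m}_2^2 .
\]

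\textbf{Step 3 (assemble the cover).} Add the set $S_{i(e)}$ for each uncovered $e$. The resulting binary $x$ satisfies $Ax\ge\one_m$, and
\[
 \norm{x}_1\le|\supp(z)|+u\le\Psi_A(z).
\]
By the definition of $\OPT(A)$, this gives $\OPT(A)\le\Psi_A(z)$.

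There is no serious obstacle. The only points needing care are to use integrality in both steps, so that $|z_i|\ge1$ on the support and each uncovered coordinate of $Az-\one_m$ equals exactly $-1$, and to invoke the standing union assumption so that the repair sets exist. Negative entries of $z$ are harmless: they still count toward $\norm{z}_1$, and the argument never needs $Az\ge\one_m$ on covered elements.
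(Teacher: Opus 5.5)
Your proposal is correct and follows the paper's proof essentially step for step. You take the support of $z$, add one available subset per uncovered element, and bound the two counts by $\norm{z}_1$ and $\norm{Az-\one_m}_2^2$ via integrality. You also make explicit the standing union assumption that the paper's proof uses only implicitly.
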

\begin{proof}
Choose the subsets indexed by $\supp(z)$, and let
\[
 U_0=U\setminus\bigcup_{i\in\supp(z)}S_i
\]
be the elements they leave uncovered. Adding one available subset for each element of $U_0$ produces a cover of size at most $|\supp(z)|+|U_0|$.

Every nonzero integer coefficient has absolute value at least one, so $|\supp(z)|\le\norm{z}_1$. For $e\in U_0$, all columns in the support have zero in row $e$, while the other coefficients of $z$ vanish. Thus $(Az)_e=0$, and that row contributes one to $\norm{Az-\one_m}_2^2$. Therefore
\[
 \OPT(A)\le|\supp(z)|+|U_0|
 \le\norm{z}_1+\norm{Az-\one_m}_2^2.
\]
The reasoning uses the support of $z$, not the signs of its entries.
\end{proof}

The source promise therefore gives the following one-way implications:
\[
\begin{array}{ll}
\text{source YES}&\Longrightarrow
\text{some binary exact-cover witness }\xi\text{ has }\Psi_A(\xi)\le\tau,\\[1mm]
\text{source NO}&\Longrightarrow
\Psi_A(z)>\tfrac32\tau\quad\text{for every }z\in\Z^r.
\end{array}
\]
For the first implication the residual is zero; the second uses Lemma~\ref{lem:repair}.
Thus, if
\[
 \mu_\Psi(A):=\min_{z\in\Z^r}\Psi_A(z),
\]
then the identity map from the source instance gives the NP-hard promise problem
\[
 \boxed{\mu_\Psi(A)\le\tau\quad\text{versus}\quad
        \mu_\Psi(A)>\tfrac32\tau.}
\]
We do not claim the converse implication from a small repair cost to a small exact cover on arbitrary inputs outside the source promise. Only the displayed YES/NO implications are used by the reduction. In the lattice construction, completeness still uses the Boolean exact-cover witness of the original source instance. We do not claim a reduction from every YES instance of the unrestricted repair-cost promise to the checker promise.

For later use, expand the repair cost in coordinates. Since $A$ is the incidence matrix,
\[
 (A^{\mathsf T}A)_{ii}=|S_i|,\qquad
 (A^{\mathsf T}A)_{ij}=|S_i\cap S_j|\ (i\ne j),\qquad
 (\one_m^{\mathsf T}A)_i=|S_i|.
\]
Therefore, for every $z\in\Z^r$,
\[
\begin{aligned}
 \Psi_A(z)
 ={}&m+\sum_{i=1}^r |z_i|-2\sum_{i=1}^r|S_i|z_i
      +\sum_{i=1}^r|S_i|z_i^2\\
 &\quad+2\sum_{1\le i<j\le r}|S_i\cap S_j|z_i z_j.
\end{aligned}
\]
The vectors decoded from shortest lattice vectors below will be nonnegative. On $z\in\Z_{\ge0}^r$, the absolute values disappear, and the repair cost is the explicit quadratic polynomial
\[
\boxed{\begin{aligned}
 \Psi_A(z)
 ={}&m+\sum_i(1-2|S_i|)z_i+\sum_i|S_i|z_i^2\\
 &\quad+2\sum_{i<j}|S_i\cap S_j|z_i z_j.
\end{aligned}}
\]
This formula tells us exactly what the lattice construction must reproduce: the linear quantities $z_i$, the squares $z_i^2$, and the cross-products $z_i z_j$.

\paragraph{Remark (an alternative source for exact hardness).}
Exact Cover by 3-Sets also suffices for the exact theorem; its NP-hardness follows, for example, from three-dimensional matching~\cite{Karp}. On a coverable instance with $|U|=3\tau$ and all $|S_i|=3$, any ordinary cover using at most $\tau$ subsets must use exactly $\tau$ pairwise disjoint triples, and hence is exact. Thus a NO instance has $\OPT(A)\ge\tau+1$, while a YES exact cover has repair cost $\tau$. Lemma~\ref{lem:repair} then supplies the unit additive gap used in Sections~\ref{sec:conversion} and~\ref{sec:ntru}. We retain Gap Exact Set Cover, as used in~\cite{WanApprox}, throughout the proof: its stronger multiplicative source gap may be useful for future approximation results. No constant-factor SVP hardness is inferred here from that source gap.

\subsection{The construction in explicit form}
\label{sec:roadmap}
We now describe the reduction in the order in which each new object is forced by the preceding one. The purpose of the later sections is to prove the claims previewed here. Our final output will be a basis $\mathbf B_{\rm out}$ of a cyclic ideal and an integer squared threshold $B_*$ such that, on promised source inputs,
\[
 \boxed{\text{source YES}\iff
 \exists\,0\ne v\in\mathbf B_{\rm out}\Z^{q-1}:\ \norm{v}_2^2\le B_*.}
\]

\Needspace{12\baselineskip}
\stepheading{1. Build a cyclic ideal with a rigid shortest-vector shape.}
Choose the parameters of Section~\ref{sec:parameters},
\[
 k=6(r+3),\qquad h=\frac{3k}{2},\qquad
 (100k)^{12}<q<2(100k)^{12},
\]
with $q$ prime and $\alpha\in\F_q^\times$ primitive. All positions are residues modulo $q-1$, and the ring is
\[
 R_{q-1}=\Z[X]/(X^{q-1}-1).
\]
For $y\in\Z^{q-1}$ define the parity sums
\[
 u(y)=\sum_{a\text{ even}}y_a,\qquad
 v(y)=\sum_{a\text{ odd}}y_a,
\]
and define
\[
\boxed{
 \mathcal I=\left\{y\in\Z^{q-1}:
 \begin{array}{l}
 hu(y)-v(y)\equiv0\pmod{h^2-1},\\[1mm]
 \displaystyle\sum_{a=0}^{q-2}y_a\alpha^{aj}=0\text{ in }\F_q
 \quad(1\le j<k)
 \end{array}\right\}.}
\]
A cyclic rotation preserves these conditions, so $\mathcal I$ is a cyclic lattice; Section~\ref{sec:anchor} proves that it is a full-rank ideal. It also proves
\[
 \norm{y}_2^2\ge h+1\qquad(0\ne y\in\mathcal I),
\]
and that equality forces, after an overall sign change and a rotation, a binary vector of the schematic form
\begin{center}
\renewcommand{\arraystretch}{1.2}
\begin{tabular}{@{}lcccccccc@{}}
\toprule
Position & $0$ & $1$ & $2$ & $3$ & $4$ & $5$ & $\cdots$ & $q-2$\\
Entry & $\boxed{1}$ & $*$ & $0$ & $*$ & $0$ & $*$ & $\cdots$ & $*$\\
\bottomrule
\end{tabular}
\end{center}
Exactly $h$ odd positions are occupied, and the boxed entry is the unique occupied even position. We call it the \emph{anchor}. The lattice itself has no distinguished origin; we rotate a shortest vector so that its anchor is at $0$ only for analysis. The field equations still constrain the $h$ occupied odd positions.

\Needspace{13\baselineskip}
\stepheading{2. Let the shortest vector decode $z$, and let $\Psi_A$ dictate the required displacements.}
Suppose $y$ is a shortest vector normalized as above. Choose odd residues $d_1,\ldots,d_r$ such that the positions $\pm d_i$ are distinct. For the moment, no further property of the $d_i$ is needed.

Define the cyclic correlation
\[
 C_y(d)=\sum_{a\bmod(q-1)}y_a y_{a+d}.
\]
Because $d_i$ is odd, every pair counted by $C_y(d_i)$ has one even endpoint. The anchor at $0$ is the only occupied even position, so the only possible nonzero terms are the pairs $(0,d_i)$ and $(-d_i,0)$. Hence
\[
 C_y(d_i)=y_{d_i}+y_{-d_i}.
\]
This identity tells us how a shortest lattice vector should encode the $i$-th source coordinate: define
\[
 \boxed{z_i:=y_{d_i}+y_{-d_i}=C_y(d_i)\in\{0,1,2\}.}
\]
Thus $z$ is not itself a vector of the cyclic ideal; it is the $r$-dimensional nonnegative integer vector decoded from a shortest vector $y\in\mathcal I$. The decoded vector need not be Boolean. Later, Proposition~\ref{prop:clean} proves that every Boolean vector $\xi\in\{0,1\}^r$ has a clean minimum realization with $y_{d_i}=\xi_i$ and $y_{-d_i}=0$, and hence decodes exactly to $z=\xi$.

Now ask what is required to evaluate the quadratic polynomial $\Psi_A(z)$. Since the two entries $y_{d_i},y_{-d_i}$ are binary,
\[
 z_i^2=z_i+2y_{d_i}y_{-d_i}.
\]
The extra product uses the pair $(-d_i,d_i)$, whose displacement is $2d_i$, so
\[
 y_{d_i}y_{-d_i}\le C_y(2d_i).
\]
For $i<j$,
\[
\begin{aligned}
 z_i z_j
 ={}&y_{d_i}y_{d_j}+y_{-d_i}y_{-d_j}
     +y_{-d_i}y_{d_j}+y_{d_i}y_{-d_j}.
\end{aligned}
\]
The first two products occur among the terms of $C_y(d_j-d_i)$, while the last two occur among the terms of $C_y(d_i+d_j)$. Therefore
\[
 z_i z_j\le C_y(d_j-d_i)+C_y(d_i+d_j).
\]
Consequently the quadratic polynomial itself tells us exactly which even pair displacements we must control:
\[
 \boxed{\mathcal D
 =\{2d_i\}_i\cup\{d_j-d_i:i<j\}\cup\{d_i+d_j:i<j\}.}
\]
This is the reason for introducing $\mathcal D$. The displacements $2d_i$ represent the correction in $z_i^2$, while $d_j-d_i$ and $d_i+d_j$ represent the four products in $z_i z_j$.

We choose the $d_i$ so that all members of $\mathcal D$ are distinct and do not wrap around modulo $q-1$. Otherwise one correlation could be forced to serve two unrelated quadratic monomials. Section~\ref{sec:parameters} gives the explicit choice
\[
 \eta_i=i+(2r+1)i^2,\qquad B_0=10\eta_r+10,\qquad
 d_i=2B_0+2\eta_i+1,
\]
and proves the required separation.

\Needspace{14\baselineskip}
\stepheading{3. Replace the repair polynomial by a cyclic checker, and isolate the checker promise.}
Substituting $z_i^2=z_i+2y_{d_i}y_{-d_i}$ into the nonnegative quadratic expansion above gives, for a normalized shortest vector,
\[
\boxed{\begin{aligned}
 \Psi_A(z)={}&m+\sum_i(1-|S_i|)z_i
       +2\sum_i|S_i|y_{d_i}y_{-d_i}\\
 &\quad+2\sum_{i<j}|S_i\cap S_j|z_i z_j.
\end{aligned}}
\]
Now replace each source-side quantity by the cyclic correlation that reads it or bounds it:
\[
\begin{array}{c|c}
\text{quantity in }\Psi_A(z)&\text{cyclic correlation}\\
\hline
z_i&C_y(d_i)\quad\text{exactly},\\
y_{d_i}y_{-d_i}&C_y(2d_i)\quad\text{an upper bound},\\
z_i z_j&C_y(d_j-d_i)+C_y(d_i+d_j)\quad\text{an upper bound}.
\end{array}
\]
This forces the definition
\[
\boxed{\begin{aligned}
 E_A(y)={}&m+\sum_i(1-|S_i|)C_y(d_i)
       +2\sum_i|S_i|C_y(2d_i)\\
 &\quad+2\sum_{i<j}|S_i\cap S_j|
          \bigl(C_y(d_j-d_i)+C_y(d_i+d_j)\bigr).
\end{aligned}}
\]
The possibly negative coefficient $1-|S_i|$ multiplies an exact identity; every replacement used only as an upper bound has a nonnegative coefficient. Hence every shortest vector, after normalization, decodes a nonnegative $z$ with
\[
 \boxed{E_A(y)\ge\Psi_A(z).}
\]
Because correlations are invariant under cyclic rotation and overall sign, the inequality holds for every shortest vector without choosing an absolute anchor in the definition of $E_A$.

This gives the soundness half of the intermediate checker problem. If the source instance is NO, Lemma~\ref{lem:repair} gives
\[
 E_A(y)\ge\Psi_A(z)\ge\OPT(A)>\tfrac32\tau
\]
for every shortest $y\in\mathcal I$.

For completeness, let $\xi$ be a Boolean exact-cover witness. We want a shortest $y$ for which the above correlation bounds are equalities and whose decoder equals $\xi$. Prescribe
\[
 y_{d_i}=\xi_i,\qquad y_{-d_i}=0,
\]
and require
\[
 C_y(2d_i)=0,\qquad
 C_y(d_j-d_i)=\xi_i\xi_j,\qquad
 C_y(d_i+d_j)=0.
\]
Such a vector is a \emph{clean completion}. Sections~\ref{sec:count}--\ref{sec:clean} prove that every Boolean $\xi$ has a clean completion in $\mathcal I$ with squared norm $h+1$. For this vector, $z=\xi$ and all correlation replacements are equalities, so
\[
 E_A(y)=\Psi_A(\xi)=\norm{\xi}_1\le\tau.
\]
Thus the source problem has been reduced to the following checker gap on the minimum shell:
\[
\boxed{\begin{array}{ll}
\text{YES:}&\text{some shortest }y\in\mathcal I\text{ has }E_A(y)\le\tau,\\[1mm]
\text{NO:}&\text{every shortest }y\in\mathcal I\text{ has }E_A(y)>\tfrac32\tau.
\end{array}}
\]
Section~\ref{sec:checker} proves this formally. Since $E_A(y)$ is integer-valued, the NO conclusion also implies $E_A(y)\ge\tau+1$.

\Needspace{13\baselineskip}
\stepheading{4. Represent the checker itself by a symmetric circulant operator.}
The checker gap is still only a scalar statistic attached to shortest vectors. To make it affect Euclidean length, first write the checker as an actual quadratic form.

Collect its correlation coefficients as
\[
 E_A(y)=m+\sum_{d\in\Z/(q-1)\Z}\beta_d C_y(d),
\]
where $\beta_d=0$ at every unlisted displacement. Let $S$ be the cyclic shift matrix. Since
\[
 C_y(d)=\frac12 y^{\mathsf T}(S^d+S^{-d})y,
\]
the nonconstant part of $E_A$ is already a symmetric circulant quadratic form. On the minimum shell, $\norm{y}_2^2=h+1$, so the constant can also be homogenized:
\[
 m=\frac{m}{h+1}\norm{y}_2^2.
\]
Multiplying by $2(h+1)$ clears the denominator and leads naturally to
\[
 \boxed{K:=2mI_{q-1}+(h+1)\sum_d\beta_d(S^d+S^{-d}).}
\]
Thus $K$ is not a separate gadget: it is simply the integer symmetric circulant operator representing the checker on the minimum shell. By construction,
\[
 \boxed{y^{\mathsf T}Ky=2(h+1)E_A(y)
 \qquad\text{whenever }\norm{y}_2^2=h+1.}
\]
Equivalently, because $S$ is multiplication by $X$, $K$ is multiplication by the ring element
\[
 k(X)=2m+(h+1)\sum_d\beta_d(X^d+X^{-d})\in R_{q-1}.
\]

\Needspace{15\baselineskip}
\stepheading{5. Turn the checker quadratic form into ordinary Euclidean length by multiplying the ideal.}
If we were allowed to change the norm on $\mathcal I$, the identity above suggests using the symmetric circulant matrix representing the checker as a perturbation of a large scalar map. Standard SVP, however, uses the ordinary Euclidean norm. We therefore change the lattice rather than the norm. On the minimum shell the scalar part contributes the same baseline to every vector, the cross term records the checker value, and a sufficiently large scalar controls the remaining quadratic error while keeping longer vectors away from the threshold.

Choose a large positive integer $M$ and define the ring multiplier
\[
 \boxed{g(X):=M+k(X).}
\]
The final ideal is defined directly by
\[
 \boxed{I':=g(X)\mathcal I.}
\]
In coefficient coordinates, multiplication by $g(X)$ is the $(q-1)\times(q-1)$ matrix
\[
 T=MI_{q-1}+K,
\]
so $I'=T\mathcal I$. This is the meaning of the image construction: the perturbed length $y\mapsto\norm{Ty}_2$ on the old ideal becomes the ordinary Euclidean length of the image vector $v=Ty$ in the new ideal.

For $v=Ty$,
\[
 \norm{v}_2^2
 =M^2\norm{y}_2^2+2M y^{\mathsf T}Ky+\norm{Ky}_2^2.
\]
Hence on the minimum shell,
\[
 \boxed{\norm{v}_2^2
 =(h+1)\bigl(M^2+4M E_A(y)\bigr)+\norm{Ky}_2^2.}
\]
The checker value now appears linearly in an ordinary squared Euclidean length. The term $M^2\norm{y}_2^2$ keeps vectors outside the minimum shell separated, while $\norm{Ky}_2^2$ is a controlled error.

For an explicit bound, set
\[
 L=1+\max_i\sum_j|K_{ij}|,
\qquad
 M=1+\max\{L^2,\ 2L(h+2)+(4\tau+2)(h+1)\}.
\]
Then $\norm K_{\rm op}<L$, and Theorem~\ref{thm:conversion} uses
\[
 \boxed{B_*=(h+1)\bigl(M^2+(4\tau+2)M\bigr).}
\]
The three required bounds are
\[
\begin{array}{ll}
\text{chosen YES shell vector:}&\norm{Ty}_2^2<B_*,\\
\text{every NO shell vector:}&\norm{Ty}_2^2>B_*,\\
\text{every }y\text{ with }\norm{y}_2^2\ge h+2:&\norm{Ty}_2^2>B_*.
\end{array}
\]
The last line uses $\norm{Ty}_2\ge(M-L)\norm{y}_2$ and does not require any sign information about $E_A$ away from the minimum shell.

\Needspace{10\baselineskip}
\stepheading{6. Output an explicit basis of the final ideal.}
Lemma~\ref{lem:baseann} describes $\mathcal I$ as the kernel of a single cyclic convolution modulo $q(h^2-1)$. Section~\ref{sec:algorithm} computes an integer basis $\mathbf B_{\mathcal I}$ of that kernel by Smith normal form. Since multiplication by $g(X)$ has coefficient matrix $T$, the final basis is
\[
 \boxed{\mathbf B_{\rm out}=T\mathbf B_{\mathcal I}.}
\]
The reduction outputs
\[
 \boxed{(\mathbf B_{\rm out},B_*).}
\]
The assignment $\xi$, its clean completion $y$, and the decoded vector $z$ occur only in the proof of correctness; the reduction never computes any of them. The deterministic computation is
\[
 (A,\tau)\longmapsto
 (\mathcal I,E_A,K,g)\longmapsto
 (T\mathbf B_{\mathcal I},B_*).
\]
Sections~\ref{sec:parameters}--\ref{sec:conversion} prove the mathematical implications in this chain, and Section~\ref{sec:algorithm} proves polynomial-time construction, polynomial encoding length, and membership of the target problem in NP.

\section{Parameters and separated offsets}
\label{sec:parameters}
Set
\begin{equation}
\label{eq:parameters}
 k=6(r+3),\qquad h=\frac{3k}{2},\qquad
 (100k)^{12}<q<2(100k)^{12},
\end{equation}
where $q$ is prime, and choose a primitive element $\alpha\in\F_q^\times$.
For the rest of the construction the ring is
\[
 R_{q-1}=\Z[X]/(X^{q-1}-1).
\]
All coefficient indices and cyclic displacements are taken modulo $q-1$. In particular, the dimension is even, $h+1<2k$, and $q>h^2$.

These choices are deterministic polynomial time in the source input size. Bertrand's theorem provides the prime; a search using trial division suffices because $q$ is a fixed polynomial in $r$. Testing candidates by enumerating their powers finds $\alpha$ in polynomial time as well. No algorithm for factoring arbitrary binary integers is assumed.

For $1\le i\le r$, define
\begin{equation}
\label{eq:offsets}
 \eta_i=i+(2r+1)i^2,\qquad B_0=10\eta_r+10,\qquad
 d_i=2B_0+2\eta_i+1.
\end{equation}
These $d_i$ are odd. As explained in Section~\ref{sec:roadmap}, the quadratic monomials in $\Psi_A$ require exactly the even pair displacements
\begin{equation}
\label{eq:D}
 \mathcal D=\{2d_i\}_i\ \cup\ \{d_j-d_i:i<j\}\ \cup\ \{d_i+d_j:i<j\}.
\end{equation}
The purpose of the present section is only to choose the offsets so that these required displacements do not collide.

\begin{lemma}[No displacement collisions]
\label{lem:offsets}
The $r^2$ members of $\mathcal D$ are positive, distinct, and smaller than $(q-1)/2$. The positions $d_i,-d_i$ are all distinct modulo $q-1$.
\end{lemma}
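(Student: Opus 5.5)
We will work entirely with ordinary integers, using the explicit formulas in \eqref{eq:offsets}, and pass to residues modulo $q-1$ only at the end. The first step is to write the members of $\mathcal D$ in terms of the $\eta_i$:
\[
 2d_i=4B_0+4\eta_i+2,\qquad d_j-d_i=2(\eta_j-\eta_i),\qquad d_i+d_j=4B_0+2(\eta_i+\eta_j)+2 .
\]
Since $\eta_i$ is strictly increasing in $i$, every member is positive. The $r^2$ count is $r+2\binom r2$, so distinctness must be proved across the three families. We then separate them by size. Each difference satisfies $d_j-d_i<2\eta_r<4B_0$. Each element of the other two families exceeds $4B_0$. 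Hence differences never collide with the rest.

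Inside the differences, equality $\eta_j-\eta_i=\eta_{j'}-\eta_{i'}$ has to be excluded. We write $c=2r+1$, so that $\eta_j-\eta_i=(j-i)\bigl(1+c(i+j)\bigr)$, and reduce modulo $c$. This gives $j-i\equiv j'-i'\pmod c$. Since both differences lie in $[1,r-1]$ and $c>2r$, the differences agree. Then $c(i+j)=c(i'+j')$ with the same difference, which forces $(i,j)=(i',j')$.

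For the remaining families, we compare $2\eta_i$ (from $2d_i$) with $\eta_a+\eta_b$ for $a<b$, and compare different pairs $\eta_a+\eta_b$ among themselves. In both cases the sum is $s+c\,t$ with $s=a+b\le 2r<c$ and $t=a^2+b^2$. Reducing modulo $c$ recovers $s$, and then $t$ is determined. Since $s$ and $t$ together determine the unordered pair $\{a,b\}$ (a diagonal pair being $a=b=i$), all of these are distinct. This arithmetic identification, choosing the multiplier $2r+1$ to exceed all linear parts, is the only real content and the step I expect to need care.

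Finally we handle size and the no-wraparound statement. The largest member is at most $4B_0+4\eta_r+2<5B_0$, and $B_0=O(r^3)$. Meanwhile $q>(100k)^{12}$ with $k=6(r+3)$, so $5B_0<(q-1)/2$ with enormous room. Because all members are positive integers below $(q-1)/2$, distinct integers remain distinct residues modulo $q-1$. For the positions $\pm d_i$, we have $0<d_i<(q-1)/2$ and the $d_i$ are distinct, so $\{d_i\}$ and $\{-d_i\}\equiv\{q-1-d_i\}$ lie in the disjoint ranges $(0,(q-1)/2)$ and $((q-1)/2,q-1)$ and are distinct within each range.
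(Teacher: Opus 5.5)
Your proof is correct and follows essentially the same route as the paper: you identify sums $\eta_a+\eta_b=(a+b)+(2r+1)(a^2+b^2)$ and differences $\eta_j-\eta_i=(j-i)\bigl(1+(2r+1)(i+j)\bigr)$ by reducing modulo $2r+1$, you separate the difference family from the sum families by the threshold $4B_0$, and you use the size bound $2d_r<(q-1)/2$ to rule out wraparound and to separate $d_i$ from $-d_j$. The only difference is presentational: the paper checks the size bound explicitly via $k^3-4d_r=40r^3+1856r^2+5744r+5748>0$, whereas your ``$B_0=O(r^3)$ with enormous room'' should be written as a concrete inequality valid for every $r\ge1$, for example $5B_0\le 250r^3<(100k)^{12}/2\le(q-1)/2$, which is immediate.
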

\begin{proof}
If $\eta_i+\eta_j=\eta_u+\eta_v$, reduction modulo $2r+1$ gives $i+j=u+v$, since both sums lie between $2$ and $2r$. Comparing the remaining terms gives $i^2+j^2=u^2+v^2$, hence $ij=uv$ and $\{i,j\}=\{u,v\}$ as multisets. This includes diagonal sums.

Likewise, if $\eta_j-\eta_i=\eta_v-\eta_u$ with $i<j$ and $u<v$, reduction modulo $2r+1$ gives $j-i=v-u$. Comparing the quadratic terms gives $(j-i)(j+i)=(v-u)(v+u)$; the common nonzero difference implies $i+j=u+v$, so $(i,j)=(u,v)$.

Now $0<d_j-d_i<2\eta_r$, whereas every sum $d_i+d_j$, including $2d_i$, exceeds $4B_0$. Thus the difference and sum ranges are disjoint, and the two uniqueness arguments show that all $r+2\binom r2=r^2$ displacements are distinct. They are at most $2d_r$. To check that they are smaller than $(q-1)/2$, expand
\[
 d_r=44r^3+22r^2+22r+21
\]
and use $k=6(r+3)$ to obtain
\[
 k^3-4d_r=40r^3+1856r^2+5744r+5748>0.
\]
Since the lower endpoint in~\eqref{eq:parameters} is an integer,
\[
 4d_r<k^3<(100k)^{12}\le q-1.
\]
Hence none wraps around or equals the negative of a checked displacement. Finally $0<d_i+d_j<q-1$, so $d_i\not\equiv-d_j$.
\end{proof}

\begin{lemma}[Bounds supplied by the parameters]
\label{lem:parameterbounds}
For the parameters in~\eqref{eq:parameters}, $k\ge24$, $h+1<2k<q$, $q>h^2$, and
\[
 h-r-3=\frac{4k}{3}\ge k+2.
\]
Moreover,
\[
 q>16\bigl(h^2+2h(2r+2)\bigr),\qquad
 q>64r^2h(r+h).
\]
For every integer $s'$ with $h-r-3\le s'\le h$,
\[
 (4k)^{s'}q^{-(s'-k)/2}
 <(100k)^{-k/2}<\frac18.
\]
\end{lemma}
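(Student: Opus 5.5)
The plan is to verify each inequality directly from the definitions $k=6(r+3)$, $h=3k/2$ and the lower bound $q>(100k)^{12}$ in~\eqref{eq:parameters}. The only inputs are $r\ge1$ and the crude comparisons $r<k$, $h<2k$. No earlier lemma is needed.

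\textbf{Elementary items.} Since $r\ge1$, we get $k\ge24$. Next, $h+1=3k/2+1<2k$ because $k>2$, and $2k<(100k)^{12}<q$. Also $q>h^2$, since $h^2<4k^2$. For the displayed identity, compute $h-r-3=9(r+3)-(r+3)=8(r+3)=4k/3$. Then $4k/3\ge k+2$ is equivalent to $k\ge6$.

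\textbf{The two polynomial bounds on $q$.} Use $r<k$ and $h<2k$. This gives
\[
 16\bigl(h^2+2h(2r+2)\bigr)<16\bigl(4k^2+4k\cdot 2k\bigr)=192k^2,
\]
where $2r+2\le 2k$. It also gives
\[
 64r^2h(r+h)<64k^2\cdot 2k\cdot 3k=384k^4.
\]
Both right-hand sides are far below $(100k)^{12}<q$.

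\textbf{The exponential estimate.} This is the only step needing care. Write $t=s'$ and note $t-k\ge 4k/3-k>0$. Since the exponent $-(t-k)/2$ is negative, the strict inequality $q>(100k)^{12}$ gives
\[
 (4k)^{t}q^{-(t-k)/2}<(4k)^{t}(100k)^{-6(t-k)}.
\]
The logarithm of the right side is $t\log(4k)-6(t-k)\log(100k)$. This is strictly decreasing in $t$, because $\log(4k)<6\log(100k)$. So the worst case is the smallest admissible value $t=h-r-3=4k/3$; the upper restriction $s'\le h$ is not needed. At that value the bound becomes
\[
 (4k)^{4k/3}(100k)^{-2k}.
\]
It is at most $(100k)^{-k/2}$ exactly when $(4k)^{4k/3}\le(100k)^{3k/2}$. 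That holds because $4k<100k$ and $4/3<3/2$. Finally, $(100k)^{-k/2}<1/8$ is immediate since $100k>8$ and $k/2\ge1$.

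I expect no real obstacle beyond two bookkeeping points. First, note the monotonicity in $s'$, so that a single endpoint check suffices. Second, use the strict lower bound on $q$ together with $s'>k$, so that the overall inequality is strict.
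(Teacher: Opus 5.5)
Your verification is correct and follows essentially the same direct route as the paper, using only $k=6(r+3)$, $h=3k/2$ and $q>(100k)^{12}$. There are two minor differences. For the two polynomial bounds you use the cruder comparisons $r<k$ and $h<2k$, whereas the paper uses $r<k/6$ to get $52k^2$ and $\frac{40}{9}k^4$. For the exponential estimate you argue by monotonicity in $s'$ and need only the lower endpoint, whereas the paper bounds the two factors separately, $(4k)^{s'}\le(100k)^{3k/2}$ via $s'\le 3k/2$ and $q^{-(s'-k)/2}<(100k)^{-2k}$ via $s'-k\ge k/3$.
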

\begin{proof}
The identities $k=6(r+3)$ and $h=3k/2$, with $r\ge1$, give $k\ge24$, $h+1<2k$, and $h-r-3=4k/3\ge k+2$. The chosen prime range gives the stronger bounds
\[
 q>(100k)^{12}>
 \max\left\{52k^2,\frac{40}{9}k^4\right\}.
\]
In particular, $q>2k$ and $q>h^2=9k^2/4$. Since $r<k/6$ and $2r+2<k/3$, the two collision bounds follow explicitly:
\begin{align*}
 16\bigl(h^2+2h(2r+2)\bigr)
 &<16\left(\frac{9k^2}{4}+3k\cdot\frac{k}{3}\right)
 =52k^2<q,\\
 64r^2h(r+h)
 &<64\left(\frac{k}{6}\right)^2\frac{3k}{2}
       \left(\frac{k}{6}+\frac{3k}{2}\right)
 =\frac{40}{9}k^4<q.
\end{align*}
For the counting error, use the full lower bound $q>(100k)^{12}$, together with $s'\le3k/2$ and $s'-k\ge k/3$. Therefore
\[
 (4k)^{s'}q^{-(s'-k)/2}
 <(100k)^{3k/2-12k/6}
 =(100k)^{-k/2}<\frac18.
\]
The last inequality uses $k\ge24$. These estimates will control the counting error and the two collision union bounds.
\end{proof}

\section{The cyclic parity-anchor lattice}
\label{sec:anchor}
Coordinate $a\in\Z/(q-1)\Z$ carries the field label $\alpha^a$. For $y\in\Z^{q-1}$ define
\[
 u(y)=\sum_{a\text{ even}}y_a,\qquad
 v(y)=\sum_{a\text{ odd}}y_a,\qquad
 M_j(y)=\sum_{a=0}^{q-2}y_a\alpha^{aj}\pmod q.
\]
The parity sums $u,v$ are integers; the moments $M_j$ lie in $\F_q$. Define
\begin{equation}
\label{eq:ideal}
 \mathcal I=\left\{y\in\Z^{q-1}:\ 
 hu(y)-v(y)\equiv0\pmod{h^2-1},\quad
 M_j(y)=0\ (1\le j<k)\right\}.
\end{equation}
The parity condition has the equivalent forms
\begin{equation}
\label{eq:paritycong}
 \binom uv\in\begin{pmatrix}h&1\\1&h\end{pmatrix}\Z^2
 \iff hu-v\equiv0\pmod{h^2-1}
 \iff hv-u\equiv0\pmod{h^2-1}.
\end{equation}
Indeed,
\[
 \begin{pmatrix}h&1\\1&h\end{pmatrix}^{-1}\binom uv
 =\frac1{h^2-1}\binom{hu-v}{hv-u},
\]
and $hv-u=(h^2-1)u-h(hu-v)$, with the symmetric identity obtained by interchanging $u,v$. Thus divisibility of either numerator implies divisibility of both.

\begin{lemma}[A single modular kernel; ideal property and full rank]
\label{lem:ideal}\label{lem:baseann}
One can compute $P\in R_{q-1}$ in deterministic polynomial time such that
\begin{equation}
\label{eq:baseann}
 \boxed{\mathcal I=\{y\in R_{q-1}:Py\equiv0\pmod{q(h^2-1)R_{q-1}}\}.}
\end{equation}
Consequently $\mathcal I$ is a full-rank ideal of $R_{q-1}$,
$q(h^2-1)R_{q-1}\subseteq\mathcal I$, and $S\mathcal I=\mathcal I$.
\end{lemma}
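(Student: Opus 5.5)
The plan is to build $P$ by the Chinese remainder theorem from two separate polynomials: one encoding the moment conditions modulo $q$, and one encoding the parity congruence modulo $h^2-1$. First, $\gcd(q,h^2-1)=1$, because $q$ is a prime with $q>h^2$ by Lemma~\ref{lem:parameterbounds}. So a coefficient vector vanishes modulo $q(h^2-1)$ exactly when it vanishes modulo $q$ and modulo $h^2-1$. It therefore suffices to find $P_1,P_2\in R_{q-1}$ with
\[
 \{y: P_1y\equiv0 \ (\mathrm{mod}\ qR_{q-1})\}=\{y: M_j(y)=0,\ 1\le j<k\},
\]
\[
 \{y: P_2y\equiv0 \ (\mathrm{mod}\ (h^2-1)R_{q-1})\}=\{y: hu(y)-v(y)\equiv0 \ (\mathrm{mod}\ h^2-1)\}.
\]
Then we take $P$ to be the coefficientwise CRT lift, with $P\equiv P_1 \pmod q$ and $P\equiv P_2\pmod{h^2-1}$. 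This lift is computed with one extended Euclidean computation.

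\textbf{The moment factor modulo $q$.} Observe that $M_j(y)=\bar y(\alpha^j)$, where $\bar y\in\F_q[X]$ is the reduction of $y$. Since $\alpha$ is primitive, $X^{q-1}-1=\prod_{c\in\F_q^\times}(X-c)$ is separable over $\F_q$. Because $k<q-1$, the elements $\alpha^j$ with $1\le j<k$ are distinct. Put $G=\prod_{j=1}^{k-1}(X-\alpha^j)$ and $H=(X^{q-1}-1)/G$ in $\F_q[X]$, and let $P_1$ be any integer lift of $H$. Then $P_1y\equiv 0$ modulo $(q,X^{q-1}-1)$ holds iff $X^{q-1}-1$ divides $H\bar y$, iff $G\mid\bar y$. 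By separability, this holds iff $\bar y(\alpha^j)=0$ for all $1\le j<k$. Computing $G$ and dividing uses $O(kq)$ field operations, and $q$ is polynomial in $r$, so this step is polynomial time.

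\textbf{The parity factor modulo $h^2-1$.} Let $P_2=\sum_{a=0}^{q-2}c_aX^a$, with $c_a=h$ for $a$ even and $c_a=-1$ for $a$ odd. Because $q-1$ is even, parity of indices is well defined modulo $q-1$. The coefficient of $X^b$ in $P_2y$ is $\sum_a c_{b-a}y_a$. This equals $hu(y)-v(y)$ when $b$ is even and $hv(y)-u(y)$ when $b$ is odd. By the equivalence \eqref{eq:paritycong}, both vanish modulo $h^2-1$ exactly when the parity condition holds, which gives the second displayed kernel identity.

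\textbf{Consequences and the main point of care.} A kernel of multiplication by $P$ modulo an ideal is automatically closed under multiplication by $X$, so it is an ideal and $S\mathcal I\subseteq\mathcal I$. Since $S^{q-1}=1$, this gives $S\mathcal I=\mathcal I$. The kernel clearly contains $q(h^2-1)R_{q-1}$, which has full rank, so $\mathcal I$ has full rank. The only step needing genuine care is the mod-$q$ equivalence. It relies on $X^{q-1}-1$ being squarefree over $\F_q$ and on the $\alpha^j$ being distinct; without this, divisibility of $H\bar y$ by $X^{q-1}-1$ would not reduce to vanishing at the chosen roots. Everything else is a direct verification.
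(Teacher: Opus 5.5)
Your proof is correct. Its skeleton is the paper's: a moment factor modulo $q$ and a parity factor modulo $h^2-1$, glued by coefficientwise Chinese remaindering. Your moment factor is exactly the paper's $P_{\rm mom}$. The only difference there is that you cancel $H$ directly in the domain $\F_q[X]$ instead of quoting the general monic-cancellation identity \eqref{eq:monicann}.

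The real difference is the parity step.
\begin{itemize}
\item The paper rewrites the parity congruence as $y(-h)\equiv0\pmod{h^2-1}$, using that $h$ is a unit modulo $h^2-1$. It then takes $P_{\rm par}=(X^{q-1}-1)/(X+h)$ over $\Z/(h^2-1)\Z$ and applies \eqref{eq:monicann}, which is proved so that it still holds over a coefficient ring with zero divisors.
\item You instead write down $P_2=h\sum_{a\text{ even}}X^a-\sum_{a\text{ odd}}X^a$, compute that each coefficient of $P_2y$ is $hu-v$ or $hv-u$, and conclude with \eqref{eq:paritycong}.
\end{itemize}
The two polynomials in fact coincide up to sign. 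Since $q-1$ is even, synthetic division gives $P_{\rm par}=\sum_{i=0}^{q-2}(-h)^{q-2-i}X^i\equiv-P_2\pmod{h^2-1}$.

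What each route buys: your verification is more elementary and self-contained, with no polynomial division over a non-domain and no cancellation lemma. The paper's route treats both factors with one lemma and explains the parity ideal structurally, as the principal ideal generated by $X+h$ modulo $h^2-1$.

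One small correction to your closing remark: the moment equivalence does not need $X^{q-1}-1$ to be squarefree. Cancelling $H$ is automatic in the domain $\F_q[X]$. Only the distinctness of the $\alpha^j$ is used, both to get $G\mid X^{q-1}-1$ and to pass from $G\mid\bar y$ to vanishing at the chosen roots.
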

\begin{proof}
We convert the parity and moment conditions into one multiplication congruence. The original description~\eqref{eq:ideal} remains the useful one for the shortest-vector argument.

\textbf{Cancellation over a coefficient ring.}
If $\mathcal A$ is a commutative ring and $f=bp$ in $\mathcal A[X]$, with $p$ monic, then
\begin{equation}
\label{eq:monicann}
 \{y\in\mathcal A[X]/(f):py=0\}=(b).
\end{equation}
Indeed, $pb=f$ gives one inclusion. Conversely, a lift of $py=0$ satisfies $pY=fZ=pbZ$, so $p(Y-bZ)=0$ in $\mathcal A[X]$. Multiplication by a monic polynomial is injective: the highest nonzero coefficient of a nonzero polynomial remains nonzero in its product with $p$. Thus $Y=bZ$. This argument also works when the coefficient ring has zero divisors.

\textbf{The parity condition as evaluation at $-h$.}
Modulo $h^2-1$, we have $h^2\equiv1$, and hence
\[
 (-h)^{2t}\equiv1,\qquad (-h)^{2t+1}\equiv-h.
\]
Writing $y(X)=\sum_{a=0}^{q-2}y_aX^a$ therefore gives
\[
 y(-h)\equiv\sum_{a\text{ even}}y_a-h\sum_{a\text{ odd}}y_a
 =u(y)-hv(y)\pmod{h^2-1}.
\]
This evaluation is well defined on the quotient ring: $q-1$ is even, so $(-h)^{q-1}\equiv1\pmod{h^2-1}$. Also $h$ is a unit modulo $h^2-1$, with inverse $h$. Multiplying $hu-v$ by $h$ shows that the parity condition is equivalent to $y(-h)=0$ modulo $h^2-1$.

Division by the monic polynomial $X+h$ identifies its remainder with this evaluation. Define the complementary monic factor
\[
 P_{\rm par}(X)=\frac{X^{q-1}-1}{X+h}
 \quad\text{in }(\Z/(h^2-1)\Z)[X].
\]
This is a factorization over the displayed coefficient ring, not over $\Z[X]$. By~\eqref{eq:monicann}, the parity condition is equivalent to
\[
 P_{\rm par}y=0\quad\text{in }R_{q-1}/(h^2-1)R_{q-1}.
\]

\textbf{The moment conditions.}
The equations are $y(\alpha^j)=0$ for $1\le j<k$. Since the $\alpha^j$ are distinct roots of $X^{q-1}-1$, define
\[
 P_{\rm mom}(X)=
 \frac{X^{q-1}-1}{\displaystyle\prod_{j=1}^{k-1}(X-\alpha^j)}
 \quad\text{in }\F_q[X].
\]
Vanishing at these roots is equivalent to divisibility by the denominator. Applying~\eqref{eq:monicann} again, the moment equations are equivalent to $P_{\rm mom}y=0$ in $R_{q-1}/qR_{q-1}$.

\textbf{Combine the two conditions.}
Since $q>h^2$, the moduli $q$ and $h^2-1$ are coprime. Coefficientwise Chinese remaindering gives $P$ of degree less than $q-1$ with
\[
 P\equiv P_{\rm par}\pmod{h^2-1},\qquad
 P\equiv P_{\rm mom}\pmod q.
\]
The two vanishing conditions are therefore exactly~\eqref{eq:baseann}. Monic polynomial division and the integer Euclidean algorithm compute $P$ in polynomial time; its coefficients may be taken in $\{0,\ldots,q(h^2-1)-1\}$. No factorization of $h^2-1$ is needed.

Finally, multiplication by $P$ is $R_{q-1}$-linear. Its kernel modulo $q(h^2-1)$ is thus an ideal, and it contains $q(h^2-1)R_{q-1}$, which already has full integer rank. Ideal closure gives $S\mathcal I\subseteq\mathcal I$; the finite order of $S$ gives equality.
\end{proof}

The two descriptions have different roles. The parity and moment equations expose the geometry of shortest vectors. The single kernel~\eqref{eq:baseann} provides a direct basis construction in Section~\ref{sec:algorithm} and the starting point for the NTRU application in Section~\ref{sec:ntru}.

\subsection{A moment-lattice lower bound}
The following Reed--Solomon lattice estimate is the one used in the construction of Bennett and Peikert~\cite{BP}. We include its elementary proof, so no lattice-hardness theorem is needed for this step.
\begin{lemma}[Reed--Solomon minimum bound]
\label{lem:RS}
Let $q$ be prime, $2k\le q$, and $x\in\Z^{\F_q}$ satisfy
\[
 \sum_{a\in\F_q}x_a a^j\equiv0\pmod q\qquad(0\le j<k),
\]
where $a^0=1$, including at $a=0$. If $x\ne0$, then $\norm{x}_2^2\ge2k$.
\end{lemma}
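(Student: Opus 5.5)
The plan is the standard Reed--Solomon argument: split $x$ into positive and negative parts and compare two degree-$<k$ polynomial identities over $\F_q$. Write $x=x^+-x^-$ with $x^\pm\in\Z_{\ge0}^{\F_q}$ of disjoint supports, and set
\[
 n=\sum_a x^+_a=\sum_a x^-_a.
\]
The two sums agree modulo $q$ by the $j=0$ equation. The key quantity is $\norm{x}_1=\sum_a|x_a|$, and I would use the inequality $\norm{x}_2^2\ge\norm{x}_1$ for integer vectors, since $|x_a|^2\ge|x_a|$. So it suffices to prove $\norm{x}_1\ge 2k$.

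\textbf{Step 1: reduce to small $\ell_1$ norm.} Suppose $\norm{x}_1<2k\le q$. Then $\sum_a x_a$ has absolute value less than $q$ and is divisible by $q$, so it vanishes. Hence the positive and negative parts have equal total mass $n=\norm{x}_1/2<k$, and $n\ge1$ because $x\ne0$.

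\textbf{Step 2: compare polynomials.} Form
\[
 F^+(T)=\prod_a(1-aT)^{x^+_a},\qquad F^-(T)=\prod_a(1-aT)^{x^-_a}
\]
in $\F_q[T]$. Both have degree at most $n<k$ and constant term $1$. The power sums $p_j=\sum_a x^\pm_a a^j$ agree for $1\le j<k$ by hypothesis. Newton's identities then determine the elementary symmetric functions $e_1,\dots,e_n$ of the multiset of roots, and hence all coefficients of $F^\pm$ up to degree $n<k$. This requires dividing by $1,\dots,n$, which is legitimate in $\F_q$ because $n<k\le q/2<q$. Therefore $F^+=F^-$.

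\textbf{Step 3: conclude.} Unique factorization in $\F_q[T]$ shows that the multisets $\{a\text{ with multiplicity }x^\pm_a : a\ne0\}$ coincide. The factor for $a=0$ is the constant $1$ and is invisible, but its multiplicity is then recovered from the total mass $n$. Hence $x^+=x^-$. Since the two parts have disjoint supports, this forces $x=0$, a contradiction. Therefore $\norm{x}_2^2\ge\norm{x}_1\ge2k$.

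\textbf{Main obstacle.} The delicate point is Step 2: making sure Newton's identities are invertible in characteristic $q$. This is exactly where $n<k\le q$ is used. The multiplicities $x^\pm_a$ may exceed $1$, but they are still less than $q$, so reducing them modulo $q$ loses no information when comparing the multisets through factorization. A separate care point is the $a=0$ coordinate, which the $T$-polynomials do not see; it is handled by the mass equality from the $j=0$ condition.
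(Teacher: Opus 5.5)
Your proof is correct and follows essentially the same route as the paper: you reduce to $\norm{x}_1<2k\le q$ so that the zeroth congruence becomes the integer equality $\sum_a x_a=0$, and you then identify the positive and negative multisets of common size $n<k$ via Newton's identities in $\F_q$. The only difference is cosmetic. The paper compares the monic root polynomials $\prod_b(T-b)$, in which the label $0$ appears as a root, so no separate argument is needed for it. Your reversed polynomials $\prod(1-aT)$ hide that label, and you correctly recover its multiplicity from the equal total masses.
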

\begin{proof}
Suppose $\norm{x}_2^2<2k$. Integrality gives $\norm{x}_1\le\norm{x}_2^2<q$, so the zeroth congruence implies the integer equality $\sum_a x_a=0$.

Form a positive multiset with $x_a$ copies of $a$ when $x_a>0$, and a negative multiset with $-x_a$ copies when $x_a<0$. Both have size $d=\norm{x}_1/2\le k-1$. Their first $d$ power sums agree. Since $1,\ldots,d$ are invertible in $\F_q$, Newton's identities give the same elementary symmetric functions and hence the same monic root polynomial. The multisets are equal, including multiplicities. Their supports are disjoint by construction, so both must be empty. This forces $x=0$, a contradiction.
\end{proof}

\subsection{The minimum shell and its anchor}
\begin{proposition}[Relative anchor]
\label{prop:shell}
Every nonzero $y\in\mathcal I$ has $\norm{y}_2^2\ge h+1$. Equality holds exactly at signed binary vectors with vanishing moments and parity weights $(h,1)$ or $(1,h)$. Such vectors exist by Proposition~\ref{prop:clean}; hence $\lambda_1(\mathcal I)^2=h+1$.
\end{proposition}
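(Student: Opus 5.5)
The plan is to combine the parity congruence~\eqref{eq:paritycong} with the Reed--Solomon bound of Lemma~\ref{lem:RS}, splitting into cases by the total coefficient sum $u+v$. Let $0\ne y\in\mathcal I$. By~\eqref{eq:paritycong} there are integers $a,b$ with
\[
 u(y)=ha+b,\qquad v(y)=a+hb,\qquad u(y)+v(y)=(h+1)(a+b).
\]
Throughout, the elementary chain of inequalities
\[
 \norm{y}_2^2\ \ge\ \norm{y}_1\ \ge\ |u|+|v|\ \ge\ |u+v|
\]
holds, because the entries are integers and by the triangle inequality within each parity class.

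\textbf{Case $a+b\ne0$.} Here $|u+v|=(h+1)|a+b|\ge h+1$, so the chain immediately gives $\norm{y}_2^2\ge h+1$.

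\textbf{Case $a+b=0$ with $a\ne0$.} Then $(u,v)=(h-1)a\,(1,-1)$. Hence $\norm{y}_2^2\ge|u|+|v|=2(h-1)|a|\ge 2h-2$, and this exceeds $h+1$ since $h\ge36$.

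\textbf{Case $a=b=0$.} Now $u+v=\sum_a y_a=0$. I would transport $y$ to a vector $x\in\Z^{\F_q}$ by setting $x_{\alpha^a}=y_a$ and $x_0=0$; this is well defined because $a\mapsto\alpha^a$ is a bijection $\Z/(q-1)\Z\to\F_q^\times$. The $j=0$ moment of $x$ is $\sum_a y_a=0$. For $1\le j<k$, the point $0$ contributes nothing since $0^j=0$, so the moment equals $M_j(y)=0$. Lemma~\ref{lem:RS} applies because $2k<q$ by Lemma~\ref{lem:parameterbounds}. It gives $\norm{y}_2^2=\norm{x}_2^2\ge 2k>h+1$, again using Lemma~\ref{lem:parameterbounds}.

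This is the only case needing the finite-field input. Its one subtlety is handling the label $0\in\F_q$, which is exactly why we set $x_0=0$ and check that this leaves all moments unchanged.

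\textbf{Equality.} By the case analysis, equality $\norm{y}_2^2=h+1$ can occur only when $|a+b|=1$, and then every inequality in the chain must be tight. First, $\norm{y}_2^2=\norm{y}_1$ forces all entries into $\{0,\pm1\}$. Second, $\norm{y}_1=|u|+|v|$ forces constant sign within each parity class. Third, $|u|+|v|=|u+v|$ forces $u$ and $v$ to have the same sign. Hence $y$ is, up to an overall sign, a nonnegative binary vector. After that sign change we may take $a+b=1$ with $u,v\ge0$. Then
\[
 u=(h-1)a+1\ \ge 0,\qquad v=h-(h-1)a\ \ge 0,
\]
and both conditions hold for an integer $a$ only when $a\in\{0,1\}$. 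Thus the parity weights $(u,v)$ are $(1,h)$ or $(h,1)$, and the moments vanish because $y\in\mathcal I$.

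\textbf{Converse.} A signed binary vector with vanishing moments and weights $(h,1)$ or $(1,h)$ has squared norm $h+1$. It satisfies the parity congruence, since $hu-v$ equals $\pm(h^2-1)$ or $0$. It therefore lies in $\mathcal I$ and attains the bound.

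\textbf{Value of $\lambda_1$.} The identity $\lambda_1(\mathcal I)^2=h+1$ then follows once some such vector exists. Existence is deferred to Proposition~\ref{prop:clean}, which is the genuinely hard step: it is the counting argument, not part of this proposition.
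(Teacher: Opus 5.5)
Your proof is correct and follows essentially the paper's route. You use the Reed--Solomon bound via the label transport with $x_0=0$ when $u=v=0$, the parametrization $u=ha+b$, $v=a+hb$ together with the chain $\norm{y}_2^2\ge\norm{y}_1\ge|u|+|v|$ otherwise, and tightness of that chain for the equality case. The only cosmetic difference is in the equality case. The paper writes $|u|+|v|=\max\{(h+1)|a+b|,(h-1)|a-b|\}$ and pins down $(a,b)$ by a parity argument on $a\pm b$ before deducing the sign pattern, whereas you first extract the common sign from $|u|+|v|=|u+v|$ and then solve $u,v\ge0$ to get $a\in\{0,1\}$.
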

\begin{proof}
First suppose $u(y)=v(y)=0$. The coordinates of $y$ are indexed by exponents, whereas Lemma~\ref{lem:RS} uses field labels. Define $x\in\Z^{\F_q}$ by
\[
 x_0=0,\qquad x_{\alpha^a}=y_a\quad(0\le a\le q-2).
\]
Thus the appended coordinate has field label $0$; the original coordinate at exponent $0$ has field label $1$. The zeroth moment is
\[
 \sum_{b\in\F_q}x_b=u(y)+v(y)=0,
\]
and for $1\le j<k$,
\[
 \sum_{b\in\F_q}x_b b^j
 =\sum_{a=0}^{q-2}y_a\alpha^{aj}=0\quad\text{in }\F_q.
\]
The labeling is a bijection and the appended entry is zero, so $x\ne0$ and $\norm{x}_2^2=\norm{y}_2^2$. Lemma~\ref{lem:RS} therefore gives $\norm{y}_2^2\ge2k>h+1$.

For $(u,v)\ne(0,0)$, the matrix condition in~\eqref{eq:paritycong} gives integers $a,b$ with $u=ha+b$ and $v=a+hb$. Use the identity
\[
 |u|+|v|=\max\{|u+v|,|u-v|\}
 =\max\{(h+1)|a+b|,(h-1)|a-b|\}.
\]
The first equality follows by considering whether $u,v$ have the same sign. If $a+b\ne0$, the first term is at least $h+1$. If $a+b=0$, then $a=-b\ne0$, so the second term is at least $2(h-1)>h+1$. Therefore
\[
 \norm{y}_2^2\ge\norm{y}_1\ge|u|+|v|\ge h+1.
\]

\textbf{Equality cases.} If $|u|+|v|=h+1$, then
\[
 |a+b|\le1,\qquad |a-b|\le\frac{h+1}{h-1}<2.
\]
The integers $a+b$ and $a-b$ have the same parity. The even possibility $(0,0)$ would give $u=v=0$, so both are $\pm1$. Hence
\[
 (a,b)\in\{(1,0),(0,1),(-1,0),(0,-1)\},
\]
and consequently
\[
 (u,v)\in\{(h,1),(1,h),(-h,-1),(-1,-h)\}.
\]
If $\norm{y}_2^2=h+1$, equality also holds in $\sum y_a^2\ge\sum|y_a|$, forcing $y_a\in\{0,\pm1\}$. Equality in the triangle inequality within each parity class forces all its nonzero entries to have the sign of that parity sum. The four displayed pairs have matching signs, so all nonzero entries share one sign. Thus $y$ is binary or the negative of a binary vector, with the asserted weights. Conversely, every such vector with vanishing moments belongs to $\mathcal I$ and has squared norm $h+1$.
\end{proof}

We call a coordinate \emph{occupied} when it is nonzero. In a minimum vector, each occupied coordinate has absolute value one and all have the same sign. The parity sum of absolute value $1$ therefore comes from exactly one occupied position: the \emph{anchor}. The other parity class has $h$ occupied positions.

A sign change and a cyclic rotation make the vector binary and put its anchor at $0$. It is then the only occupied even position. All other occupied positions are odd. Figure~\ref{fig:anchor} displays this normalized pattern; the odd entries need not be consecutive.

\begin{figure}[H]
\centering
\renewcommand{\arraystretch}{1.35}
\begin{tabular}{@{}lccccc@{\qquad}l@{}}
\toprule
Even positions & $0$ & $2$ & $4$ & $\cdots$ & $q-3$ & Occupied count\\
Entries & $\boxed{1}$ & $0$ & $0$ & $\cdots$ & $0$ & exactly $1$\\
\midrule
Odd positions & $1$ & $3$ & $5$ & $\cdots$ & $q-2$ & Occupied count\\
Entries & $\varepsilon_1$ & $\varepsilon_3$ & $\varepsilon_5$ & $\cdots$ & $\varepsilon_{q-2}$ & exactly $h$\\
\bottomrule
\end{tabular}
\[
 \varepsilon_a\in\{0,1\},\qquad
 \sum_{a\text{ odd}}\varepsilon_a=h.
\]
\caption{A normalized minimum vector. The boxed entry is the unique anchor. The moment equations further restrict which $h$ odd positions may be occupied; the diagram describes the support pattern, not an arbitrary choice of those positions.}
\label{fig:anchor}
\end{figure}

\par\noindent\begin{minipage}{\linewidth}
\begin{definition*}[Nonsquare labels]
With the fixed primitive element $\alpha\in\F_q^\times$, define
\[
 \mathcal Q:=\alpha(\F_q^\times)^2
 =\{\alpha^a:a\in\Z/(q-1)\Z\text{ has odd parity}\}.
\]
It is the set of nonzero nonsquares in $\F_q$, and $|\mathcal Q|=(q-1)/2$.
\end{definition*}
\end{minipage}\par

The map $a\mapsto\alpha^a$ is a bijection from coordinate positions to $\F_q^\times$; on odd positions it is a bijection onto $\mathcal Q$. Thus the dense support is equivalently an $h$-element subset of $\mathcal Q$. The anchor has label $\alpha^0=1$, so the $h$ other labels must have power sums $-1$ for $1\le j<k$. All normalizations are by cyclic shifts and an overall sign change.

\Needspace{8\baselineskip}
\section{Uniform counting in a quadratic-character coset}
\label{sec:count}
Recall that $\mathcal Q=\alpha(\F_q^\times)^2$ is the set of $(q-1)/2$ labels of odd positions. To count $h$-element supports with prescribed entries and moments, write each label as $\alpha x^2$, count the resulting tuples, and exclude zeros, repetitions, and forbidden labels.

\subsection{The counting inputs}
The two estimates below have different roles: the first gives the error exponent over every extension field; the second gives a uniform bound for its coefficient.

\textbf{Deligne--Hooley--Katz.} Let $V\subseteq\PP^n_{\F_q}$ be a projective complete-intersection scheme of dimension $d\ge1$, and let $\sigma=\dim\Sing(V_{\overline{\F}_q})$, with $\sigma=-1$ when $V$ is smooth. The singular locus is that of the scheme after base change, not of its reduction. No separate irreducibility or reducedness hypothesis is imposed. For every $e\ge1$,
\begin{equation}
\label{eq:DHK}
 \left|\#V(\F_{q^e})-\sum_{i=0}^d q^{ei}\right|
 \le C_V q^{e(d+1+\sigma)/2},
\end{equation}
where $C_V$ is independent of $e$. The smooth case follows from Deligne~\cite{Deligne}; the singular estimate is due to Hooley with Katz's appendix~\cite{Hooley}. The all-extension formulation is also recorded in our earlier work~\cite[Proposition 5.1]{Wan}. We do not assume that $C_V$ is uniform over our fibers.

In the applications below, $\sigma\in\{-1,0\}$ and $\sigma\le d-2$. This bound makes $V_{\overline{\F}_q}$ regular in codimension one. Complete intersections are Cohen--Macaulay, hence satisfy $S_2$; Serre's criterion therefore makes $V_{\overline{\F}_q}$ normal~\cite[Tags 00S8 and 031S]{Stacks}. A positive-dimensional projective complete intersection is geometrically connected, so normality implies geometric integrality; see also~\cite[discussion after Proposition 5.1]{Wan}. Thus nonreduced schemes do not occur in this small-singular-locus range. When $\sigma\ge d-1$, the exponent in~\eqref{eq:DHK} is at least $d$, and a dimension-based point bound proves the estimate. Point counts ignore nilpotents, but the singular-locus hypothesis must still refer to the original scheme.

\textbf{Wan--Zhang.} If $V\subseteq\Aff^n$ is cut out by $c\ge1$ equations of degree at most $D$ and has dimension $n-c$, its total compactly supported Betti number satisfies
\begin{equation}
\label{eq:WZ}
 B_c(V,\ell):=\sum_i\dim H_c^i(V_{\overline{\F}_q},\Q_\ell)
 \le\binom{n-1}{c-1}(D+1)^n,
 \qquad\ell\ne\operatorname{char}(\F_q).
\end{equation}
This is Wan--Zhang~\cite[Theorem 1.2.2 and Corollary 4.2.4(ii)]{WZ} in the cited version. It allows singular affine complete intersections.

We also use rationality and the cohomological trace formula for
\[
 Z(V,t)=\exp\left(\sum_{e\ge1}\#V(\F_{q^e})\frac{t^e}{e}\right).
\]
After cancellation, the sum of the numerator and denominator degrees of this rational function is at most $B_c(V,\ell)$. This follows by expressing the zeta function as the alternating product of Frobenius characteristic polynomials on compactly supported cohomology; see also~\cite[Section 1.1]{WZ}. The passage from all-extension counts to a uniform constant below follows the method of~\cite[proof of Proposition 5.2]{Wan}.

\Needspace{15\baselineskip}
\subsection{A uniform weighted-fiber estimate}
\begin{lemma}[Uniform fiber estimate]
\label{lem:fiber}
Let $q$ be prime, $k\ge2$, $s\ge k+2$, and $q>2(k-1)$. Let $c_1,\ldots,c_s$ be positive integers with $\sum_i c_i<q$. For arbitrary $b_1,\ldots,b_{k-1}\in\F_q$, define
\[
 X=\left\{x\in\Aff^s:\ \sum_{i=1}^s c_i x_i^{2j}=b_j,
 \quad 1\le j<k\right\}.
\]
For every $e\ge1$,
\begin{equation}
\label{eq:fiber}
 \left|\#X(\F_{q^e})-q^{e(s-k+1)}\right|
 \le\mathfrak B_{s,k}\,q^{e(s-k+2)/2},\qquad
 \mathfrak B_{s,k}=\binom{s-1}{k-2}(2k-1)^s.
\end{equation}
The coefficient is uniform in the syndrome, the permitted weights, $q$, and $e$, and satisfies $\mathfrak B_{s,k}<(4k)^s$.
\end{lemma}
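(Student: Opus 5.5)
The plan is to reduce the affine count to projective complete intersections, get the error exponent from \eqref{eq:DHK}, and then get the uniform coefficient from \eqref{eq:WZ} via rationality of $Z(X,t)$.

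\textbf{Step 1: the exponent.} Homogenize with a new variable $x_0$:
\[
 F_j=\sum_{i=1}^s c_ix_i^{2j}-b_jx_0^{2j}\qquad(1\le j<k).
\]
This defines $\overline V\subseteq\PP^s$. Its hyperplane section at infinity is
\[
 W=\Bigl\{\sum_i c_ix_i^{2j}=0,\ 1\le j<k\Bigr\}\subseteq\PP^{s-1},
\]
so $\#X(\F_{q^e})=\#\overline V(\F_{q^e})-\#W(\F_{q^e})$.

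The Jacobian of $(F_j)$ in the variables $x_1,\dots,x_s$ has entries $2j\,c_ix_i\,t_i^{\,j-1}$, where $t_i=x_i^2$. The factors $2j$ with $j\le k-1$ are invertible because $q>2(k-1)$. So the Jacobian is a diagonal scaling of a Vandermonde-type matrix in the $t_i$. Its rank drops below $k-1$ exactly when the nonzero $x_i$ take at most $k-2$ distinct squares.

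\emph{$W$ is smooth.} At a singular point of $W$, group the indices by the common value $t$ of $x_i^2$, with grouped weight $C_t=\sum_{x_i^2=t}c_i$. The equations say $\sum_t C_tt^j=0$ for $1\le j<k$, over at most $k-2$ nonzero distinct values $t$. The Vandermonde matrix is then invertible, so every $C_t=0$ in $\F_q$. But each $C_t$ is a nonempty sum of positive $c_i$ with $\sum_i c_i<q$, so $C_t\ne0$. Hence no nonzero $t$ occurs, $x=0$, and $W$ has no singular points. A dimension count then shows $\overline V$ and $W$ are complete intersections of the expected dimensions $d=s-k+1$ and $d-1$.

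\emph{$\overline V$ has $\sigma\le0$.} The $x_0$-derivative of each $F_j$ vanishes on $x_0=0$, since every $F_j$ has degree at least $2$. So $\overline V$ is smooth along $W$. In the affine part, a singular point has at most $m\le k-2$ distinct nonzero squares. These values satisfy $k-1$ equations $\sum_t C_tt^j=b_j$ in $m$ unknowns, with the $C_t$ ranging over finitely many partial sums of the $c_i$. I would show this system has finitely many solutions by a Newton-identity and Vandermonde argument as in Lemma~\ref{lem:RS}: the first $m$ weighted power sums with fixed nonzero weights are quasi-finite. Then $x$ is determined up to signs, and $\sigma\le0$.

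Now apply \eqref{eq:DHK} to both schemes, using $s\ge k+2$ so that $\sigma\le d-2$. The main terms subtract to $q^{ed}$. The error is $O(q^{e(d+1)/2})=O(q^{e(s-k+2)/2})$, with a constant that may depend on the fiber.

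\textbf{Step 2: the uniform constant.} By rationality and the trace formula,
\[
 \#X(\F_{q^e})=\sum_i\epsilon_i\omega_i^e,\qquad \epsilon_i=\pm1,
\]
with the total number of $\omega_i$ at most $B_c(X,\ell)$ after cancellation. Apply \eqref{eq:WZ} with $n=s$, $c=k-1$ and $D=2k-2$ (so $D+1=2k-1$); this gives
\[
 B_c(X,\ell)\le\binom{s-1}{k-2}(2k-1)^s=\mathfrak B_{s,k}.
\]
The Step 1 estimate holds for all $e$. By the standard argument through generating functions (as in \cite[proof of Proposition 5.2]{Wan}), every $\omega_i$ other than a single net copy of $q^d$ then satisfies $|\omega_i|\le q^{(d+1)/2}$. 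This is because a power sum $\sum\epsilon_i\omega_i^e$ that is $O(r^e)$ for all $e$ has poles of modulus at least $r^{-1}$ in its generating series. Bounding each remaining term by $q^{e(s-k+2)/2}$ gives \eqref{eq:fiber}, uniformly in the $b_j$, the $c_i$, $q$ and $e$.

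Finally, $\binom{s-1}{k-2}\le2^{s-1}$, so
\[
 \mathfrak B_{s,k}\le2^{s-1}(2k-1)^s<(4k)^s.
\]

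\textbf{Main obstacle.} The delicate step is the finiteness of the affine singular locus for an arbitrary syndrome, i.e.\ $\sigma\le0$. The quasi-finiteness of weighted power sums must be argued carefully when some grouped weights $C_t$ vanish modulo $q$; this can happen in the affine case, because there the grouped weights are only partial sums. The hypothesis $\sum_ic_i<q$ is exactly what rules out trouble in the smoothness of $W$. For the affine locus I would first try to reduce to the case where all grouped weights are nonzero, by discarding groups with $C_t\equiv0$ and noting that such groups leave their $t$ free only in a set that is still controlled.
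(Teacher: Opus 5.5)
The rest of your outline agrees with the paper's proof. That includes the Vandermonde smoothness of the section at infinity using $0<C_t<q$, Deligne--Hooley--Katz for the exponent, and Wan--Zhang plus rationality of $Z(X,t)$ with a single net copy of $q^d$ for the uniform coefficient. The gap is the step you flag yourself: you never establish $\dim\Sing(\overline V)\le0$ for an arbitrary syndrome. The route you sketch for it does not work as stated. Newton's identities, as in Lemma~\ref{lem:RS}, apply to unweighted power sums. They do not recover the $t_\nu$ from weighted sums $\sum_\nu C_\nu t_\nu^j$ with unequal weights. Your worry that grouped weights may vanish modulo $q$ in the affine part is also mistaken. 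Coordinates with $x_i=0$ belong to no group, and every group weight is a nonempty partial sum of the $c_i$. It is therefore an integer in $(0,q)$, by the same hypothesis $\sum_i c_i<q$ you used at infinity. Your proposed repair thus addresses a case that never arises, while the finiteness claim itself is left unproved.

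The missing idea is the one the paper uses, and you already have its input. You showed that $\overline V$ is smooth along $W=\overline V\cap\{x_0=0\}$. So the closed set $\Sing(\overline V)\subseteq\PP^s$ is disjoint from the hyperplane $\{x_0=0\}$. Every positive-dimensional closed subset of projective space meets every hyperplane, so $\dim\Sing(\overline V)\le0$ for every syndrome, with no analysis of affine points at all.

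Your direct route can also be closed, but by a Jacobian argument rather than Newton's identities. For a fixed grouping with $m\le k-2$ classes, the map $(t_\nu)\mapsto(\sum_\nu C_\nu t_\nu^j)_{1\le j\le m}$ has Jacobian determinant $m!\prod_\nu C_\nu\prod_{\mu<\nu}(t_\nu-t_\mu)$. This is nonzero on the distinct-value locus, so the map is \'etale there and its fibers are finite.

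Your ``dimension count'' should also be made explicit, as in the paper. Each irreducible component of $\overline V$ has dimension at least $d$. It cannot lie in $\{x_0=0\}$, because $\dim W=d-1$. It meets that hyperplane in a subset of $W$ of dimension one less. This forces pure dimension $d$, and it also supplies the hypothesis $\dim X=s-k+1$ needed for the Wan--Zhang bound.
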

\begin{proof}
We separate the geometry, the error exponent, and the uniform coefficient.

\stepheading{1. Geometry of the fiber and its section at infinity.}
Work over $\overline{\F}_q$. Let
\[
 Z_\infty\subseteq\PP^{s-1}:\quad
 \sum_i c_i x_i^{2j}=0\quad(1\le j<k).
\]
After removing nonzero row factors $2j$ and column factors $c_i x_i$, the nonzero columns of its Jacobian have the form
\[
 (1,x_i^2,x_i^4,\ldots,x_i^{2k-4})^{\mathsf T}.
\]
Thus $k-1$ distinct nonzero square values give a nonsingular Vandermonde minor.

Suppose instead that a point of $Z_\infty$ has only $t\le k-2$ distinct nonzero square values $a_1,\ldots,a_t$. Here $t\ge1$ because the point is projective. Group the weights belonging to each square value, obtaining positive integers $w_1,\ldots,w_t$ with $w_\nu<q$. The first $t$ equations imply
\[
 \begin{pmatrix}
 a_1&\cdots&a_t\\
 a_1^2&\cdots&a_t^2\\
 \vdots&&\vdots\\
 a_1^t&\cdots&a_t^t
 \end{pmatrix}
 \begin{pmatrix}w_1\\w_2\\\vdots\\w_t\end{pmatrix}=0.
\]
Its determinant is $\bigl(\prod_\nu a_\nu\bigr)\prod_{\mu<\nu}(a_\nu-a_\mu)\ne0$, forcing every $w_\nu$ to vanish in the field. This contradicts $0<w_\nu<q$. For $k=2$, any nonzero coordinate already gives the required Jacobian rank.

The projective dimension theorem gives a nonempty intersection, and the Jacobian calculation gives dimension exactly $s-k$. Hence $Z_\infty$ is a smooth projective complete intersection. Its dimension is at least two. A positive-dimensional projective complete intersection is geometrically connected, so smoothness makes $Z_\infty$ geometrically integral.

Now homogenize the affine equations:
\[
 Y\subseteq\PP^s:\quad F_j=\sum_i c_i x_i^{2j}-b_jt^{2j}=0,
 \qquad 1\le j<k.
\]
Let $H_\infty=\{t=0\}\subset\PP^s$, so $Y\cap H_\infty=Z_\infty$. Put
\[
 d=s-k+1\ge3.
\]
We first prove directly that every irreducible component of $Y$ has dimension $d$.

Let $Y_0$ be an irreducible component, with its reduced structure. Since $Y$ is cut out by $k-1$ equations in $\PP^s$, the dimension theorem gives $\dim Y_0\ge s-(k-1)=d$. The component cannot lie in $H_\infty$: otherwise it would be contained in $Z_\infty$, whose dimension is only $d-1$. Because $Y_0$ is projective of positive dimension and is not contained in the hyperplane, its hyperplane section is nonempty and
\[
 \dim(Y_0\cap H_\infty)=\dim Y_0-1.
\]
But $Y_0\cap H_\infty\subseteq Z_\infty$, so $\dim Y_0-1\le d-1$. Combining the two inequalities yields
\[
 \boxed{\dim Y_0=d\quad\text{for every irreducible component }Y_0.}
\]
The projective dimension theorem used here is recalled in~\cite[Tag 0B2N]{Stacks}. The $k-1$ equations therefore have the expected codimension. In the regular ambient space they define a complete intersection; equivalently, their local equations form a regular sequence. Its local rings are Cohen--Macaulay~\cite[Tag 00S8]{Stacks}.

We also have $\dim(Y_0\cap H_\infty)=d-1=\dim Z_\infty$. Since $Z_\infty$ is irreducible, the closed subset $Y_0\cap H_\infty$ has the same underlying closed set as $Z_\infty$. Thus every irreducible component of $Y$ contains $Z_\infty$. The Jacobian in the $x_i$ variables has full rank there, so $Y$ is smooth along $Z_\infty$. Two distinct components would meet there, contradicting smoothness. Hence $Y$ has a single irreducible component.

The smooth points along $Z_\infty$ also show that $Y$ is generically reduced. Its Cohen--Macaulay local rings satisfy Serre's condition $S_1$, and generic reducedness gives $R_0$. The Noetherian reducedness criterion $R_0+S_1$~\cite[Tag 031R]{Stacks} makes $Y$ reduced. Thus $Y$ and its nonempty affine open $X=Y\setminus H_\infty$ are geometrically integral. Finally, the singular locus of $Y$ is closed and disjoint from $H_\infty$. A positive-dimensional projective closed set meets every hyperplane, so $\dim\Sing(Y)\le0$.

\stepheading{2. An error exponent over every extension.}
Apply~\eqref{eq:DHK} to $Y$ and its smooth section $Z_\infty$. Subtracting their counts and their projective-space main terms gives, for every $e\ge1$,
\begin{equation}
\label{eq:extensionerror}
 \left|\#X(\F_{q^e})-q^{ed}\right|
 \le(C_Y+C_Z)q^{e(d+1)/2}.
\end{equation}
The constants may depend on the fiber, but not on $e$. Only this exponential rate is needed in the next step.

\stepheading{3. A uniform coefficient from the reduced zeta function.}
The rational function
\[
 R(t)=(1-q^dt)Z(X,t),\qquad R(0)=1,
\]
has logarithmic derivative
\[
 \frac{R'(t)}{R(t)}=\sum_{e\ge1}\bigl(\#X(\F_{q^e})-q^{ed}\bigr)t^{e-1}.
\]
By~\eqref{eq:extensionerror}, this series is holomorphic for $|t|<q^{-(d+1)/2}$. Cancel common factors and write
\[
 R(t)=\prod_\lambda(1-\lambda t)^{m_\lambda},
 \qquad m_\lambda\in\Z\setminus\{0\},
\]
with distinct nonzero $\lambda$. Its logarithmic derivative is
\[
 -\sum_\lambda\frac{m_\lambda\lambda}{1-\lambda t}.
\]
A factor with $|\lambda|>q^{(d+1)/2}$ would give an uncancelled pole inside the disk of holomorphy. Therefore every surviving reciprocal factor satisfies $|\lambda|\le q^{(d+1)/2}$.

Since $d>(d+1)/2$, $R$ has neither a zero nor a pole at $q^{-d}$. Hence multiplication by $1-q^dt$ removed exactly one simple pole of $Z(X,t)$, and the trace formula gives
\[
 \sum_\lambda|m_\lambda|\le B_c(X,\ell)-1.
\]
Comparing logarithmic coefficients now yields
\[
 \left|\#X(\F_{q^e})-q^{ed}\right|=\left|-\sum_\lambda m_\lambda\lambda^e\right|
 \le\bigl(B_c(X,\ell)-1\bigr)q^{e(d+1)/2}.
\]
Finally, $X\subseteq\Aff^s$ is cut out by $k-1$ equations of maximum degree $2(k-1)$ and has dimension $s-(k-1)$. Equation~\eqref{eq:WZ} gives
\[
 B_c(X,\ell)\le\binom{s-1}{k-2}(2k-1)^s=\mathfrak B_{s,k}.
\]
Dropping the harmless subtraction of $1$ proves~\eqref{eq:fiber}. The inequality $\binom{s-1}{k-2}\le2^{s-1}$ gives $\mathfrak B_{s,k}<(4k)^s$.
\end{proof}

\subsection{Prescribed subsets and inclusion probabilities}
There are $(q-1)/2$ available labels, one for each odd coordinate. A binary dense support of size $h$ is an $h$-element subset of $\mathcal Q$.

Let $P_+,P_-\subseteq\mathcal Q$ be disjoint. Labels in $P_+$ are prescribed to lie in the support; labels in $P_-$ are prescribed not to lie in it. All other labels, in $\mathcal Q\setminus(P_+\cup P_-)$, are \emph{free}: the completion may include some and omit the rest, subject to its size and moment constraints. Put $t=|P_+|$ and assume
\[
 t\le r+2,\qquad |P_+\cup P_-|\le2r+2.
\]
For $\mathbf b\in\F_q^{k-1}$, let $\mathcal C(P_+,P_-;\mathbf b)$ be the family of subsets $T\subseteq\mathcal Q$ satisfying
\[
 |T|=h,\qquad P_+\subseteq T,\qquad T\cap P_-=\varnothing,
 \qquad \sum_{a\in T}a^j=b_j\quad(1\le j<k).
\]
Thus exactly $h-t$ further occupied labels must be chosen from the free pool.

We will also need the probability that one or two specified free labels are both included in a uniformly chosen completion. Section~\ref{sec:clean} uses the one-label bound for an unwanted pair with one fixed occupied endpoint, and the two-label bound when both endpoints are free. These probabilities are ratios of counts with one or two extra occupied prescriptions. The limits above must hold \emph{after} those extra prescriptions as well.

\begin{lemma}[Subset counts and inclusion probabilities]
\label{lem:subsets}
Under~\eqref{eq:parameters},
\begin{equation}
\label{eq:subsets}
 \frac12\frac{q^{h-t-k+1}}{2^{h-t}(h-t)!}
 \le|\mathcal C(P_+,P_-;\mathbf b)|
 \le2\frac{q^{h-t-k+1}}{2^{h-t}(h-t)!}.
\end{equation}
In a uniformly chosen member, the probability that one or two specified distinct free labels are occupied is at most, respectively,
\begin{equation}
\label{eq:conditional}
 \frac{8h}{q}\qquad\text{and}\qquad\frac{16h^2}{q^2},
\end{equation}
provided the enlarged prescribed sets satisfy the same size limits.
\end{lemma}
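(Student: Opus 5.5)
Write $s=h-t$ and $\mathbf b'=\mathbf b-(\sum_{a\in P_+}a^j)_j$. Completions correspond to $s$-element subsets $T'\subseteq\mathcal Q\setminus(P_+\cup P_-)$ with power sums $b'_j$. Parametrize labels as $\alpha x^2$ with $x\in\F_q^\times$. Each label has exactly two square roots, and each $s$-subset corresponds to $2^s s!$ ordered tuples $(x_1,\ldots,x_s)$ with distinct nonzero entries whose labels avoid the prescribed set. The moment condition becomes $\sum_i x_i^{2j}=\alpha^{-j}b'_j$ for $1\le j<k$. This is exactly the fiber $X$ of Lemma~\ref{lem:fiber} with all weights $c_i=1$. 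The hypotheses hold: $\sum c_i=s<q$, and Lemma~\ref{lem:parameterbounds} gives $s\ge h-r-3\ge k+2$ even after adding up to two extra prescriptions. With $e=1$, $\#X(\F_q)=q^{s-k+1}(1+\theta)$ with $|\theta|\le(4k)^sq^{-(s-k)/2}<1/8$, again by Lemma~\ref{lem:parameterbounds}.

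Next we subtract the bad tuples: those with some $x_i=0$, with $x_i=\pm x_{i'}$ for some $i\ne i'$ (a repeated label), or with label $\alpha x_i^2\in P_+\cup P_-$. Each bad condition fixes one coordinate, or fixes it up to sign in terms of another. Such a condition gives a subfiber to which we apply Lemma~\ref{lem:fiber} again.

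\begin{itemize}
\item If $x_i=0$, the remaining tuple is a fiber in $s-1$ variables.
\item If $x_i=\pm x_{i'}$, substitution gives $s-1$ variables with one weight equal to $2$. This is why the lemma allows general weights $c_i$.
\item If $x_i$ is one of the at most $2(2r+2)$ square roots of a forbidden label, we move the known term into the syndrome and get $s-1$ variables.
\end{itemize}

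Each subfiber has at most $\tfrac98 q^{s-k}$ points, because $s-1\ge k+2$ is still within the range of Lemma~\ref{lem:parameterbounds}. There are at most $s+2\binom s2+2s(2r+2)\le h^2+2h(2r+2)$ bad conditions. Multiplying gives a total bad count of at most $\tfrac98\,q^{s-k+1}\cdot(h^2+2h(2r+2))/q$. The bound $q>16(h^2+2h(2r+2))$ makes this at most $\tfrac1{8}q^{s-k+1}$. So the good tuple count lies in $[\tfrac34,\tfrac98]\,q^{s-k+1}$. Dividing by $2^ss!$ gives~\eqref{eq:subsets} with room to spare.

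For the inclusion probabilities, the number of completions containing one specified free label $\ell$ is $|\mathcal C(P_+\cup\{\ell\},P_-;\mathbf b)|$. Here $t$ rises by one, so $s$ drops by one, and by hypothesis the enlarged prescription still meets the size limits. We take the upper bound from~\eqref{eq:subsets} for this count and divide by the lower bound for the original count. The ratio is
\[
 4\cdot\frac{q^{-1}\,2^{s}s!}{2^{s-1}(s-1)!}=\frac{8s}{q}\le\frac{8h}{q}.
\]
With two labels the same computation gives $4\cdot 4s(s-1)/q^2\le16h^2/q^2$.

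The main obstacle is the bookkeeping that keeps every auxiliary fiber inside the hypotheses of Lemma~\ref{lem:fiber}. This means checking the weighted fiber $x_i=\pm x_{i'}$, which has weight $2$ and total weight $<q$, and checking that every variable count stays at least $k+2$ after prescriptions and one substitution. It also means confirming that a single uniform constant $\tfrac98$ covers all of them. That uniformity is exactly what the syndrome-uniform Wan--Zhang coefficient provides.
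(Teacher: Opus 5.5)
Your proposal is correct and follows the paper's proof essentially step for step. It uses the same $\alpha x^2$ parametrization with unit weights, the same three forbidden-event substitutions (including the weight-$2$ merge for $x_i=\pm x_{i'}$), and the same union bound over at most $h^2+2h(2r+2)$ subfibers of size at most $\tfrac98 q^{s-k}$ each. It then divides by $2^s s!$ and takes the same upper-over-lower ratio $4\cdot 2^a s!/((s-a)!\,q^a)$ for the inclusion probabilities. The only slip is cosmetic: the admissible tuples need nonzero entries with distinct squares, not merely distinct entries. Your list of bad tuples already enforces this.
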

\begin{proof}
\stepheading{Ordered tuples.}
Put $s=h-t$ and write each residual label as $\alpha x_i^2$. The equations become
\[
 \sum_{i=1}^s x_i^{2j}
 =\alpha^{-j}\left(b_j-\sum_{a\in P_+}a^j\right),\qquad 1\le j<k.
\]
Lemma~\ref{lem:fiber} counts all tuples. We must exclude zero coordinates, repeated squares, and labels in $P_+\cup P_-$. Each forbidden event reduces to a fiber covered by the same lemma:
\begin{center}
\small
\begin{tabular}{@{}p{0.30\textwidth}p{0.61\textwidth}@{}}
\toprule
Forbidden event & Corresponding substitution\\
\midrule
$x_i=0$ & Delete the variable and its weight.\\[1mm]
$x_i=x_j$ or $x_i=-x_j$ & Merge the variables, adding their weights; the even powers give the same equation for either sign.\\[1mm]
$\alpha x_i^2=a\in P_+\cup P_-$ & For each of its two square roots, fix $x_i$, delete it, and subtract its contribution from the syndrome.\\
\bottomrule
\end{tabular}
\end{center}
All resulting weights remain positive, with total at most $h<q$.

\stepheading{Uniform error control.}
Every prescribed family counted here---including a numerator family with one or two additional occupied labels---has at most $r+2$ occupied prescriptions in total. Its unrestricted tuple therefore has at least $h-r-2$ variables. Each individual forbidden substitution deletes one variable or merges two into one, decreasing that count by one. Thus every fiber used in the proof has
\[
 h-r-3\le s'\le h.
\]
In particular, adding two labels is permitted only when the original $t\le r$ (and the total-prescription limit also holds), not when $t=r+2$. We bound single forbidden events for a union bound; intersections of several such events need not be counted.
Recall that
\[
 h=\frac{3k}{2},\qquad (100k)^{12}<q<2(100k)^{12}.
\]
Here $r+3=k/6$, so $s'\ge h-r-3=4k/3$. In particular, $s'-k\ge k/3$ and $s'\ge k+2$, as recorded in Lemma~\ref{lem:parameterbounds}. To see the error exponent explicitly, use
\[
 (4k)^{s'}\le(100k)^{3k/2},
\]
and
\[
 q^{-(s'-k)/2}
 <(100k)^{-12(s'-k)/2}
 \le(100k)^{-2k}.
\]
The exponent $2k$ is $12(k/3)/2$. Therefore the relative error in Lemma~\ref{lem:fiber} satisfies
\[
 \mathfrak B_{s',k}q^{-(s'-k)/2}
 <(100k)^{3k/2-2k}
 =(100k)^{-k/2}<\frac18.
\]
Thus the unrestricted count lies between $(7/8)q^{s-k+1}$ and $(9/8)q^{s-k+1}$, while each forbidden substitution contributes at most $(9/8)q^{s-k}$.

\stepheading{Remove forbidden tuples and forget their order.}
Since $s=h-|P_+|\le h$ and $|P_+\cup P_-|\le2r+2$, the number of forbidden substitutions is at most
\[
 \begin{aligned}
 s+2\binom s2+2s|P_+\cup P_-|
 &=s^2+2s|P_+\cup P_-|\\
 &\le h^2+2h(2r+2).
 \end{aligned}
\]
Each substitution contributes at most $\frac98q^{s-k}$ tuples. Lemma~\ref{lem:parameterbounds} gives $q>16\bigl(h^2+2h(2r+2)\bigr)$, so the union of the forbidden events contains at most
\[
 \begin{aligned}
 \frac98\bigl(h^2+2h(2r+2)\bigr)q^{s-k}
 &<\frac9{128}q^{s-k+1}\\
 &<\frac18q^{s-k+1}
 \end{aligned}
\]
tuples. Subtracting this from the unrestricted lower bound $\frac78q^{s-k+1}$ leaves more than $\frac34q^{s-k+1}$ admissible tuples. The unrestricted upper bound is $\frac98q^{s-k+1}$; in particular, the admissible count lies between $\frac12q^{s-k+1}$ and $2q^{s-k+1}$, as required.

Each residual subset has exactly $s!$ orderings and two independent square-root choices for every label, hence corresponds to exactly $2^ss!$ admissible tuples. Division proves~\eqref{eq:subsets}.

\stepheading{Inclusion probabilities.}
Let $a=1$ or $2$, and fix distinct labels $b_1',\ldots,b_a'$ in the free pool such that the enlarged prescription remains within the stated limits. For a uniformly chosen $T\in\mathcal C(P_+,P_-;\mathbf b)$,
\[
 \Pr[\{b_1',\ldots,b_a'\}\subseteq T]
 =\frac{|\mathcal C(P_+\cup\{b_1',\ldots,b_a'\},P_-;\mathbf b)|}
 {|\mathcal C(P_+,P_-;\mathbf b)|}.
\]
\Needspace{8\baselineskip}
The numerator has $s-a$ residual labels. Dividing its upper bound by the denominator's lower bound gives
\[
 4\,2^a\frac{s!}{(s-a)!}\,q^{-a},
\]
which is at most $8h/q$ for $a=1$ and $16h^2/q^2$ for $a=2$. No independence of inclusion events is assumed.
\end{proof}

\Needspace{17\baselineskip}
\section{Clean completions for Boolean assignments}
\label{sec:clean}
The checker comparison of Section~\ref{sec:roadmap} is automatically sound for every shortest vector. To obtain completeness, we need one shortest vector for a YES witness on which all correlation upper bounds are exact. This section proves that such clean completions always exist.

For a binary vector $y\in\Z^{q-1}$ and a displacement $D\in\Z/(q-1)\Z$, the cyclic correlation
\[
 C_y(D)=\sum_{a\bmod(q-1)}y_a y_{a+D}
\]
counts occupied directed pairs $(a,a+D)$: both entries of the pair equal $1$.

\par\noindent\begin{minipage}{\linewidth}
\begin{definition*}[Clean completion]
Fix $\xi\in\{0,1\}^r$. A \emph{clean completion} of $\xi$ is a binary vector $y\in\Z^{q-1}$ with
\[
 y_0=1,\qquad y_a=0\ \text{for every even }a\ne0,
 \qquad\sum_{a\text{ odd}}y_a=h,
\]
whose prescribed witness entries satisfy
\[
 y_{d_i}=\xi_i,\qquad y_{-d_i}=0\quad(1\le i\le r),
\]
and whose checked correlations satisfy
\begin{equation}
\label{eq:clean}
 \begin{gathered}
 C_y(2d_i)=0\quad(1\le i\le r),\\
 C_y(d_j-d_i)=\xi_i\xi_j,\qquad
 C_y(d_i+d_j)=0\quad(1\le i<j\le r).
 \end{gathered}
\end{equation}
\end{definition*}
\end{minipage}\par

Since the prescribed occupied pair $(d_i,d_j)$ is already present whenever $\xi_i=\xi_j=1$, the identities say that there are \emph{no additional occupied pairs} at any displacement in $\mathcal D$. Additional odd support is allowed, but not if it changes a checked correlation. The definition concerns the support and its correlations; the next proposition also imposes the moment equations.

Figure~\ref{fig:clean} displays the prescribed positions and the contributions that a clean completion must preserve.

\begin{figure}[H]
\centering
\begin{tikzpicture}[x=1cm,y=1cm,>=Stealth]
\foreach \x/\pos/\val in {-4/{-d_j}/{0},-2/{-d_i}/{0},0/{0}/{1},2/{d_i}/{\xi_i},4/{d_j}/{\xi_j}}{
  \node at (\x,0.43) {$\pos$};
  \node[draw,rounded corners=1pt,minimum width=0.65cm,minimum height=0.52cm] at (\x,0) {$\val$};
}
\node[font=\scriptsize,text=MidnightBlue] at (0,-0.52) {anchor};
\draw[->,thick,MidnightBlue] (2,0.78) -- (4,0.78)
 node[midway,above,font=\small] {$d_j-d_i$};
\draw[->,gray] (-2,-0.84) -- (2,-0.84)
 node[midway,below,font=\small] {$2d_i$};
\draw[->,gray] (-2,-1.65) -- (4,-1.65)
 node[midway,below,font=\small] {$d_i+d_j$};
\end{tikzpicture}
\medskip

\small
\begin{tabular}{@{}lll@{}}
\toprule
Checked displacement & Witness-position directed pairs & Total prescribed contribution\\
\midrule
$2d_i$ & $(-d_i,d_i)$ & $0$\\
$d_j-d_i$ & $(d_i,d_j),\ (-d_j,-d_i)$ & $\xi_i\xi_j$\\
$d_i+d_j$ & $(-d_i,d_j),\ (-d_j,d_i)$ & $0$\\
\bottomrule
\end{tabular}
\caption{Prescribed entries for $i<j$, drawn schematically and not to scale. The table lists all directed pairs among the witness positions that occur at the checked displacements. Only $(d_i,d_j)$ can be occupied. Free odd positions are omitted; the counting argument excludes any additional pair they might create at a checked displacement.}
\label{fig:clean}
\end{figure}

\begin{proposition}[Parity-restricted clean completion]
\label{prop:clean}
Every $\xi\in\{0,1\}^r$ has a clean completion $y\in\mathcal I$. In particular, $\norm{y}_2^2=h+1$, so $y$ lies in the minimum shell and has anchor at $0$.
\end{proposition}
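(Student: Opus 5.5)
We count completions uniformly with Lemma~\ref{lem:subsets} and then remove the bad ones by a union bound over checked displacements. Normalize the anchor at position $0$, with label $\alpha^0=1$. A clean completion is then determined by its odd support, an $h$-subset $T\subseteq\mathcal Q$. Membership in $\mathcal I$ requires the moment equations $\sum_{a\in T}a^j=-1$ for $1\le j<k$. The parity condition holds automatically, because the parity weights are $(u,v)=(1,h)$ and $h\cdot1-h=0$.

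The prescriptions are as follows. Put $P_+=\{\alpha^{d_i}:\xi_i=1\}$ and $P_-=\{\alpha^{d_i}:\xi_i=0\}\cup\{\alpha^{-d_i}\}_i$. These are odd positions, all distinct by Lemma~\ref{lem:offsets}. Then $|P_+|\le r$ and $|P_+\cup P_-|\le 2r$, so the size limits of Lemma~\ref{lem:subsets} hold, even after adding two more occupied labels. With $\mathbf b=(-1,\ldots,-1)$, the family $\mathcal C(P_+,P_-;\mathbf b)$ is nonempty by~\eqref{eq:subsets}. Every member gives a shortest vector of $\mathcal I$ with the correct witness entries, by Proposition~\ref{prop:shell}. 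Choose $T$ uniformly from this family. It then suffices to show that the probability that $T$ creates some extra occupied pair at a displacement $D\in\mathcal D$ is less than $1$.

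Pairs with an even endpoint involve only the anchor, and every $D\in\mathcal D$ is even. So any pair at distance $D$ has both endpoints of the same parity. Such a pair touches the anchor only if $\pm D$ is also an even position, which we never occupy. Hence only odd--odd pairs $(a,a+D)$ matter. Pairs with both endpoints among the prescribed positions $\pm d_i$ are fully handled by the table in Figure~\ref{fig:clean}: Lemma~\ref{lem:offsets}, through the distinctness of $\mathcal D$, shows that the only prescribed occupied pair at a checked displacement is $(d_i,d_j)$ at $d_j-d_i$. This pair is exactly the intended one, giving the value $\xi_i\xi_j$.

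The remaining bad pairs fall into two kinds. In the first, one endpoint is a prescribed occupied label and the other is free. There are at most $r\cdot|\mathcal D|\cdot 2\le 2r^3$ such events, each of probability at most $8h/q$ by~\eqref{eq:conditional}. In the second, both endpoints are free. There are at most $|\mathcal D|\cdot (q-1)/2$ such pairs, each of probability at most $16h^2/q^2$. Pairs in which a prescribed-empty label participates have probability zero. The total bad probability is therefore at most
\[
\frac{16r^3h}{q}+\frac{8r^2h^2}{q}.
\]
This is less than $1$ by the bound $q>64r^2h(r+h)$ in Lemma~\ref{lem:parameterbounds}.

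Any surviving $T$ then satisfies~\eqref{eq:clean}. The main obstacle is the bookkeeping of this case split. One must make sure that every unwanted pair at a checked displacement falls into one of the counted classes, including both orientations $(a,a+D)$ and $(a-D,a)$. One must also check that each conditioning step keeps the prescribed sets within the limits $t\le r+2$ and $|P_+\cup P_-|\le 2r+2$ that Lemma~\ref{lem:subsets} requires. Once that is done, a nonzero probability of cleanliness gives existence. Finally, $\norm{y}_2^2=h+1$ holds by construction.
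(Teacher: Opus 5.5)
Your proposal is correct and follows the paper's proof essentially step for step: the same prescriptions $P_\pm$ with syndrome $(-1,\ldots,-1)$, a uniform choice of $T$ from the nonempty family, the same split of unwanted pairs into occupied--free and free--free cases controlled by the one- and two-label inclusion bounds of Lemma~\ref{lem:subsets}, and the same final appeal to $q>64r^2h(r+h)$. The only difference is cosmetic. You restrict the free--free candidates to the $(q-1)/2$ odd starting positions, where the paper uses the cruder count $q$. This gives the slightly smaller total $16r^3h/q+8r^2h^2/q$ and changes nothing else.
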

\begin{proof}
Fix $\xi\in\{0,1\}^r$. We first construct the family to which the inclusion bounds will be applied.

\stepheading{1. Specify the family of supports.}
The positive witness position $d_i$ is prescribed to be occupied exactly when $\xi_i=1$, and every negative witness position $-d_i$ is prescribed to be unoccupied. In terms of field labels, set
\[
 \begin{aligned}
 P_+&=\{\alpha^{d_i}:\xi_i=1\},\\
 P_-&=\{\alpha^{-d_i}:1\le i\le r\}
       \cup\{\alpha^{d_i}:\xi_i=0\}.
 \end{aligned}
\]
All these labels lie in $\mathcal Q$, because their indices are odd. Lemma~\ref{lem:offsets} makes the $2r$ labels distinct, so $P_+\cap P_-=\varnothing$ and
\[
 |P_+|=\norm{\xi}_1\le r,\qquad |P_+\cup P_-|=2r.
\]
Let $\mathcal C_\xi$ denote the family $\mathcal C(P_+,P_-;(-1,\ldots,-1))$ of Section~\ref{sec:count}. Thus a member $T\subseteq\mathcal Q$ satisfies
\[
 |T|=h,\quad P_+\subseteq T,\quad T\cap P_-=\varnothing,
 \qquad \sum_{b\in T}b^j=-1\quad(1\le j<k).
\]
The prescription bounds of Lemma~\ref{lem:subsets} hold. Its positive lower bound implies that $\mathcal C_\xi$ is nonempty.

\stepheading{2. Each support already gives a minimum vector.}
For each $T\in\mathcal C_\xi$, define the binary vector $y(T)\in\Z^{q-1}$ by
\[
 y_a(T)=
 \begin{cases}
  1,&a=0,\\
  1,&a\text{ is odd and }\alpha^a\in T,\\
  0,&\text{otherwise}.
 \end{cases}
\]
The labeling map is a bijection, so there are exactly $h$ occupied odd positions. The prescriptions give $y_{d_i}=\xi_i$ and $y_{-d_i}=0$. Its parity sums are $(u(y),v(y))=(1,h)$, which satisfy the parity congruence. Its moments vanish because
\[
 M_j(y)=1+\sum_{b\in T}b^j=1-1=0\quad\text{in }\F_q
 \qquad(1\le j<k).
\]
Hence every $T\in\mathcal C_\xi$ gives $y(T)\in\mathcal I$ with
\[
 \norm{y(T)}_2^2=1+|T|=h+1.
\]
Proposition~\ref{prop:shell}'s lower bound makes all these vectors minimum vectors. What remains is to enforce the correlation identities~\eqref{eq:clean}.

\stepheading{3. Bound the probability of an unwanted pair.}
Choose $T$ uniformly from the finite nonempty family $\mathcal C_\xi$, and write $y=y(T)$. For $D\in\mathcal D$, call a directed pair $(a,a+D)$ \emph{unwanted} if it is occupied and is not the prescribed pair $(d_i,d_j)$ with $D=d_j-d_i$, $i<j$, and $\xi_i=\xi_j=1$. All indices here are modulo $q-1$.

\emph{Two prescribed occupied endpoints.}
The only prescribed occupied positions are $0$ and the $d_i$ with $\xi_i=1$. Every $D\in\mathcal D$ is even. The anchor at $0$ cannot form an occupied pair at such a displacement, because every other even position is zero. If $i<j$ and both $d_i$ and $d_j$ are prescribed occupied, then the directed pair $(d_i,d_j)$ occurs at the checked displacement $d_j-d_i$ and is intended. No unwanted pair is forced by two prescribed occupied endpoints.

Hence every remaining unwanted pair has at least one free odd endpoint. We partition these remaining pairs into two disjoint cases: exactly one endpoint is prescribed occupied and the other is free, or both endpoints are free. An endpoint prescribed to be zero can never contribute. The one- and two-label bounds in Lemma~\ref{lem:subsets} apply because the original family has at most $r$ occupied and exactly $2r$ total prescribed labels. Requiring one more free label to be occupied gives at most $r+1$ and $2r+1$, respectively; requiring two gives at most $r+2$ and $2r+2$. Both enlarged prescriptions meet the lemma's limits.

\Needspace{8\baselineskip}
\emph{One prescribed occupied endpoint and one free endpoint.}
Fix $D\in\mathcal D$. For a prescribed occupied odd position $d_i$, the only possible partners at displacement $D$ are $d_i+D$ and $d_i-D$, depending on which endpoint is first. There are at most $r$ such $d_i$, hence at most $2r$ candidate pairs of this type. A free partner has inclusion probability at most $8h/q$. The union bound therefore gives
\[
 \Pr[\text{an unwanted pair of this type at }D]
 \le 2r\,\frac{8h}{q}.
\]

\Needspace{7\baselineskip}
\emph{Two free endpoints.}
For each starting position $a$, there is just one directed pair $(a,a+D)$. There are therefore at most $q-1<q$ candidates with both endpoints free. The endpoints, and hence their field labels, are distinct because $D\ne0$. By the two-label bound their joint inclusion probability is at most $16h^2/q^2$. Thus
\[
 \Pr[\text{an unwanted pair with two free endpoints at }D]
 \le q\,\frac{16h^2}{q^2}.
\]
We use the joint inclusion estimate, not independence of the two events.

\stepheading{4. Choose one clean member of the family.}
There are $|\mathcal D|=r^2$ checked displacements. Applying the union bound to the two types above and to all these displacements gives
\begin{equation}
\label{eq:cleanprob}
 \begin{aligned}
 \Pr[\text{some unwanted pair}]
 &\le r^2\left(2r\frac{8h}{q}+q\frac{16h^2}{q^2}\right)\\
 &=\frac{16r^2h(r+h)}q<\frac14.
 \end{aligned}
\end{equation}
The last inequality follows from $q>64r^2h(r+h)$ in Lemma~\ref{lem:parameterbounds}. The probability of having no unwanted pair is therefore greater than $3/4$, in particular positive. At least one $T\in\mathcal C_\xi$ has this property.

For that support, the intended pair $(d_i,d_j)$ is present exactly when $\xi_i\xi_j=1$, and there are no other occupied pairs at $d_j-d_i$. At $2d_i$ and $d_i+d_j$ there are no intended occupied pairs and therefore no occupied pairs at all. Consequently
\[
 C_y(2d_i)=0,\qquad
 C_y(d_j-d_i)=\xi_i\xi_j,\qquad C_y(d_i+d_j)=0,
\]
which are precisely~\eqref{eq:clean}. The vector $y(T)$ already belongs to $\mathcal I$ and has squared norm $h+1$ by Step~2. It is the required clean completion.
\end{proof}

\section{A cyclic checker for the cover cost}
\label{sec:checker}
We now formalize the decoder and correlation comparisons from Section~\ref{sec:roadmap}. Recall that
\[
 C_y(d)=\sum_{a\bmod(q-1)}y_a y_{a+d}
\]
is defined for every integral $y\in\Z^{q-1}$. Reindexing gives $C_{S^t y}(d)=C_y(d)$, and $C_{-y}(d)=C_y(d)$, so correlations are unchanged by rotation or an overall sign reversal.

\begin{lemma}[Correlation inequalities]
\label{lem:corr}
Let $y$ be a minimum vector normalized so that it is binary and its anchor is at $0$. Then
\[
 C_y(d_i)=y_{d_i}+y_{-d_i},\qquad
 C_y(2d_i)\ge y_{d_i}y_{-d_i},
\]
and, for $i<j$,
\[
 C_y(d_j-d_i)+C_y(d_i+d_j)
 \ge (y_{d_i}+y_{-d_i})(y_{d_j}+y_{-d_j}).
\]
\end{lemma}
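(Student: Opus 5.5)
The plan is to work directly from the normalized shape supplied by Proposition~\ref{prop:shell}. The vector $y$ is binary, its only occupied even position is the anchor at $0$, and every other occupied position is odd. Since all entries are nonnegative, each correlation $C_y(D)$ is a sum of nonnegative terms $y_ay_{a+D}$. We may therefore bound $C_y(D)$ from below by any sub-collection of its terms, as long as the chosen directed pairs $(a,a+D)$ are distinct.

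\textbf{The linear identity.} Since $d_i$ is odd, every pair $(a,a+d_i)$ has exactly one even endpoint. A nonzero term needs that even endpoint occupied, so it must equal $0$. That leaves only the pairs $(0,d_i)$ and $(-d_i,0)$. By Lemma~\ref{lem:offsets} these are distinct directed pairs, since $d_i\not\equiv-d_i$. Hence $C_y(d_i)=y_{d_i}+y_{-d_i}$.

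\textbf{The bound at $2d_i$.} The directed pair $(-d_i,d_i)$ has displacement $2d_i$. Its term $y_{-d_i}y_{d_i}$ is one summand of $C_y(2d_i)$, and all other summands are nonnegative. This gives $C_y(2d_i)\ge y_{d_i}y_{-d_i}$.

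\textbf{The cross bound for $i<j$.} First expand the right side into four products. Then assign them to correlations as follows:
\begin{itemize}
\item $(d_i,d_j)$ and $(-d_j,-d_i)$ at displacement $d_j-d_i$;
\item $(-d_i,d_j)$ and $(-d_j,d_i)$ at displacement $d_i+d_j$.
\end{itemize}
The real work is to check that the two pairs inside each correlation are distinct directed pairs. Otherwise a single summand would be counted twice.

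For $d_j-d_i$, the pairs $(d_i,d_j)$ and $(-d_j,-d_i)$ coincide only if $d_i\equiv-d_j$. Lemma~\ref{lem:offsets} excludes this, because $0<d_i+d_j<q-1$.

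For $d_i+d_j$, the pairs $(-d_i,d_j)$ and $(-d_j,d_i)$ coincide only if $d_i\equiv d_j$. This is excluded because the positions $\pm d_i$ are distinct.

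Adding the two correlations then gives the required bound. We do not need the displacements $d_j-d_i$ and $d_i+d_j$ to differ from each other here: they are separate correlations, and each is bounded on its own.

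I expect no genuine obstacle in this lemma. The only delicate point is the distinctness of directed pairs modulo $q-1$, and the non-wrap-around statements of Lemma~\ref{lem:offsets} settle it.
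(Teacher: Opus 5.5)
Your proposal is correct and follows the paper's proof essentially step for step. The steps match: the parity argument makes the anchor the only possible even endpoint at the odd displacement $d_i$, the single summand starting at $-d_i$ gives the bound at $2d_i$, and the four products are assigned to $C_y(d_j-d_i)$ and $C_y(d_i+d_j)$ with distinctness from Lemma~\ref{lem:offsets}. The only difference is that you name the exact congruences needed ($d_i\not\equiv -d_j$ and $d_i\not\equiv d_j$), which the paper compresses into one sentence.
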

\begin{proof}
The anchor is the only occupied even position, and every $d_i$ is odd.

\textbf{Reading the two witness positions.} A nonzero term $y_a y_{a+d_i}$ has one even endpoint, which must be $0$. The only possibilities are $a=0$ and $a=-d_i$, contributing $y_{d_i}$ and $y_{-d_i}$. Thus the first identity is exact: additional support cannot change it.

\textbf{A product within one witness pair.} The term of $C_y(2d_i)$ starting at $-d_i$ is $y_{-d_i}y_{d_i}$. All other terms are nonnegative, proving the second inequality.

\textbf{Products between two witness pairs.} Expand
\[
 (y_{d_i}+y_{-d_i})(y_{d_j}+y_{-d_j}).
\]
Each of its four terms appears in one of the following correlations:
\begin{center}
\small
\begin{tabular}{@{}lll@{}}
\toprule
Product & Directed endpoints & Correlation\\
\midrule
$y_{d_i}y_{d_j}$ & $(d_i,d_j)$ & $C_y(d_j-d_i)$\\
$y_{-d_i}y_{-d_j}$ & $(-d_j,-d_i)$ & $C_y(d_j-d_i)$\\
$y_{-d_i}y_{d_j}$ & $(-d_i,d_j)$ & $C_y(d_i+d_j)$\\
$y_{d_i}y_{-d_j}$ & $(-d_j,d_i)$ & $C_y(d_i+d_j)$\\
\bottomrule
\end{tabular}
\end{center}
For example, $-d_j+(d_j-d_i)=-d_i$, explaining the orientation of the second row. Lemma~\ref{lem:offsets} makes the designated positions distinct, so the two terms in each correlation are different summands. All remaining terms are nonnegative, proving the last inequality.
\end{proof}

\subsection{From the repair polynomial to the checker}
For a minimum vector normalized as in Lemma~\ref{lem:corr}, define the decoded vector
\[
 z_i:=y_{d_i}+y_{-d_i}=C_y(d_i)\in\{0,1,2\},\qquad 1\le i\le r.
\]
This is precisely the decoder motivated in Section~\ref{sec:roadmap}. Since $z\ge0$, we may use the quadratic expansion of $\Psi_A$ recorded after Lemma~\ref{lem:repair}. The binary entries $y_{d_i},y_{-d_i}$ satisfy
\[
 z_i^2=z_i+2y_{d_i}y_{-d_i}.
\]
Substitution gives
\[
\boxed{\begin{aligned}
 \Psi_A(z)={}&m+\sum_i(1-|S_i|)z_i
    +2\sum_i|S_i|y_{d_i}y_{-d_i}\\
 &+2\sum_{i<j}|S_i\cap S_j|z_i z_j.
\end{aligned}}
\]
Lemma~\ref{lem:corr} supplies exactly the three replacements
\[
 \begin{aligned}
 z_i&=C_y(d_i),\\
 y_{d_i}y_{-d_i}&\le C_y(2d_i),\\
 z_i z_j&\le C_y(d_j-d_i)+C_y(d_i+d_j).
 \end{aligned}
\]
The first is exact, which is essential because $1-|S_i|$ may be negative. The other two are upper bounds multiplied by nonnegative coefficients. This gives the following anchor-free, shift-invariant checker.

\par\noindent\begin{minipage}{\linewidth}
\begin{definition*}[Cyclic cover checker]
For every $y\in\Z^{q-1}$, define
\begin{equation}
\label{eq:checker}
\begin{aligned}
 E_A(y)={}&m+\sum_{i=1}^r(1-|S_i|)C_y(d_i)
       +2\sum_{i=1}^r |S_i|\,C_y(2d_i)\\
       &+2\sum_{1\le i<j\le r}|S_i\cap S_j|
          \bigl(C_y(d_j-d_i)+C_y(d_i+d_j)\bigr).
\end{aligned}
\end{equation}
\end{definition*}
\end{minipage}\par

This integer-valued expression is invariant under cyclic shifts and overall sign. Proposition~\ref{prop:clean} shows that the minimum shell is nonempty, so define its minimum checker value by
\[
 \mu_E(A):=\min\{E_A(y):y\in\mathcal I,\ \norm{y}_2^2=h+1\}.
\]
The next proposition is the promised intermediate reduction: Gap Exact Set Cover is converted into a gap for $E_A$ restricted to shortest vectors.

\begin{proposition}[Checker gap on the minimum shell]
\label{prop:checkergap}
On promised source inputs,
\[
 \text{YES}\Longrightarrow\mu_E(A)\le\tau,
 \qquad
 \text{NO}\Longrightarrow\mu_E(A)>\tfrac32\tau.
\]
In particular, in the NO case every minimum vector satisfies $E_A(y)\ge\tau+1$.
\end{proposition}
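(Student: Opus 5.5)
The plan splits into the two implications, both resting on the shift- and sign-invariance of $E_A$ and on the shape theorem Proposition~\ref{prop:shell}.

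\textbf{NO direction.} Let $y\in\mathcal I$ with $\norm{y}_2^2=h+1$. By Proposition~\ref{prop:shell}, after an overall sign change and a cyclic rotation, $y$ is binary with a unique occupied even position. A rotation by an even amount keeps the parity classes fixed. A rotation by an odd amount swaps them, so the $(h,1)$ and $(1,h)$ weight patterns are exchanged. In either case we may move the unique singleton position to $0$, so that $0$ is the only occupied even coordinate. The correlations $C_y(d)$, and hence $E_A(y)$, do not change under these normalizations, since $C_{S^ty}=C_y$ and $C_{-y}=C_y$.

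For the normalized vector, define $z_i=y_{d_i}+y_{-d_i}\in\{0,1,2\}$. Lemma~\ref{lem:corr} gives three facts: $C_y(d_i)=z_i$ exactly, $C_y(2d_i)\ge y_{d_i}y_{-d_i}$, and $C_y(d_j-d_i)+C_y(d_i+d_j)\ge z_iz_j$. I substitute these into the boxed rewritten form of $\Psi_A(z)$. That form is valid for $z\ge0$ because $y_{d_i},y_{-d_i}$ are binary, so $z_i^2=z_i+2y_{d_i}y_{-d_i}$. The coefficient $1-|S_i|$, which may be negative, multiplies an exact identity. The coefficients $2|S_i|$ and $2|S_i\cap S_j|$ are nonnegative and multiply the inequalities. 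Together these give $E_A(y)\ge\Psi_A(z)$.

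Lemma~\ref{lem:repair} then gives $\Psi_A(z)\ge\OPT(A)>\tfrac32\tau$. Hence $\mu_E(A)>\tfrac32\tau$. Since $E_A$ is integer-valued and $\tfrac32\tau\ge\tau$, this forces $E_A(y)\ge\tau+1$.

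\textbf{YES direction.} Take an exact-cover witness $\xi\in\{0,1\}^r$ with $A\xi=\one_m$ and $\norm{\xi}_1\le\tau$. Proposition~\ref{prop:clean} supplies a clean completion $y\in\mathcal I$ with $\norm{y}_2^2=h+1$. It has anchor at $0$, $y_{d_i}=\xi_i$, $y_{-d_i}=0$, and correlations given by~\eqref{eq:clean}.

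I read off each term of $E_A(y)$:
\begin{itemize}
\item $C_y(d_i)=\xi_i$, by Lemma~\ref{lem:corr}.
\item $C_y(2d_i)=0=y_{d_i}y_{-d_i}$.
\item $C_y(d_j-d_i)+C_y(d_i+d_j)=\xi_i\xi_j=z_iz_j$, because $y_{-d_i}=0$ gives $z=\xi$.
\end{itemize}
So every inequality used in the NO direction is an equality here, and $E_A(y)=\Psi_A(\xi)$. Since $A\xi=\one_m$, we get $\Psi_A(\xi)=\norm{\xi}_1+0\le\tau$. Therefore $\mu_E(A)\le\tau$.

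\textbf{Main obstacle.} I expect the only delicate step to be the normalization in the NO direction: the anchor may lie in either parity class. The point to verify is that an odd rotation maps the $(h,1)$ case onto the $(1,h)$ case while preserving every correlation. This lets Lemma~\ref{lem:corr}, which is stated for an even anchor at $0$, apply to all minimum vectors. The rest reduces to bookkeeping of coefficient signs.
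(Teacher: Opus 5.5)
Your proposal is correct and follows the paper's proof essentially step for step. For soundness you normalize by rotation and sign using the invariance of $E_A$, decode $z$, and combine Lemma~\ref{lem:corr} (with the sign-of-coefficient bookkeeping) and Lemma~\ref{lem:repair}; for completeness you use the clean completion of Proposition~\ref{prop:clean} to make every correlation replacement an equality, and your remark that an odd rotation swaps the $(h,1)$ and $(1,h)$ patterns just spells out the normalization the paper states in one line.
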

\begin{proof}
\textbf{Soundness for every minimum vector.}
Rotate and, if necessary, negate $y$ so that its anchor is at $0$ and it is binary. This leaves $E_A(y)$ unchanged. Decode $z_i=y_{d_i}+y_{-d_i}$. The displayed repair-cost formula and Lemma~\ref{lem:corr} give
\begin{equation}
\label{eq:checkerlower}
 E_A(y)\ge\Psi_A(z)\ge\OPT(A),
\end{equation}
where the last step is Lemma~\ref{lem:repair}. This includes the possibility $z_i=2$; no Booleanity assumption on the decoded vector is needed. On a source NO instance,
\[
 E_A(y)\ge\OPT(A)>\tfrac32\tau
\]
for every minimum vector. Since $E_A(y)$ is an integer, it is also at least $\tau+1$.

\textbf{Completeness for one YES vector.}
Choose $\xi\in\{0,1\}^r$ with $A\xi=\one_m$ and $\norm{\xi}_1\le\tau$. Proposition~\ref{prop:clean} supplies a clean minimum vector with $y_{d_i}=\xi_i$ and $y_{-d_i}=0$, hence its decoder is $z=\xi$. Cleanliness gives
\[
 C_y(d_i)=\xi_i,\quad C_y(2d_i)=0,\quad
 C_y(d_j-d_i)=\xi_i\xi_j,\quad C_y(d_i+d_j)=0.
\]
Thus every correlation replacement is an equality, and
\[
 E_A(y)=\Psi_A(\xi)
 =\norm{\xi}_1+\norm{A\xi-\one_m}_2^2
 =\norm{\xi}_1\le\tau.
\]
No claim that $\xi$ attains $\OPT(A)$ is needed.
\end{proof}

Proposition~\ref{prop:checkergap} controls only the minimum shell. No checker lower bound is asserted for longer vectors; Section~\ref{sec:conversion} excludes all such vectors directly from the final SVP threshold using the large scalar part of the multiplier.

\section{From checker values to Euclidean lengths}
\label{sec:conversion}
Proposition~\ref{prop:checkergap} gives a gap in the checker value on the minimum shell. We now convert that scalar quadratic statistic into the ordinary Euclidean norm of a new ideal. The conversion has two steps: first represent the checker by a symmetric circulant operator $K$; then add a large scalar multiplier and apply the resulting ring element to the ideal.

\subsection{The checker as a symmetric circulant operator}
Write
\[
 E_A(y)=m+\sum_{d\in\Z/(q-1)\Z}\beta_d C_y(d),\qquad\beta_d\in\Z,
\]
where $\beta_d$ is the coefficient of the listed displacement in~\eqref{eq:checker} and is zero at every other residue. We list one orientation only; $C_y(d)=C_y(-d)$. Every displacement sum in this section is over $\Z/(q-1)\Z$.

Let $S$ be the $(q-1)\times(q-1)$ cyclic shift matrix. Since
\[
 C_y(d)=y^{\mathsf T}S^dy
 =\frac12 y^{\mathsf T}(S^d+S^{-d})y,
\]
the nonconstant part of $E_A$ is already a symmetric circulant quadratic form. The only obstruction is the constant $m$. On the minimum shell, however,
\[
 m=\frac{m}{h+1}\norm{y}_2^2.
\]
Thus the whole checker has a homogeneous quadratic representation there. Multiplying by $2(h+1)$ clears the denominator and motivates the definition
\begin{equation}
\label{eq:K}
 K=2mI_{q-1}+(h+1)\sum_{d\in\Z/(q-1)\Z}\beta_d(S^d+S^{-d}).
\end{equation}
The matrix $K$ is an integer symmetric circulant $(q-1)\times(q-1)$ matrix. For every vector,
\[
 y^{\mathsf T}Ky=2m\norm{y}_2^2+2(h+1)\sum_d\beta_d C_y(d),
\]
and hence
\begin{equation}
\label{eq:Kshell}
 y^{\mathsf T}Ky=2(h+1)E_A(y)
 \qquad\text{when }\norm{y}_2^2=h+1.
\end{equation}
So $K$ is simply the integer symmetric circulant operator representing the checker on the minimum shell; no positivity of $K$ is required.

Because $S$ represents multiplication by $X$, the same operator is multiplication in $R_{q-1}$ by
\[
 k(X)=2m+(h+1)\sum_d\beta_d(X^d+X^{-d}).
\]
Here negative exponents mean the corresponding residues modulo $q-1$, since $X$ is a unit in $R_{q-1}$.

\subsection{From the checker operator to the final ideal}
We next add a large scalar part. Its purpose is twofold: on the minimum shell, the cross term with $K$ will record the checker; away from the shell, the scalar part will keep longer vectors long.

Define
\[
 L=1+\max_i\sum_j|K_{ij}|.
\]
Symmetry makes the maximum absolute column sum equal to the maximum absolute row sum, namely $L-1$. For any real vector $y$, weighted Cauchy--Schwarz gives
\begin{align*}
 \norm{Ky}_2^2
 &\le \sum_i\left(\sum_j|K_{ij}|\right)
               \left(\sum_j|K_{ij}|\,|y_j|^2\right)\\
 &\le (L-1)\sum_j|y_j|^2\sum_i|K_{ij}|
 \le (L-1)^2\norm{y}_2^2.
\end{align*}
Thus $\norm{K}_{\mathrm{op}}\le L-1<L$.

Choose
\begin{equation}
\label{eq:M}
 M=1+\max\{L^2,\ 2L(h+2)+(4\tau+2)(h+1)\}.
\end{equation}
Now define the ring multiplier and the final ideal directly by
\begin{equation}
\label{eq:output}
 g(X)=M+k(X),\qquad I'=g(X)\mathcal I.
\end{equation}
In coefficient coordinates, multiplication by $g(X)$ is the integer symmetric circulant matrix
\[
 T=MI_{q-1}+K,
\]
so equivalently $I'=T\mathcal I$. This is why the image lattice is the natural final object: measuring $y\in\mathcal I$ by the perturbed quantity $\norm{Ty}_2$ is exactly the same as measuring its image $v=Ty\in I'$ by the ordinary Euclidean norm required by SVP.

The ring definition also makes the ideal property immediate. For $a\in R_{q-1}$ and $y\in\mathcal I$,
\[
 a\bigl(g(X)y\bigr)=g(X)(ay)\in g(X)\mathcal I.
\]
Thus $I'$ is exactly the principal multiple $g\mathcal I$, not an enlargement by ideal closure.

The reverse triangle inequality gives
\begin{equation}
\label{eq:Tlower}
 \norm{Ty}_2\ge M\norm{y}_2-\norm{Ky}_2\ge(M-L)\norm{y}_2.
\end{equation}
Since $M>L$, multiplication by $g$, equivalently $T$, is injective on coefficient space. Hence $I'$ is a full-rank integral ideal and every vector of $I'$ has a unique preimage in $\mathcal I$.

The key identity is already visible before choosing the threshold. For $v=Ty$,
\begin{equation}
\label{eq:length}
 \norm{v}_2^2=M^2\norm{y}_2^2+2M y^{\mathsf T}Ky+\norm{Ky}_2^2.
\end{equation}
On the minimum shell,~\eqref{eq:Kshell} turns the cross term into the checker:
\[
 \norm{v}_2^2=(h+1)\bigl(M^2+4M E_A(y)\bigr)+\norm{Ky}_2^2.
\]
Thus the final ideal has exactly the intended meaning: its ordinary Euclidean lengths are a large baseline determined by the old norm, plus a first-order term proportional to the checker, plus the controlled error $\norm{Ky}_2^2$.

\begin{theorem}[Exact gap preservation]
\label{thm:conversion}
Define the integer squared-length threshold
\[
 \boxed{B_*:=(h+1)\bigl(M^2+(4\tau+2)M\bigr).}
\]
Then $I'=g(X)\mathcal I=T\mathcal I$ is a full-rank integral ideal of $R_{q-1}$, and on promised source inputs,
\[
 \text{source YES}\quad\Longleftrightarrow\quad\lambda_1(I')^2\le B_*.
\]
\end{theorem}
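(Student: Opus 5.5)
The ideal-theoretic part is already in hand from the discussion preceding the theorem. By Lemma~\ref{lem:baseann}, $\mathcal I$ is a full-rank ideal. Inequality~\eqref{eq:Tlower} with $M>L$ makes $T$ injective, and the ring identity $a(gy)=g(ay)$ gives closure. So $I'=T\mathcal I$ is a full-rank integral ideal, and every $v\in I'$ has a unique preimage $y\in\mathcal I$ with $v=Ty$, nonzero exactly when $y\ne0$. The equivalence then reduces to three length estimates, according to whether $y$ is a chosen YES shell vector, an arbitrary shell vector in the NO case, or off the minimum shell. Proposition~\ref{prop:shell} guarantees that every nonzero $y\in\mathcal I$ falls into one of these regimes: either $\norm{y}_2^2=h+1$, or $\norm{y}_2^2\ge h+2$ by integrality.

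\textbf{YES direction.} Proposition~\ref{prop:checkergap} supplies a shell vector $y$ with $E_A(y)\le\tau$. I would substitute this into the shell identity
\[
\norm{Ty}_2^2=(h+1)\bigl(M^2+4ME_A(y)\bigr)+\norm{Ky}_2^2,
\]
and bound the error by $\norm{Ky}_2^2<L^2(h+1)$. This gives
\[
\norm{Ty}_2^2<(h+1)\bigl(M^2+4M\tau+L^2\bigr).
\]
Since $M>L^2$, we have $L^2<2M$, so the right side is below $B_*$. Hence $\lambda_1(I')^2<B_*$.

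\textbf{NO direction, shell vectors.} Every shell vector satisfies $E_A(y)\ge\tau+1$. Dropping the nonnegative error term gives
\[
\norm{Ty}_2^2\ge(h+1)\bigl(M^2+4M\tau+4M\bigr)>B_*.
\]

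\textbf{Off-shell vectors, in either case.} Here I use $\norm{Ty}_2^2\ge(M-L)^2\norm{y}_2^2\ge(M-L)^2(h+2)$. It suffices to show that this exceeds $(h+1)\bigl(M^2+(4\tau+2)M\bigr)$. Expanding, the required inequality is
\[
M^2-2L(h+2)M+L^2(h+2)>(4\tau+2)(h+1)M.
\]
This follows from $M>2L(h+2)+(4\tau+2)(h+1)$, which is exactly the second entry in the definition~\eqref{eq:M} of $M$.

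Combining the three estimates: in the YES case some nonzero vector of $I'$ lies below $B_*$. In the NO case every nonzero vector lies strictly above it. This is the claimed equivalence on promised inputs.

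The only delicate step is the off-shell estimate, and specifically checking that the chosen $M$ absorbs the $(h+1)$ versus $(h+2)$ discrepancy. That is routine once the inequality is expanded as above. I would also note explicitly that $B_*$ is an integer and that no sign information on $E_A$ is needed off the shell.
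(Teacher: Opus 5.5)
Your proposal is correct and follows the paper's proof essentially step for step. It uses the same three cases, the chosen YES shell vector bounded via $\norm{Ky}_2^2\le L^2(h+1)$ and $M>L^2$, NO shell vectors via $E_A(y)\ge\tau+1$, and off-shell vectors via $\norm{Ty}_2\ge(M-L)\norm{y}_2$, with the same expansion against the second entry of the maximum defining $M$.
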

\begin{proof}
Equation~\eqref{eq:length} gives, on the minimum shell,
\[
 \norm{Ty}_2^2=(h+1)\bigl(M^2+4M E_A(y)\bigr)+\norm{Ky}_2^2.
\]
The threshold is chosen as the exact midpoint between the first term at checker values $\tau$ and $\tau+1$:
\[
 \frac{(h+1)(M^2+4M\tau)+(h+1)(M^2+4M(\tau+1))}{2}
 =(h+1)(M^2+(4\tau+2)M)=B_*.
\]
This midpoint calculation is not yet a proof of separation: we must control the additional term and also vectors outside the shell. We do so in three cases.

\textbf{A YES vector on the minimum shell.} Proposition~\ref{prop:checkergap} supplies $y$ with $\norm{y}_2^2=h+1$ and $E_A(y)\le\tau$. By~\eqref{eq:Kshell} and $\norm{Ky}_2^2\le L^2(h+1)$,
\[
 \norm{Ty}_2^2
 \le M^2(h+1)+4M(h+1)\tau+L^2(h+1)
 < B_*.
\]
Indeed $L^2(h+1)<2M(h+1)$ since $M>L^2$, so the error is smaller than the distance to the midpoint. Also $Ty\ne0$ by injectivity.

\textbf{Any NO vector on the minimum shell.} Here $E_A(y)\ge\tau+1$. Using~\eqref{eq:Kshell} and $\norm{Ky}_2^2\ge0$,
\[
 \norm{Ty}_2^2\ge M^2(h+1)+4M(h+1)(\tau+1)>B_*.
\]
The margin is at least $2M(h+1)$.

\textbf{Every nonzero vector outside the shell.} Its squared coefficient norm is an integer at least $h+2$. We do not use any checker inequality here. Instead,~\eqref{eq:Tlower} gives
\[
 \norm{Ty}_2^2\ge(M-L)^2(h+2).
\]
Subtracting the threshold,
\begin{align*}
 (M-L)^2(h+2)-B_*
 &=M\bigl[M-2L(h+2)-(4\tau+2)(h+1)\bigr]\\
 &\hspace{10mm}+L^2(h+2)>0
\end{align*}
by~\eqref{eq:M}. This bound holds regardless of the source answer.

The first case supplies a sufficiently short output vector on YES inputs. The other two cases exclude every nonzero output vector on NO inputs.
\end{proof}

Theorem~\ref{thm:conversion} proves correctness on every promised source input. Section~\ref{sec:algorithm} completes Theorem~\ref{thm:main} by constructing the output basis and threshold in deterministic polynomial time and giving the NP verifier.

\Needspace{21\baselineskip}
\section{Basis construction and complexity}
\label{sec:algorithm}
The single modular kernel of Lemma~\ref{lem:baseann} lets us compute the base ideal by standard integer linear algebra. We use Smith normal form with its unimodular transformations, which is computable in deterministic polynomial time~\cite{KB}. A unimodular integer matrix has determinant $\pm1$ and an integer inverse. Basis vectors are columns throughout.

\subsection{A basis of a modular kernel}
\begin{lemma}[Basis from Smith normal form]
\label{lem:kernel}
Let $C\in\Z^{d\times d}$ and let $a\ge2$ be an integer. Compute unimodular integer matrices $U,V$ such that
\[
 UCV=\operatorname{diag}(\sigma_1,\ldots,\sigma_d),
\]
where zero diagonal entries are allowed. Then
\[
 \boxed{B=V\operatorname{diag}\!\left(
 \frac{a}{\gcd(a,\sigma_1)},\ldots,\frac{a}{\gcd(a,\sigma_d)}
 \right)}
\]
is an integer basis of $\{y\in\Z^d:Cy\equiv0\pmod a\}$. Here $\gcd(a,0)=a$.
\end{lemma}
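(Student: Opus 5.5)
The plan is a change of variables $y=Vw$, which turns the congruence into a diagonal one. Since $V$ is unimodular, $w\mapsto Vw$ is a bijection of $\Z^d$ onto itself. Because $U$ is also unimodular, it is invertible modulo $a$: multiplying a congruence by $U$ or by $U^{-1}$ preserves it. Hence $Cy\equiv0\pmod a$ holds if and only if $UCy\equiv0\pmod a$. With $y=Vw$, this becomes $\operatorname{diag}(\sigma_1,\ldots,\sigma_d)\,w\equiv0\pmod a$. The original kernel is therefore $V$ applied to the diagonal kernel
\[
 W=\{w\in\Z^d:\sigma_\ell w_\ell\equiv0\pmod a\ \ (1\le\ell\le d)\}.
\]

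The conditions on $W$ separate coordinate by coordinate, so it suffices to solve one scalar congruence $\sigma w\equiv0\pmod a$. Write $g=\gcd(a,\sigma)$. The congruence says $a\mid\sigma w$, which is equivalent to $(a/g)\mid(\sigma/g)w$. Since $\gcd(a/g,\sigma/g)=1$, this is in turn equivalent to $(a/g)\mid w$. When $\sigma=0$, every $w$ is a solution, and this matches the convention $\gcd(a,0)=a$, which gives $a/g=1$. Thus
\[
 W=\operatorname{diag}\!\left(\frac{a}{\gcd(a,\sigma_1)},\ldots,\frac{a}{\gcd(a,\sigma_d)}\right)\Z^d,
\]
and this diagonal matrix is nonsingular, since every entry is a positive integer.

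Applying $V$ shows that the columns of $B$ generate the kernel. They are linearly independent because $B$ is a product of two nonsingular matrices. So $B$ is a basis of the full-rank kernel.

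I do not expect a real obstacle here. The only points needing care are the use of $U$'s invertibility modulo $a$ (integer inverse, determinant $\pm1$) to justify the equivalence, and the degenerate case $\sigma_\ell=0$, which the stated convention already covers.
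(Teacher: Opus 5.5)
Your proposal is correct and follows the paper's proof essentially step for step. Both arguments substitute $y=Vx$, use the integer inverse of $U$ to preserve divisibility by $a$, and solve each diagonal congruence by cancelling $\gcd(a,\sigma_i)$ (covering $\sigma_i=0$ by the stated convention), then multiply back by $V$.
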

\begin{proof}
Write $y=Vx$. Since $V$ is unimodular, $y$ ranges over all integer vectors exactly when $x$ does. Also $U$ and its inverse preserve divisibility by $a$. Thus
\[
 Cy\equiv0\pmod a
 \iff \sigma_i x_i\equiv0\pmod a\quad(1\le i\le d).
\]
For each $i$, this scalar condition is equivalent to
\[
 \frac{a}{\gcd(a,\sigma_i)}\mid x_i.
\]
When $\sigma_i=0$, the divisor is $1$ and there is no restriction. Otherwise the equivalence follows by dividing by the greatest common divisor and cancelling the remaining coprime factor. The coordinates of $x$ can therefore be chosen independently as multiples of these divisors. Multiplying by $V$ gives exactly the displayed basis, with no extra vectors and none omitted.
\end{proof}

Let $C_P$ be the $(q-1)\times(q-1)$ integer circulant matrix for multiplication by the polynomial $P$ of Lemma~\ref{lem:baseann}. That lemma gives
\[
 \mathcal I=\{y\in\Z^{q-1}:C_Py\equiv0\pmod{q(h^2-1)}\}.
\]
Apply Lemma~\ref{lem:kernel} with $C=C_P$ and $a=q(h^2-1)$ to obtain a basis $\mathbf B_{\mathcal I}$. This uses an integer normal form, not Gaussian elimination over a composite residue ring, and requires no prime factorization of the modulus.

\subsection{Index and output}
The index can be read directly from the original constraints. The map
\[
 y\longmapsto\bigl(y(-h)\bmod(h^2-1),\ y(\alpha),\ldots,y(\alpha^{k-1})\bigr)
\]
from $R_{q-1}$ to $(\Z/(h^2-1)\Z)\times\F_q^{k-1}$ has kernel $\mathcal I$ and is onto. The first component is onto using constant polynomials. The moment evaluations are jointly onto by interpolation at the $k-1$ distinct points, using polynomials of degree at most $k-2<q-1$. Coefficientwise Chinese remaindering combines these prescriptions because $q$ and $h^2-1$ are coprime. Hence
\begin{equation}
\label{eq:index}
 [\Z^{q-1}:\mathcal I]=(h^2-1)q^{k-1}.
\end{equation}
Compute $K,L,M,B_*$ as in Section~\ref{sec:conversion} and output
\[
 \mathbf B_{\rm out}=(MI_{q-1}+K)\mathbf B_{\mathcal I},\qquad B_*.
\]
Its columns generate exactly the image ideal used in Theorem~\ref{thm:conversion}.

\subsection{Polynomial bit complexity}
The parameters satisfy $q-1=O((r+1)^{12})$ and $k=O(r+1)$, with fixed exponents. The NTRU-form application doubles the dimension to $2(q-1)$ and obeys the same asymptotic bound. The coefficient representatives for $P$ and the entries of $C_P$ lie between $0$ and $q(h^2-1)-1$. Thus $C_P$ has polynomial binary encoding length in the explicit source input.

The polynomial-time Smith normal form algorithm computes both the diagonal form and the matrices $U,V$ with polynomial bit length~\cite{KB}. Every factor $a/\gcd(a,\sigma_i)$ in Lemma~\ref{lem:kernel} is at most $a=q(h^2-1)$, so the resulting basis has polynomial bit length as well. This is a bit-complexity assertion, not merely a bound on the number of arithmetic operations.

The offsets are $O(r^3)$. Since $|S_i|,|S_i\cap S_j|\le m$, the sum of absolute checker coefficients is $O(mr^2+mr+r)$. The row sums of $K$ are bounded by
\[
 2m+2(h+1)\sum_{d\in\Z/(q-1)\Z}|\beta_d|.
\]
Hence $K,L,M,B_*$ have polynomial bit length. The final basis multiplication uses polynomially many products and sums on integers of polynomial bit length. Prime selection, primitive-element selection, and the polynomial operations constructing $P$ were already bounded in Sections~\ref{sec:parameters} and~\ref{sec:anchor}. Thus the complete reduction is deterministic polynomial time, although its dimension is far too large for practical use.

No step enumerates the minimum shell or computes a clean completion. The counting theorem is used only to prove that a YES witness has an appropriate output vector.

\subsection{NP membership and the search consequence}
For a general target input of Definition~1.1, validity is checked by exact rational arithmetic. In addition to nonsingularity, test
\[
 B^{-1}SB\text{ has integer entries}.
\]
This tests whether $S$ maps the column lattice into itself. Because the cyclic shift has finite order, that inclusion implies equality. The inverse products and integrality tests have polynomial bit complexity.

A YES certificate is a nonzero ambient integer vector $v$, rather than basis coefficients. Verify
\[
 v\ne0,\qquad \sum_i v_i^2\le b,\qquad B^{-1}v\text{ has integer entries}.
\]
The last test is exactly lattice membership. Each coordinate has absolute value at most $\sqrt b$, so the certificate has polynomial length; all checks are exact. The language therefore belongs to NP. Combining this verifier with the deterministic reduction proves Theorem~\ref{thm:main} for decision-SVP.

An exact search solver returns a shortest nonzero output vector. One call, followed by comparison of its integer squared norm with $B_*$, decides the constructed instance. This proves the stated search hardness. The NP certificate need only exhibit a vector below threshold, not certify that it is globally shortest.

\medskip
\textbf{Scope and further questions.}
The output is a principal multiple $g\mathcal I$, not necessarily a principal ideal. The parity mechanism uses the two evaluations
\[
 y(1)=u(y)+v(y),\qquad y(-1)=u(y)-v(y).
\]
Both belong to the reducible cyclic quotient-ring setting. Projecting to one cyclotomic factor can discard the constraints that produce the anchor. Hardness over full rings of integers in prescribed field families, and principality of the cyclic output, would require additional arguments.

A second question is hardness within a fixed approximation factor greater than one. Our final multiplier is deliberately close, after scaling, to the identity: the checker supplies a strict threshold gap, while a large scalar protects every vector outside the minimum shell. The proof does not give a constant relative separation. Neither the gap in the source cover problem nor tensoring arbitrary cyclic lattices automatically supplies such a result; tensor products naturally carry a product of cyclic actions rather than one full coordinate cycle. A constant-factor cyclic reduction is therefore a separate target, not a corollary claimed here.

\section{Application to the algebraic class of NTRU-form lattices}
\label{sec:ntru}
The cyclic lattices constructed above also give hardness for a rank-two convolution-lattice class. The application uses the specific congruence description of our hard instances; it is not an assertion that every cyclic lattice admits the representation needed below.

\subsection{The target class and the reduction idea}
For $N\ge1$, let $R_N=\Z[X]/(X^N-1)$. Given an integer $Q\ge2$ and a polynomial $H\in R_N/QR_N$, define
\begin{equation}
\label{eq:ntruclass}
 \Lambda_{H,Q}=\{(x,z)\in R_N^2:Hx\equiv z\pmod{QR_N}\}.
\end{equation}
Its norm is the unweighted coefficient Euclidean norm
\[
 \norm{(x,z)}_2^2=\norm{x}_2^2+\norm{z}_2^2.
\]
If $C_H$ is the integer circulant matrix for multiplication by an arbitrary lift of $H$, an explicit \emph{column} basis is
\begin{equation}
\label{eq:ntrubasis}
 \mathbf B_{H,Q}=\begin{pmatrix}I_N&0\\ C_H&QI_N\end{pmatrix}.
\end{equation}
Indeed its column combinations are precisely $(x,C_Hx+Qw)$ with $x,w\in\Z^N$. Changing the lift of $H$ changes $C_H$ by $Q$ times an integer matrix and leaves this lattice unchanged. Thus the lattice has rank $2N$ and determinant $Q^N$. It is invariant under simultaneous cyclic shifts of the two blocks, not necessarily under one cyclic shift of all $2N$ coordinates. This is the usual unweighted NTRU public-lattice form, with the row/column convention fixed explicitly; see Hoffstein--Pipher--Silverman~\cite[Section 3.4.1]{NTRU}.

\begin{definition*}[Algebraic NTRU-form SVP]
The input consists of $N\ge1$, $Q\ge2$, the $N$ coefficients of $H$ modulo $Q$, and a nonnegative integer squared threshold $b$. Decide whether $\lambda_1(\Lambda_{H,Q})^2\le b$. The degree, modulus, and polynomial are unrestricted parts of the input; in particular $H$ need not be a unit modulo $Q$. No secret-key or key-generation promise is imposed.
\end{definition*}

\begin{theorem}[Application to NTRU form]
\label{thm:ntru}
Exact Euclidean decision SVP for the algebraic NTRU-form class is NP-complete under deterministic polynomial-time many-one reductions. Hardness holds with $N=q-1$ for a varying odd prime $q$, so the output dimension is $2(q-1)$. Exact search-SVP on this class is NP-hard under polynomial-time Turing reductions. These statements concern the unrestricted algebraic class, not the cryptographically generated subclass or a distribution of NTRU public keys.
\end{theorem}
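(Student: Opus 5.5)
The plan is to reuse the cyclic instances of Theorem~\ref{thm:main}: write each final ideal $I'=g\mathcal I$ as the kernel of one modular convolution $x\mapsto Fx\bmod Q_1$, and then embed that kernel into an NTRU-form lattice so that no vector below $B_*$ changes. NP membership is handled as in Section~\ref{sec:algorithm}. The basis~\eqref{eq:ntrubasis} is explicit and nonsingular, a certificate is a nonzero integer pair $(x,z)$ with $\norm{x}_2^2+\norm{z}_2^2\le b$, and membership is the single check $C_Hx\equiv z\pmod Q$. The search consequence follows from one oracle call and a comparison with the threshold.

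\textbf{Step 1: a kernel description of $I'$.} Let $Q_0=q(h^2-1)$, so that $\mathcal I=\{y:Py\equiv0\pmod{Q_0}\}$ by Lemma~\ref{lem:baseann}. The multiplier $T=MI+K$ is a nonsingular circulant matrix. Hence its adjugate is a circulant, given by some $g^*\in R_{q-1}$ with $gg^*=\Delta:=\det T\in\Z$, and all of this is computable in polynomial time by exact integer linear algebra. Then $x\in g\mathcal I$ if and only if $x':=g^*x/\Delta$ is integral and lies in $\mathcal I$. This amounts to the two conditions
\[
 g^*x\equiv0\pmod{\Delta}\qquad\text{and}\qquad Pg^*x\equiv0\pmod{\Delta Q_0}.
\]
If $\gcd(\Delta,Q_0)=1$, the second condition can be replaced by $Pg^*x\equiv0\pmod{Q_0}$, because $\Delta$ is then a unit modulo $Q_0$. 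Coefficientwise Chinese remaindering then gives a single $F$ with $F\equiv g^*\pmod\Delta$ and $F\equiv Pg^*\pmod{Q_0}$. This yields
\[
 I'=\{x\in R_{q-1}:Fx\equiv0\pmod{Q_1}\},\qquad Q_1=\Delta Q_0 .
\]

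\textbf{Step 2: coprimality, which I expect to be the main obstacle.} We need $\gcd(\det(MI+K),\,q(h^2-1))=1$. The correctness proof of Theorem~\ref{thm:conversion} only needs $M$ above the bound in~\eqref{eq:M}, so there is freedom to enlarge or shift $M$. Modulo a prime $p\mid Q_0$, the determinant vanishes exactly when $-M$ is an eigenvalue of $K$ mod $p$. Because the dimension $q-1$ may exceed $p$, this could happen for every residue class of $M$, so I cannot simply search over $M$.

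My first attempt will exploit the circulant structure. Over $\overline{\F}_p$ the eigenvalues of $K$ are the values $k(\zeta)$ at the $(q-1)$-th roots of unity. I would replace $g$ by $M+ck$ for a suitable integer scale $c$, for instance $c\equiv0$ modulo the radical of $Q_0$ computed via gcds. Then $g\equiv M\pmod p$, and taking $M\equiv1$ modulo that radical makes $\Delta\equiv1$. The checker must survive this rescaling, so I would redo the arithmetic of Section~\ref{sec:conversion} with $cK$ in place of $K$. Concretely, $E_A$ enters through $2Mc\,y^{\mathsf T}Ky$, $L$ scales by $c$, $M$ must exceed $c^2L^2$ and the analogous linear bound, and $B_*$ changes accordingly. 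I expect the three-case argument to go through verbatim with these changes. Finding the radical of $Q_0$ without factoring is possible: $q$ is a known prime, and $h^2-1=(h-1)(h+1)$ with $h$ polynomially small, so trial division suffices.

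\textbf{Step 3: a threshold-preserving embedding.} Choose an integer $W$ with $W^2>B_*$. Set $N=q-1$, $Q=WQ_1$, and $H=WF\bmod Q$. If $(x,z)\in\Lambda_{H,Q}$, then $WFx\equiv z\pmod{WQ_1}$, so $W$ divides $z$. Writing $z=Wz'$ gives $Fx\equiv z'\pmod{Q_1}$. If $z'\ne0$, then $\norm{(x,z)}_2^2\ge W^2>B_*$. If $z'=0$, then $x\in I'$ and the norm equals $\norm{x}_2^2$. Conversely, every $(x,0)$ with $x\in I'$ lies in $\Lambda_{H,Q}$. Therefore the nonzero vectors of squared norm at most $B_*$ in $\Lambda_{H,Q}$ and in $I'$ correspond exactly, and Theorem~\ref{thm:conversion} transfers to give ``source YES $\iff\lambda_1(\Lambda_{H,Q})^2\le B_*$''. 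The bit lengths of $W$, $Q$, and $H$ are polynomial, so the reduction runs in deterministic polynomial time and has output dimension $2(q-1)$.
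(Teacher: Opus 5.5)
Your proposal is correct and follows essentially the paper's route. The paper also rescales the checker to $T_0=M_0I_N+q(h^2-1)K$ with $M_0\equiv1\pmod{q(h^2-1)}$, which makes $\det T_0$ coprime to $q(h^2-1)$; it then merges the adjugate condition with $P$ by Chinese remaindering into one kernel modulo $Q_1=\det(T_0)q(h^2-1)$, and embeds that kernel via $\Lambda_{(b+1)P_J,(b+1)Q_1}$. The only difference is cosmetic: your second residue is $Pg^*$ rather than the paper's $P$, which the paper justifies through $T_0\equiv I_N$ and $T_0R_N\cap\mathcal I=T_0\mathcal I$, so your kernel step needs only coprimality. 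Also, scaling by $c=q(h^2-1)$ itself already works, so computing the radical is unnecessary.
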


For the rest of this section $N=q-1$. By Lemma~\ref{lem:baseann}, the base ideal $\mathcal I$ is already the kernel of multiplication by one polynomial modulo $q(h^2-1)$, and $q(h^2-1)R_N\subseteq\mathcal I$. We now modify the final length-conversion multiplier so that the hard image ideal $J=T_0\mathcal I$ again admits a single modular-kernel description. Once this is established, a simple threshold-preserving NTRU embedding preserves exactly all vectors below the SVP threshold.

\subsection{A congruence-preserving hard image ideal}
We now use the same symmetric circulant checker matrix $K$ and row-sum bound $L$ as in Section~\ref{sec:conversion}, but modify the final multiplier so that it is congruent to the identity modulo $q(h^2-1)$. Define
\begin{equation}
\label{eq:ntrurescale}
\begin{aligned}
 M_0&=1+q(h^2-1)
 \max\{L^2,\ 2L(h+2)+(4\tau+2)(h+1)\},\\
 T_0&=M_0I_N+q(h^2-1)K,\qquad J=T_0\mathcal I,\\
 b_{\rm cyc}&=(h+1)\bigl(M_0^2+(4\tau+2)M_0q(h^2-1)\bigr).
\end{aligned}
\end{equation}
Then
\[
 \boxed{T_0\equiv I_N\pmod{q(h^2-1)}.}
\]

\begin{lemma}[The modified multiplier preserves the gap]
\label{lem:ntrugap}
$J$ is a full-rank cyclic ideal, and on promised source inputs,
\[
 \text{source YES}\quad\Longleftrightarrow\quad
 \lambda_1(J)^2\le b_{\rm cyc}.
\]
Moreover,
\[
 \det(T_0)>1,\qquad \det(T_0)\equiv1\pmod{q(h^2-1)}.
\]
In particular, $\gcd\bigl(\det(T_0),q(h^2-1)\bigr)=1$.
\end{lemma}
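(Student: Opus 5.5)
The plan is to rerun the proof of Theorem~\ref{thm:conversion} after rescaling. Put $\kappa:=q(h^2-1)$, so that $T_0=M_0I_N+\kappa K$. It is convenient to view $T_0$ as $\kappa\bigl((M_0/\kappa)I_N+K\bigr)$, or to work directly with $M_0$ and the operator $\kappa K$. The identity~\eqref{eq:length} then becomes
\[
 \norm{T_0y}_2^2=M_0^2\norm{y}_2^2+2M_0\kappa\,y^{\mathsf T}Ky+\kappa^2\norm{Ky}_2^2 .
\]
On the minimum shell, \eqref{eq:Kshell} turns this into
\[
 \norm{T_0y}_2^2=(h+1)\bigl(M_0^2+4M_0\kappa E_A(y)\bigr)+\kappa^2\norm{Ky}_2^2 .
\]
The threshold $b_{\rm cyc}$ is again the midpoint between the main terms at checker values $\tau$ and $\tau+1$.

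The proof splits into the same three cases as before, with $\kappa$ inserted in the margins.

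\textbf{YES.} Proposition~\ref{prop:checkergap} supplies a shell vector with $E_A\le\tau$. The error term satisfies $\kappa^2\norm{Ky}_2^2\le\kappa^2L^2(h+1)$, and we need this to be less than $2M_0\kappa(h+1)$, that is, $\kappa L^2<2M_0$. This holds because $M_0>\kappa L^2$.

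\textbf{NO on the shell.} Here $E_A\ge\tau+1$ and the error term is nonnegative, so $\norm{T_0y}_2^2$ exceeds $b_{\rm cyc}$ by at least $2M_0\kappa(h+1)$.

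\textbf{Off the shell.} The reverse triangle inequality gives $\norm{T_0y}_2\ge(M_0-\kappa L)\norm{y}_2$, with $\norm{y}_2^2\ge h+2$. We then expand
\[
 (M_0-\kappa L)^2(h+2)-b_{\rm cyc}
 =M_0\bigl[M_0-2\kappa L(h+2)-(4\tau+2)\kappa(h+1)\bigr]+\kappa^2L^2(h+2).
\]
This is positive by the definition of $M_0$, since $M_0-1$ is $\kappa$ times the old maximum.

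Full rank, injectivity and cyclicity follow as before. Since $M_0>\kappa L>\kappa\norm{K}_{\rm op}$, the map $T_0$ is injective. Moreover $T_0$ is multiplication by the ring element $M_0+\kappa k(X)$, so $J$ is an ideal because $\mathcal I$ is one (Lemma~\ref{lem:baseann}).

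For the determinant, $T_0\equiv I_N\pmod\kappa$ entrywise because $M_0\equiv1\pmod\kappa$. Expanding the determinant as a polynomial in the entries gives $\det T_0\equiv\det I_N=1\pmod\kappa$, which immediately gives the gcd statement. For positivity, $T_0$ is real symmetric with every eigenvalue at least $M_0-\kappa\norm{K}_{\rm op}>M_0-\kappa L\ge1$. So all eigenvalues exceed $1$, and $\det T_0>1$.

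I do not expect a real obstacle. The only care needed is to keep track of the factor $\kappa$ consistently in the three margin inequalities, in particular in the off-shell expansion, where the cross term must be $2M_0\kappa L$ and not $2M_0L$.
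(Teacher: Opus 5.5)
Your proposal is correct and follows the paper's proof essentially step for step. It uses the same expansion of $\norm{T_0y}_2^2$, the same three cases (YES shell vector, NO shell vector, off-shell vector with $\norm{y}_2^2\ge h+2$) with the factor $q(h^2-1)$ carried through each margin, and the same determinant argument from $T_0\equiv I_N\pmod{q(h^2-1)}$. The eigenvalue lower bound you use, $M_0-q(h^2-1)\norm{K}_{\rm op}>M_0-q(h^2-1)L\ge1$, is the paper's estimate $M_0-q(h^2-1)(L-1)>1$ in slightly different form.
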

\begin{proof}
\textbf{The image is a full-rank ideal.}
Since $T_0$ is an integer polynomial in the cyclic shift $S$, it represents multiplication by an element of $R_N$. Thus $J=T_0\mathcal I$ is an ideal. We next check that the map is injective and that its determinant is positive.

The symmetric matrix $K$ has every eigenvalue in $[-(L-1),L-1]$. Every eigenvalue of $T_0$ is consequently at least
\[
 M_0-q(h^2-1)(L-1)
 \ge1+q(h^2-1)(L^2-L+1)>1.
\]
Here we used $M_0-1\ge q(h^2-1)L^2$. Therefore $T_0$ is nonsingular over $\Q$, $J$ has full rank, and $\det(T_0)>1$. Taking determinants in $T_0\equiv I_N\pmod{q(h^2-1)}$ gives the claimed determinant congruence and coprimality.

\textbf{Vectors on the minimum shell.}
For every $y\in\mathcal I$,
\[
 \norm{T_0y}_2^2
 =M_0^2\norm{y}_2^2
 +2M_0q(h^2-1)y^{\mathsf T}Ky
 +q^2(h^2-1)^2\norm{Ky}_2^2.
\]
When $\norm{y}_2^2=h+1$, identity~\eqref{eq:Kshell} turns this into
\[
 (h+1)\bigl(M_0^2+4M_0q(h^2-1)E_A(y)\bigr)
 +q^2(h^2-1)^2\norm{Ky}_2^2.
\]
For a YES shell vector, $E_A(y)\le\tau$ and
\[
 q^2(h^2-1)^2\norm{Ky}_2^2
 \le q^2(h^2-1)^2L^2(h+1)
 <2M_0q(h^2-1)(h+1),
\]
since $M_0>q(h^2-1)L^2$. This is smaller than the margin to the threshold, so its squared length is below $b_{\rm cyc}$. For a NO shell vector, $E_A(y)\ge\tau+1$ and the last term is nonnegative; its squared length exceeds $b_{\rm cyc}$ by at least $2M_0q(h^2-1)(h+1)$.

\textbf{All other nonzero vectors.}
The minimum squared norm is $h+1$, and squared norms of integer vectors are integers. Hence every nonzero vector outside that shell has $\norm{y}_2^2\ge h+2$. By the reverse triangle inequality,
\begin{align*}
 \norm{T_0y}_2
 &=\norm{M_0y+q(h^2-1)Ky}_2\\
 &\ge M_0\norm{y}_2-q(h^2-1)\norm{Ky}_2\\
 &\ge\bigl(M_0-q(h^2-1)L\bigr)\norm{y}_2.
\end{align*}
The coefficient is positive. Squaring and using $\norm{y}_2^2\ge h+2$ gives
\[
 \norm{T_0y}_2^2\ge\bigl(M_0-q(h^2-1)L\bigr)^2(h+2).
\]
Subtracting the threshold yields
\begin{align*}
 &\bigl(M_0-q(h^2-1)L\bigr)^2(h+2)-b_{\rm cyc}\\
 &\quad=M_0\Bigl[M_0-q(h^2-1)
       \bigl(2L(h+2)+(4\tau+2)(h+1)\bigr)\Bigr]\\
 &\qquad\quad+q^2(h^2-1)^2L^2(h+2).
\end{align*}
The bracket is at least $1$, because the maximum in the definition of $M_0$ is at least $2L(h+2)+(4\tau+2)(h+1)$. Explicitly,
\[
 M_0-q(h^2-1)\bigl(2L(h+2)+(4\tau+2)(h+1)\bigr)\ge1.
\]
The first term in the difference is therefore at least $M_0>0$, and the remaining term is nonnegative. Every off-shell vector maps strictly above $b_{\rm cyc}$. Together with the two shell cases, this proves the decision gap.
\end{proof}

\subsection{A single annihilator for the hard image ideal}
The distinction between rational and integral invertibility matters here. We have proved that $T_0^{-1}$ exists as a matrix over $\Q$. We have \emph{not} inverted its multiplying element inside $R_N$: such an integral inverse would force $\det(T_0)=\pm1$, whereas $\det(T_0)>1$. The proof uses the integral adjugate and checks divisibility before forming any integral preimage.

Since $T_0S=ST_0$, its rational inverse also commutes with $S$. Therefore
\[
 \operatorname{adj}(T_0)=\det(T_0)T_0^{-1}
\]
is an integer matrix commuting with $S$. Such a matrix is circulant: each successive column is the cyclic shift of the preceding column. Let $\widetilde g\in R_N$ be the polynomial whose multiplication matrix is $\operatorname{adj}(T_0)$. The adjugate identity reads
\[
 \operatorname{adj}(T_0)T_0=T_0\operatorname{adj}(T_0)
 =\det(T_0)I_N.
\]
Each adjugate entry is an integer polynomial in the entries of $T_0$. Reducing $T_0\equiv I_N$ modulo $q(h^2-1)$ thus gives
\[
 \operatorname{adj}(T_0)\equiv I_N\pmod{q(h^2-1)},\qquad
 \widetilde g\equiv1\pmod{q(h^2-1)R_N}.
\]
Define
\begin{equation}
\label{eq:transportann}
 \boxed{P_J=\widetilde g+\det(T_0)(P-1).}
\end{equation}
The determinant congruence then implies
\[
 \begin{aligned}
 P_J&\equiv P\pmod{q(h^2-1)R_N},\\
 P_J&\equiv\widetilde g\pmod{\det(T_0)R_N}.
 \end{aligned}
\]
These are the two residues needed to impose base-ideal membership and membership in the image of $T_0$ simultaneously.

\begin{lemma}[A single modular kernel for $J$]
\label{lem:transportann}
Define
\[
 Q_1:=\det(T_0)q(h^2-1).
\]
Then the hard image ideal has the exact description
\begin{equation}
\label{eq:hardann}
 \boxed{J=\{x\in R_N:P_Jx\equiv0\pmod{Q_1R_N}\}.}
\end{equation}
\end{lemma}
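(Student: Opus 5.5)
The plan is to split the single congruence modulo $Q_1$ into two congruences by the Chinese remainder theorem, and then match each one to a separate condition for membership in $J$. Write $D:=\det(T_0)$ and $a:=q(h^2-1)$, so that $Q_1=Da$. Lemma~\ref{lem:ntrugap} gives $\gcd(D,a)=1$, so applying the Chinese remainder theorem coefficientwise gives $R_N/Q_1R_N\simeq R_N/DR_N\times R_N/aR_N$. Hence, for $x\in R_N$,
\[
 P_Jx\equiv0\pmod{Q_1R_N}\iff
 P_Jx\equiv0\pmod{aR_N}\ \text{and}\ P_Jx\equiv0\pmod{DR_N}.
\]
By the two residues recorded after~\eqref{eq:transportann}, namely $P_J\equiv P\pmod{aR_N}$ and $P_J\equiv\widetilde g\pmod{DR_N}$, the right-hand side is equivalent to the pair of conditions
\[
 Px\equiv0\pmod{aR_N}\qquad\text{and}\qquad\widetilde gx\equiv0\pmod{DR_N}.
\]

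\textbf{The $D$-condition characterizes the image $T_0\Z^N$.} In coefficient coordinates, $\widetilde gx$ is $\operatorname{adj}(T_0)x=D\,T_0^{-1}x$. The condition $\widetilde g x\equiv0\pmod D$ therefore says exactly that $y:=T_0^{-1}x$ is an integer vector. Since $T_0$ is injective, this is equivalent to $x=T_0y$ for a unique $y\in\Z^N$. This step is where I must avoid inverting $T_0$ integrally: I only divide the integral adjugate image by $D$ after the congruence has guaranteed divisibility.

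\textbf{The $a$-condition then transfers membership in $\mathcal I$ from $y$ to $x$.} Write $x=T_0y$. Since $T_0=M_0I_N+aK$ and $M_0\equiv1\pmod a$, we get $T_0-I_N=a\bigl(\tfrac{M_0-1}{a}I_N+K\bigr)$. Hence
\[
 x-y=(T_0-I_N)y\in aZ^N,
\]
so $x\equiv y\pmod{aR_N}$ and therefore $Px\equiv Py\pmod{aR_N}$. By Lemma~\ref{lem:baseann}, $Py\equiv0\pmod{aR_N}$ holds exactly when $y\in\mathcal I$.

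\textbf{Conclusion.} Combining the two steps, $x$ satisfies both congruences if and only if $x=T_0y$ for some $y\in\mathcal I$, that is, $x\in J$. For the inclusion $J\subseteq\{\cdot\}$, start from $x=T_0y$ with $y\in\mathcal I$. Then $\operatorname{adj}(T_0)x=Dy\equiv0\pmod D$, and $Px\equiv Py\equiv0\pmod a$. For the reverse inclusion, reverse these steps as above.

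I expect the only delicate point to be the $D$-part. It rests on three facts: $\operatorname{adj}(T_0)$ is exactly the multiplication matrix of $\widetilde g$, which the paper has established via circulance; integrality is detected by divisibility by $D$; and $\gcd(D,a)=1$, which makes the splitting in the first step legitimate.
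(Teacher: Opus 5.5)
Your proof is correct and follows essentially the same route as the paper. You split the congruence modulo $Q_1$ by coprimality into the adjugate-divisibility condition, which characterizes $T_0R_N$, and the condition $Px\equiv0\pmod{q(h^2-1)R_N}$, then use $T_0\equiv I_N\pmod{q(h^2-1)}$ to identify the resulting set with $J$; the only cosmetic difference is that you transfer membership in $\mathcal I$ to the preimage $y$ via $Px\equiv Py$, where the paper proves $T_0R_N\cap\mathcal I=T_0\mathcal I$ from $q(h^2-1)R_N\subseteq\mathcal I$, which is the same observation.
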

\begin{proof}
\textbf{Membership in the image of $T_0$.}
For $x\in R_N\cong\Z^N$,
\[
 x\in T_0R_N
 \iff \operatorname{adj}(T_0)x\in\det(T_0)\Z^N.
\]
If $x=T_0y$ with $y$ integral, the adjugate product equals $\det(T_0)y$. Conversely, if that product is coefficientwise divisible by $\det(T_0)$, then
\[
 y=\frac{\operatorname{adj}(T_0)x}{\det(T_0)}\in\Z^N
 \quad\text{and}\quad T_0y=x.
\]
This is division in $\Q^N$ after checking integer divisibility, not division in a residue ring. Here $T_0R_N$ denotes the image of the multiplication map on $R_N$.

\textbf{Why the relevant intersection is $J$.}
For every $y\in R_N$, the identity congruence gives
\[
 T_0y-y\in q(h^2-1)R_N\subseteq\mathcal I.
\]
If $x=T_0y$ with $y\in\mathcal I$, then $x\in T_0R_N$ and $x=y+(T_0y-y)\in\mathcal I$. Hence $J\subseteq T_0R_N\cap\mathcal I$.

Conversely, if $x\in T_0R_N\cap\mathcal I$, choose an integral $y\in R_N$ with $x=T_0y$. Since $x-y\in\mathcal I$ and $x\in\mathcal I$, also $y=x-(x-y)\in\mathcal I$. Thus $x\in T_0\mathcal I=J$, proving
\[
 \boxed{T_0R_N\cap\mathcal I=T_0\mathcal I=J.}
\]
This equality itself needs only the identity congruence and the displayed containment, not invertibility of $T_0$.

\textbf{Combine the two coprime congruences.}
Because $\det(T_0)$ and $q(h^2-1)$ are coprime, the residues of $P_J$ give
\[
 P_Jx\equiv0\pmod{Q_1R_N}
 \iff
 \begin{cases}
 \operatorname{adj}(T_0)x\in\det(T_0)\Z^N,\\
 Px\equiv0\pmod{q(h^2-1)R_N}.
 \end{cases}
\]
The first condition is $x\in T_0R_N$; the second is $x\in\mathcal I$ by Lemma~\ref{lem:baseann}. Their intersection is exactly $J$, as just proved.
\end{proof}

\subsection{Embedding the hard kernel in NTRU form}
The last step is purely elementary.

\begin{lemma}[Threshold-preserving NTRU embedding]
\label{lem:ntruembedding}
Let $b\in\Z_{\ge0}$. For
\[
 J=\{x\in R_N:P_Jx\equiv0\pmod{Q_1R_N}\},
\]
one has
\[
 \lambda_1(J)^2\le b
 \quad\Longleftrightarrow\quad
 \lambda_1\!\left(\Lambda_{(b+1)P_J,(b+1)Q_1}\right)^2\le b.
\]
\end{lemma}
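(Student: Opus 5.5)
Set $Q':=(b+1)Q_1$ and $H':=(b+1)P_J$. The plan is to identify exactly which vectors of $\Lambda_{H',Q'}$ have squared norm at most $b$, and to show they are the vectors $(x,0)$ with $x\in J$, $\norm{x}_2^2\le b$. Both directions of the equivalence follow directly from this identification.

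\textbf{Step 1: the embedding $x\mapsto(x,0)$.} First I would record that $(x,0)\in\Lambda_{H',Q'}$ if and only if $(b+1)P_Jx\equiv0\pmod{(b+1)Q_1R_N}$. Since $R_N$ is a free $\Z$-module, coefficientwise cancellation of the factor $b+1$ makes this equivalent to $P_Jx\equiv0\pmod{Q_1R_N}$, that is, to $x\in J$ by the defining description of $J$. The map preserves the norm. So every nonzero $x\in J$ with $\norm{x}_2^2\le b$ gives a nonzero lattice vector of the same length, which proves the forward implication.

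\textbf{Step 2: short lattice vectors have vanishing second block.} Let $(x,z)\in\Lambda_{H',Q'}$ be nonzero with $\norm{x}_2^2+\norm{z}_2^2\le b$. The congruence $z\equiv(b+1)P_Jx\pmod{(b+1)Q_1}$ holds coefficientwise, and both terms on the right are divisible by $b+1$. Hence every coefficient of $z$ is divisible by $b+1$. Each coefficient satisfies $|z_a|\le\norm{z}_2\le\sqrt b<b+1$ for $b\ge1$; when $b=0$ the norm bound already gives $z=0$. Therefore $z=0$.

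Then $(b+1)P_Jx\equiv0\pmod{(b+1)Q_1}$, so $x\in J$ by the same cancellation as in Step 1. Since the lattice vector is nonzero and $z=0$, we have $x\ne0$, and $\norm{x}_2^2\le b$. This proves the reverse implication.

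No step here is a real obstacle. The only point needing care is Step 2's divisibility argument: I must use the integer bound $|z_a|\le\sqrt b<b+1$ and keep the degenerate case $b=0$ in mind. I would also remark that $Q'\ge2$ and that $H'$ is reduced modulo $Q'$, so the output is a valid instance of the NTRU-form class.
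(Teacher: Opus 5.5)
Your proposal is correct and takes essentially the same route as the paper: divisibility of the second block by $b+1$ forces $z=0$ below the threshold, and cancelling $b+1$ in the torsion-free module $R_N$ identifies the remaining condition with $x\in J$. The only cosmetic difference is that you bound each coordinate by $\sqrt b<b+1$, whereas the paper writes $z=(b+1)c$ and uses $\norm{z}_2^2\ge(b+1)^2$ when $c\ne0$. The inequality $\sqrt b<b+1$ also holds at $b=0$, so your separate degenerate case is unnecessary.
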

\begin{proof}
A pair $(x,z)$ belongs to
$\Lambda_{(b+1)P_J,(b+1)Q_1}$ exactly when
\[
 (b+1)P_Jx-z\in(b+1)Q_1R_N.
\]
Hence every coefficient of $z$ is divisible by $b+1$. Write
$z=(b+1)c$. The defining congruence becomes
\[
 c\equiv P_Jx\pmod{Q_1R_N}.
\]
If $c\ne0$, then
\[
 \norm{(x,(b+1)c)}_2^2\ge(b+1)^2>b.
\]
Thus every vector of squared norm at most $b$ has $c=0$, and then the remaining condition is exactly $x\in J$. The map $x\mapsto(x,0)$ preserves nonzeroness and squared length.
\end{proof}

\subsection{Proof of the application and complexity}
\begin{proof}[Proof of Theorem~\ref{thm:ntru}]
Given the promised source instance, compute $P_J$, $Q_1$, and $b_{\rm cyc}$ as above. Define the final lattice directly by
\begin{equation}
\label{eq:ntruoutput}
 \boxed{\Lambda=\left\{(x,z)\in R_N^2:
 (b_{\rm cyc}+1)P_Jx\equiv z
 \pmod{(b_{\rm cyc}+1)Q_1R_N}\right\}.}
\end{equation}
Output a basis of $\Lambda$ with squared-length threshold $b_{\rm cyc}$. Its basis is~\eqref{eq:ntrubasis} with public polynomial $(b_{\rm cyc}+1)P_J$ and modulus $(b_{\rm cyc}+1)Q_1$; the polynomial coefficients are reduced modulo that modulus only when encoding the output.

Lemmas~\ref{lem:ntrugap}, \ref{lem:transportann}, and~\ref{lem:ntruembedding} give
\[
 \text{source YES}
 \iff\lambda_1(J)^2\le b_{\rm cyc}
 \iff\lambda_1(\Lambda)^2\le b_{\rm cyc}.
\]
In fact the complete threshold ball is preserved:
\[
 \{v\in\Lambda:\norm{v}_2^2\le b_{\rm cyc}\}
 =\{(x,0):x\in J,\ \norm{x}_2^2\le b_{\rm cyc}\}.
\]
Thus the argument controls every output vector below the threshold, not only the vector corresponding to a source witness.

All steps are deterministic polynomial time. The bounds in Sections~\ref{sec:anchor} and~\ref{sec:algorithm}, together with the explicit formulas above, give polynomial bit bounds for $P$, $K$, $M_0$, $T_0$, and $b_{\rm cyc}$. Hadamard's inequality gives polynomial bit bounds for $\det(T_0)$ and the adjugate entries, which are computable by exact elimination. The formulas for $P_J$ and $Q_1$ therefore have polynomial bit complexity, as do the scaled public polynomial and modulus in~\eqref{eq:ntruoutput}. Reduction of the polynomial coefficients modulo that modulus and construction of the block basis~\eqref{eq:ntrubasis} are exact integer operations. The modulus may be numerically large; its binary encoding, not its value, must have polynomial size.

For NP membership, a certificate is a nonzero integer pair $(x,z)\in\Z^{2N}$. Check
\[
 \norm{x}_2^2+\norm{z}_2^2\le b,
 \qquad Hx-z\equiv0\pmod{QR_N}.
\]
A pair satisfying the length bound has polynomial bit length, and integer cyclic convolution and the norm comparison are polynomial-time exact operations. This proves decision NP-completeness. One call to an exact search oracle followed by a squared-norm comparison gives the search consequence.
\end{proof}

\subsection{What the application does not establish}
In the constructed instance, both the public polynomial and the modulus contain the factor $b_{\rm cyc}+1>1$. In particular,
\[
 Q_1\bigl((b_{\rm cyc}+1)P_J\bigr)=0
 \quad\text{modulo }(b_{\rm cyc}+1)Q_1R_N,
\]
while $Q_1$ is nonzero modulo that modulus. Thus the public polynomial is not a unit. This common factor is what forces a short vector's second block to vanish.

The theorem concerns the unrestricted algebraic class. It imposes neither the small-secret and invertibility conditions of cryptographic NTRU key generation~\cite[Sections 1.1--1.2]{NTRU} nor a distribution of public keys. Noninvertibility alone is not a test of cryptographic validity. In particular, this is not an average-case security theorem or hardness of secret-key recovery; reductions for specified search-NTRU variants~\cite{PMS} concern different promises.

\Needspace{8\baselineskip}
\phantomsection

\bigskip
\noindent\textbf{Daqing Wan}\\
Center for Discrete Mathematics and College of Mathematics and Statistics,\\
Chongqing University, Chongqing 401331, China.\\
\texttt{dwan@math.uci.edu}

\end{document}